\documentclass[11pt,a4paper,UKenglish,cleveref, autoref]{article}
\usepackage[OT1]{fontenc}
\usepackage[utf8]{inputenc}
\usepackage{graphicx}
\usepackage[dvipsnames]{xcolor}
\usepackage[margin=1in]{geometry}

\usepackage{amsthm,amssymb,amsmath}
\usepackage[sf,bf,small,raggedright,compact]{titlesec}
\usepackage{hyperref}
\usepackage{tcolorbox}
\definecolor{linkblue}{named}{MidnightBlue}
\hypersetup{colorlinks=true, linkcolor=linkblue,  anchorcolor=linkblue,
        citecolor=linkblue, filecolor=linkblue, menucolor=linkblue,
        urlcolor=linkblue}
\usepackage[capitalize]{cleveref}
\usepackage{paralist}
\usepackage[longnamesfirst,numbers,sort&compress]{natbib}

\usepackage{stmaryrd}

\usepackage{listings, newtxtt}
\newtheorem{thm}{Theorem}
\newtheorem{obs}[thm]{Observation}
\newtheorem{lem}[thm]{Lemma}

\crefname{lem}{Lemma}{Lemmata}
\newtheorem{cor}[thm]{Corollary}
\newtheorem{prop}[thm]{Proposition}
\newtheorem{clm}[thm]{Claim}
\newtheorem*{gapeth}{Gap Exponential Time Hypothesis (Gap-ETH)}
\newtheorem{question}{Question}

\newcommand{\R}{\mathbb{R}}
\newcommand{\var}{\mathsf{Var}}
\newcommand{\cla}{\mathsf{Cla}}
\newcommand{\emb}{\mathsf{Emb}}

\newcommand{\constraints}{\cC}
\newcommand{\poly}{\mathrm{poly}}

\newcommand{\eps}{\varepsilon}
\newcommand{\gcsp}{\textsc{Geometric} CSP}
\newcommand{\pcsp}{\textsc{Projection} CSP}
\newcommand{\MIS}{\textsc{Maximum Independent Set}}
\newcommand{\MIF}{\textsc{Maximum Induced Forest}}
\newcommand{\MIM}{\textsc{Maximum Induced Matching}}
\newcommand{\MPS}{\textsc{Minimum Piercing Set}}

\newcommand{\MDS}{\textsc{Minimum Dominating Set}}

\newcommand{\tcsp}{\textsc{Tiling} CSP}

\newcommand{\mati}{\textsc{Matrix Tiling}}
\newcommand{\bsumset}{\textsc{Binary-SumSet}}
\newcommand{\OPT}{\textsc{opt}}
\newcommand{\lecsp}{$\le$-CSP}

\newcommand{\IR}{\mathbb{R}}

\newcommand{\ve}[1]{{#1}}
\newcommand{\dir}[1]{\vec{{#1}}}
\newcommand{\boldsymbolvec}[1]{\vec{{#1}}}
\newcommand{\wires}{\cW}
\newcommand{\wirebound}{\kappa}
\newcommand{\wire}{W}
\newcommand{\domain}{D}
\newcommand{\loc}{\mathsf{loc}}
\newcommand{\totalconstraints}{\mathfrak{m}}
\newcommand{\gridconst}{c}
\newcommand{\phiv}{\mathsf{va}}
\newcommand{\phic}{\mathsf{cl}}

\newcommand{\etal}{\emph{et~al.}}

\newcounter{ctr}
\loop
 \stepcounter{ctr}
 \expandafter\edef\csname c\Alph{ctr}\endcsname{\noexpand\mathcal{\Alph{ctr}}}
\ifnum\thectr<26
\repeat

\date{}

\title{Shifting is Optimal under Gap-ETH: A Lower Bound Framework for Geometric Approximation Schemes}

\author{Manuel C\'{a}ceres\thanks{Department of Computer Science, Aalto University, Finland and Department of Mathematics and Computer Science, University of Southern Denmark, Denmark. Research supported by the Helsinki Institute for Information Technology HIIT.} \qquad S\'{a}ndor Kisfaludi-Bak\thanks{Department of Computer Science, Aalto University, Espoo, Finland. Research supported by the Research Council of Finland, Grant 363444.} \qquad Saeed Odak\footnotemark[2]}

\begin{document}

\maketitle

\begin{abstract}
    The shifting technique of Hochbaum and Maass [J.ACM'85] produces PTASes with the fastest known running times $n^{O(1/\varepsilon^{d-1})}$ for several $d$-dimensional geometric problems. However, it is only known, due to Marx [FOCS'07], that these algorithms are indeed optimal for dimension $d=2$.
    We show that these running times are optimal under Gap-ETH for every constant dimension. More precisely, we develop a framework that enables us to prove the conditional optimality of the shifting algorithms for several problems on unit ball graphs, such as maximum independent set, maximum induced forest, and others, as well as for the problem of piercing unit balls.
    Our framework is built using the cube wiring theorem of De Berg et al.\ [SICOMP'20] and the reduction steps of Marx and Sidiropoulos [SoCG'14] to create a convenient maximization version of geometric CSP that can be used as a basis for reductions.
\end{abstract}

\section{Introduction}

One of the great successes in the area of geometric algorithms is the existence of polynomial-time approximation schemes (PTASes) in low-dimensional settings. The resulting running times are usually of the form $2^{f(1/\eps)}\poly(n)$ or $n^{f(1/\eps)}$. The former are called \emph{efficient} polynomial time approximation schemes, or EPTAS for short. Once a PTAS or EPTAS running time is achieved for a problem, the research focuses on optimizing the dependence on $n$ and $1/\eps$ until a matching (conditional) lower bound is proven.

The first technique for proving EPTAS and PTAS lower bounds was introduced by Marx~\cite{Marx07}. Under the Exponential Time Hypothesis (ETH)~\cite{ImpagliazzoPZ01}, Marx provided almost matching lower bounds for several problems on planar graphs and on intersection graphs of unit disks: for maximum independent set, the running times $2^{O(1/\eps)}\poly{(n)}$ in planar graphs and $n^{O(1/\eps)}$ in unit disk graphs are essentially optimal. Using the stronger Gap-ETH hypothesis~\cite{Dinur16,ManurangsiR17}, Marx's techniques give \emph{tight} lower bounds: they match up to constant factors in the exponent. However, the technique does not extend to higher-dimensional geometric problems.

In dimension $3$ and above, a lower bound framework was devised by De Berg~\etal~\cite{deBergBKMZ20} for exact algorithms. These results can be extended by using their Cube Wiring Theorem~(see also~\cite{sandorPHDThesis}) and a careful analysis of gaps, to obtain lower bounds of the form $2^{\Omega(1/\eps^{d-1})}\poly(n)$ under Gap-ETH, in particular for the Euclidean traveling salesman problem~\cite{KisfaludiBakNW25}. The same technique has been used recently for proving EPTAS lower bounds for clustering problems~\cite{CohenAddadCSS2026}.

This higher-dimensional lower bound technique only produces lower bounds of the form $2^{\Omega(1/\eps^{d-1})}\poly(n)$. Currently, there is no technique that yields lower bounds of the form $n^{\Omega(1/\eps^{d-1})}$. On the other hand, the shifting technique of Hochbaum and Maass~\cite{HochbaumM85} is a common source of such running times. Shifting is a simple meta-algorithm that can be sketched as follows. Consider a randomly shifted grid consisting of hypercube cells of side length $O(1/\eps)$. Obtain a naive solution near the boundary of the cells. Due to the random shift, any given constant size ball has a probability of at most $O_d(\eps)$ of being intersected by a cell boundary, thus in expectation the naive solution introduces an error of at most $O_d(\eps \cdot \OPT)$. Additionally, the remaining instances in the interiors of the cells are pairwise disjoint and do not interfere, thus they can be solved separately. Each such instance has a volume of $O(1/\eps^d)$ and has geometric separators of size $O(1/\eps^{d-1})$, so one can solve all cell instances exactly in $n^{O(1/\eps^{d-1})}$ time using techniques from parameterized exact algorithms~\cite{CyganFKLMPPS15}. These algorithms can be derandomized with a polynomial overhead by simply checking all shifts. Problems that have such PTASes include maximum independent set~\cite{Chan03} and minimum dominating set~\cite{DeDCN13} in intersection graphs of unit balls. This raises a natural question.

\begin{question}\label{q:MIS}
Is the $n^{O(1/\eps^{d-1})}$-time shifting algorithm for \MIS{} in unit ball graphs optimal for every fixed $d\geq 3$?
\end{question}

We note that existing parameterized lower bounds rule out an EPTAS for \MIS. Under ETH, the problem has no $f(k)n^{o(k^{1-1/d})}$-time algorithm for any computable $f$~\cite{Marx06MDS,MarxS14} parameterized by the solution size $k$. As such, if an EPTAS exists, setting $\eps=1/(2k)$ derives an $f(1/\eps)\poly(n)=f(k)\poly(n)$-time algorithm, contradicting ETH. However, it is unclear how to use this result to obtain a tight lower bound.

There are two natural approaches to tackle \Cref{q:MIS}. One is to attempt to generalize Marx's technique~\cite{Marx07}. This technique~\cite{Marx07}, based on the so-called \mati{} problem. \mati{} is defined as follows. An instance of \mati{} consists of integers $k$, $D$, and $k^2$ nonempty sets $C_{i,j} \subseteq [D] \times [D]$ for $1 \leq i, j \leq k$. The task is to select in each cell $(i,j)$ of a $k \times k$ matrix some element $(a,b) \in C_{i,j}$, such that the first coordinates of the selected elements are equal in horizontally neighboring cells and the second coordinates are equal in vertical neighbors. The objective is to maximize the number of such equalities.\footnote{We note that the original definition of \mati{} in~\cite{Marx07} is slightly different, but equivalent to ours in terms of approximability.} See \Cref{fig:matrix.tilling}. Unfortunately, the natural extension of matrix tiling to $3$ dimensions, where we require neighboring cells to agree on the corresponding coordinate, seems to behave differently: in order to find a $(1-\eps)$-approximate solution of a satisfiable instance, we can follow a simple shifting strategy. We ignore cells that intersect a randomly shifted grid of side length $1/\eps$ (these will contribute to the error). The remaining cells inside each $1/\eps\times 1/\eps\times 1/\eps$ portion can be solved exactly in only $n^{O(1/\eps)}$ time: indeed, one can simply try all choices of values on the $O(1/\eps)$ axis-aligned slabs of cells that share one coordinate. As a result, the complexity of this problem in dimension $d$ is likely different from $n^{\Theta(1/\eps^{d-1})}$, and it is not suitable for our purposes.

\begin{figure}
        \centering
        \includegraphics[width=0.6\linewidth,page=1]{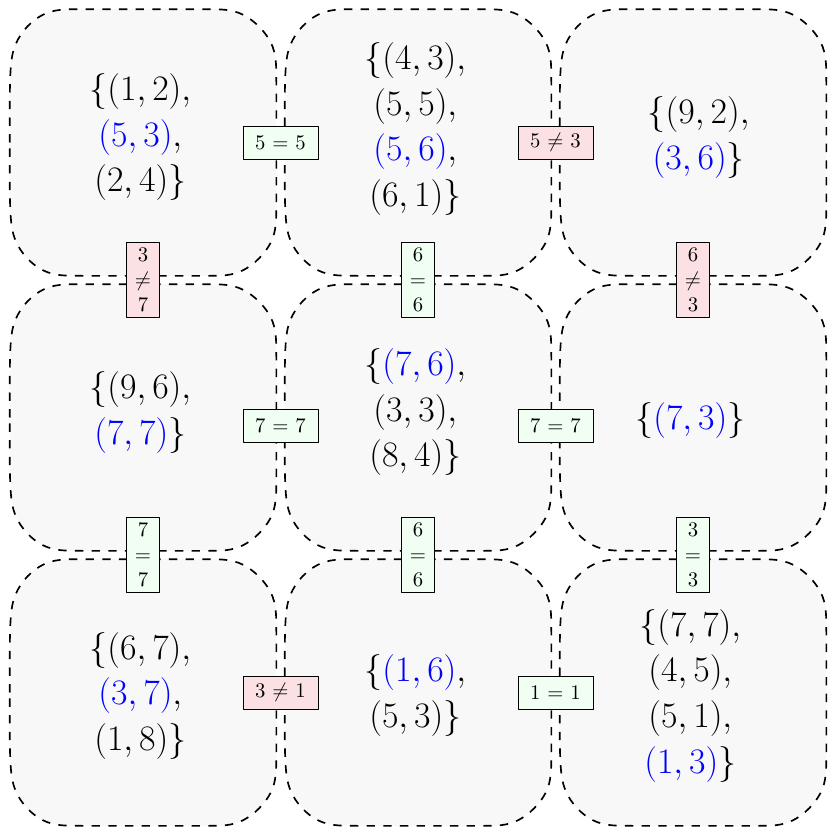}
        \caption{An instance of \mati{}: the cell assignments are shown in blue; the equality and conflicting assignment on the corresponding coordinates of adjacent cells are shown in green and red respectively.}
        \label{fig:matrix.tilling}
    \end{figure}

The second approach requires taking a step back. \textsc{Matrix Tiling} is the optimization variant of the \textsc{Grid Tiling} problem from the exact parameterized algorithms literature, which is convenient for proving lower bounds of the form $n^{\Omega(\sqrt{k})}$ for geometric problems in the plane~\cite{CyganFKLMPPS15}. A higher-dimensional generalization of this exact parameterized lower bound framework was proposed by Marx and Sidiropoulos~\cite{MarxS14}; the optimization variants of these problems could serve as the starting point for the PTAS lower bound framework. Marx and Sidiropoulos defined the following variants of constraint satisfaction (CSP) problems.

\subparagraph*{Constraint Satisfaction Problems.}
An instance $I$ of a \emph{binary constraint satisfaction problem} (CSP) is a triple $(V, D, C)$, where:
\begin{itemize}
    \item $V$ is a finite set of \textit{variables},
    \item $D$ is a \textit{domain} of values, and
    \item $\constraints$ is a set of \textit{constraints}, where it contains a unary constraint $C_\ve{v} \subseteq D$ for each $\ve{v} \in V$ and some binary constraints $C_{\ve{u},\ve{v}} \subseteq D \times D$ for distinct $\ve{u}, \ve{v} \in V$.
\end{itemize}
An \emph{assignment} of $I$ is a function $f : V \to \domain$. A unary constraint $C_{\ve{a}}$ is \emph{satisfied} by $f$ if $f(\ve{a}) \in C_{\ve{a}}$, and a binary constraint $C_{\ve{a},\ve{b}}$ is satisfied if $(f(\ve{a}), f(\ve{b})) \in C_{\ve{a},\ve{b}}$. An assignment is a \emph{solution} (or \emph{satisfying assignment}) if it satisfies every constraint in~$\constraints$. The \emph{primal graph} of a binary CSP instance $I = (V, D, C)$ is the undirected simple graph $G = (V, E)$ where distinct vertices $u, v \in V$ satisfy $\{u, v\} \in E$ if and only if there exists a binary constraint $C_{u,v} \in \constraints$. We call a pair of distinct variables $\ve{u}, \ve{v} \in V$ \emph{adjacent} if there is a binary constraint $C_{\ve{v},\ve{u}} \in \constraints$.

A \emph{$d$-dimensional geometric CSP} (\gcsp) instance $I = (V, D, C)$ is a binary CSP instance whose primal graph is isomorphic to an induced subgraph of the $d$-dimensional grid hypercube $G_k^d$ for some positive integer length $k$, via a subgraph isomorphism $\loc : V \to V(G_k^d)$. Note that the constraint set $\constraints$ includes a unary constraint $C_{\ve{v}} \subseteq D$ for each $\ve{v} \in V$ and a binary constraint $C_{\ve{u},\ve{v}} \subseteq D \times D$ for each edge $\{\ve{u},\ve{v}\}$ in the primal graph.

A \emph{$d$-dimensional tiling CSP} (\tcsp) is a \gcsp{} instance where the domain is $D = [\Lambda]^d$ for some positive integer $\Lambda$. Similarly to \gcsp, the constraint set $\constraints$ includes a unary constraint $C_{\ve{v}} \subseteq D$ for each $\ve{v} \in V$. However, the binary constraints are defined \emph{implicitly}: for every pair of adjacent variables $\ve{a}, \ve{b} \in V$ with $\loc(\ve{b}) = \loc(\ve{a}) + \dir{e}_i$ ($\dir{e}_i$ is the standard basis vector), the constraint $C_{\ve{a},\ve{b}} \subseteq D \times D$ contains exactly the pairs $((x_1,\dots,x_d),(y_1,\dots,y_d)) \in [\Lambda]^d \times [\Lambda]^d$ satisfying $x_i = y_i$. A \emph{$d$-dimensional $\le$-CSP} (\lecsp) is identical to \tcsp, except the binary constraints are of the form $x_i\leq y_i$ instead of $x_i=y_i$. A key difference from \mati{} is that the primal graph in \tcsp{} (or \gcsp{}) can be any induced subgraph of the grid.

In the optimization versions of all these CSPs (denoted with the prefix ``\textsc{Max-}''), the goal is to find an assignment that maximizes the number of satisfied constraints, or equivalently, minimizes the number of violated constraints.
Marx and Sidiropoulos~\cite{MarxS14} show that \tcsp{}, \lecsp{} and \textsc{Independent Set} in ball graphs has no $f(k)n^{O(k^{1-1/d})}$-time algorithm for any computable $f$ under ETH. It is therefore natural to ask if the maximization variants of these problems (maximizing the number of satisfied constraints) is a good basis for PTAS lower~bounds.

\subsection{Our contribution}

We provide a framework that derives tight conditional lower bounds for geometric problems solved with the shifting technique in any constant dimension $d\geq 3$. More precisely, we prove that the maximization variants of \tcsp{} and \lecsp{} have the desired lower bounds.

\begin{thm}[Main theorem]
    \label{thm:main-gcsp}
    Assuming Gap-ETH, for every integer $d \ge 2$ there exists a constant $\gamma > 0$ such that there is no algorithm with running time $N^{\gamma / \eps^{d-1}}$ that, given any $\eps>0$ and instance $I = (V,D,\constraints)$ of $d$-dimensional \tcsp{} or \lecsp{} of input size $N = \poly(|\constraints|,|D|)$, can distinguish between the following two cases:
    \begin{enumerate}
        \item[(i)] there exists an assignment satisfying all constraints, and
        \item[(ii)] every assignment violates at least an $\eps$-fraction of the constraints.
    \end{enumerate}
\end{thm}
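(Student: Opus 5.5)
We reduce from a gap version of a sparse CSP. By Gap-ETH and the standard sparsification plus the Dinur / Manurangsi–Raghavendra result, there is a constant $\delta>0$ such that, for every $k$, no $f(k)\,n^{o(k)}$-time algorithm can distinguish satisfiable instances of binary \textsc{Max}-CSP with $k$ variables, $O(k)$ constraints, bounded degree and domain size $n$ from instances in which every assignment violates a $\delta$-fraction of the constraints. Given $\eps$, we choose $k=\Theta(1/\eps^{d-1})$. Our target is a reduction that produces a $d$-dimensional \tcsp{} (and then a \lecsp{}) instance with the following properties. It is supported on a grid of side $L=\Theta(k^{1/(d-1)})=\Theta(1/\eps)$ and has $O(L^d)$ constraints. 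Its domain has size $\poly(n)$. Satisfiability is preserved exactly. A $\delta$-fraction of violated constraints in the source forces an $\Omega(\delta/L)=\Omega(\eps)$-fraction of violated constraints in the target. With such a reduction, an $N^{\gamma/\eps^{d-1}}$ algorithm gives an $n^{O(\gamma k)}$ algorithm for the gap CSP, which is a contradiction for $\gamma$ small enough.

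\textbf{Step 1: embedding.} The primal graph $G$ of the source instance has bounded degree and $O(k)$ edges. We apply the cube wiring theorem of De Berg~\etal{} to embed $G$ into the grid $[L]^d$ with $L=O(k^{1/(d-1)})$. Each vertex $v$ is sent to a small set of grid cells. Each edge is routed as a wire, a path in the grid. Wires are pairwise vertex-disjoint away from their endpoints, and each has length $O(L)$. Replacing each grid point by a constant-size gadget ensures that the occupied cells induce a subgraph of a grid of side $O(L)$. In total there are $O(k\cdot L)=O(L^d)$ cells.

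\textbf{Step 2: encoding as \gcsp{}.} Every cell on the wire of edge $uv$ gets domain values that encode the pair $(f(u),f(v))$, and adjacent cells along a wire must agree. At the endpoint cells the original binary constraint $C_{u,v}$ is checked, and the cells of vertex $u$ must agree with the $u$-component of every incident wire. This is a \gcsp{} with domain $n^{O(1)}$. Next, following Marx and Sidiropoulos, we transform the \gcsp{} into a \tcsp{}. We refine the grid by a constant factor and encode each general binary constraint through intermediate cells whose coordinates carry the values, using equality of single coordinates. From the \tcsp{} we obtain a \lecsp{} by the standard doubling trick, replacing $x_i=y_i$ by a pair of constraints $x_i\le y_i$ and $y_i\le x_i$ along a doubled path. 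Each step blows up sizes and constraint counts only by constant factors and preserves satisfiability exactly in both directions.

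\textbf{Step 3: soundness (main obstacle).} Given an assignment to the target that violates fewer than an $\eps'$-fraction of constraints, we must decode an assignment to the source that violates few constraints. We would define $f(v)$ as the value held at the canonical cell of $v$, and call a source edge $uv$ \emph{good} if every constraint along its wire and along the attached vertex gadgets is satisfied. For a good edge the agreement propagates along the wire, so $C_{u,v}$ is satisfied by $f$. Each bad edge accounts for at least one violated target constraint, and the charging is injective because the wires are disjoint. Hence the number of bad edges is at most $\eps' \cdot O(L^d)=\eps' \cdot O(kL)$. To force this below $\delta k$ we need $\eps'=\Theta(\delta/L)=\Theta(\eps)$, which matches the choice $L=\Theta(1/\eps)$.

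The delicate points in Step 3 are twofold. First, the gadgets of high-degree vertices and the Marx–Sidiropoulos constraint gadgets must have constant size and constant overlap, so that a single violated constraint invalidates only $O(1)$ source edges. Second, the cube wiring must achieve the optimal side length $L=O(k^{1/(d-1)})$, not merely $O(k^{1/d})$ volume. I would first check that the wiring theorem, applied with bounded degree, gives exactly these parameters. Then I would track constants through the Marx–Sidiropoulos gadgets. Finally, I would rescale $\eps$ by a constant and choose $\gamma$ accordingly.
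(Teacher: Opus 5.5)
Your Steps 1--3 are in the right spirit: cube wiring, local encoding, the Marx--Sidiropoulos chain, and a charging argument giving $\eps=\Theta(\delta/L)$. The genuine gap is the source problem you start from. A gap binary CSP with $k$ variables, $O(k)$ constraints, a bounded-degree primal graph and a \emph{tight} $n^{\Omega(k)}$ lower bound is not a known consequence of Gap-ETH.

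Sparsification plus Dinur / Manurangsi--Raghavendra give gap 3-SAT with $O(n)$ clauses. Grouping its variables into $k$ blocks of size $n/k$ gives domain $2^{O(n/k)}$ and the $n^{\Omega(k)}$ bound. However, you cannot choose the formula's structure, so the resulting primal graph is essentially complete, with $\Theta(k^2)$ constraints. Making it sparse requires routing the formula through a bounded-degree, expander-like graph. The known way to do that (Marx's ``Can you beat treewidth?'' embedding) loses a $\log k$ factor and gives only $n^{\Omega(k/\log k)}$. Whether this $\log k$ can be removed is a well-known open problem even for the exact problem under ETH.

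Falling back on the complete-graph version does not help. Wiring $K_k$ into the grid forces side $L$ with $L^{d-1}\gtrsim k^2$, since its bisection width is $\Theta(k^2)$. Your charging argument then yields only $N^{\Omega(1/\eps^{(d-1)/2})}$. So the parameter count in Step~3 is fine, but the hypothesis feeding it is unavailable.

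The paper sidesteps this by never passing through a generic sparse CSP. It applies the Strong Cube Wiring Theorem directly to the bounded-degree incidence graph of a gap $(3,3)$-CNF formula with $n$ variables, on a grid of side $n_1=\Theta(n^{1/(d-1)})$. Only afterwards does it coarsen the grid into cubes of side $s$. Each cube becomes one \gcsp{} variable whose domain $\{0,1\}^{O(s^{d-1})}$ records the bits on the wires crossing it.

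The information a cube carries is proportional to its surface, and this is exactly what matches the exponent $1/\eps^{d-1}$. With $|\constraints|=\Theta((n_1/s)^d)$ and each violated constraint corrupting $O(s^{d-1})$ clauses, one can take $\eps=\Theta(\delta s/n^{1/(d-1)})$. For $s^{d-1}\ge\log n$ the running time $N^{\gamma/\eps^{d-1}}$ becomes $2^{O(\gamma n)}$, contradicting Gap-ETH. In short, a tight lower bound is only needed for grid-like CSPs, and it comes for free from embedding the whole formula and then coarsening; none is needed for expander-like sparse CSPs.

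A smaller issue: your ``doubling trick'' from \tcsp{} to \lecsp{} is not immediate. In \lecsp{} the direction of each inequality is fixed by the grid direction of the edge, so you cannot simply add the reverse inequality on the same pair. The paper instead derives both \tcsp{} and \lecsp{} from \pcsp{} via Proposition~2.19 of Marx and Sidiropoulos.
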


We obtain the above theorem using a chain of reductions from the \textsc{Max 3-SAT} problem. The first step of the reduction chain in~\cite{MarxS14} involves a complicated embedding of the primal graph of the $3$-SAT formula into a grid, where maintaining the gap for partially satisfied formulas is difficult. Instead, we use a much simpler embedding via the Cube Wiring Theorem of De Berg~\etal~\cite{deBergBKMZ20}. We map variables and clauses of a formula $\phi$ to the $d$-dimensional grid graph and use a strong form of the Cube Wiring Theorem to find internally vertex-disjoint paths (wires) mimicking the incidence graph of $\phi$. Intuitively, these wires will ``carry'' the corresponding boolean value of an assignment of $\phi$. To ensure this behavior, we partition the grid graph into small cubes and assign to each cube a CSP variable whose value encodes the boolean values of the wires passing through that cube. We then use unary constraints to enforce consistency of literals and satisfaction of clauses, and binary constraints to enforce consistency of wire values across adjacent cubes. This reduction has the desired locality structure to ensure that a constant fraction of error in $\phi$ will propagate as a proportional fraction of error in the corresponding CSP instance.

For the rest of the steps in the reduction chain, we proceed analogously to~\cite{MarxS14}, but we need to create stronger gap-preserving reduction steps. This needs some work, but can be completed after observing important locality properties of the constructions at each step.

To prove the usefulness of our main theorem, we show that the shifting-based algorithms for a wide range of problems are Gap-ETH-tight. In particular, we consider several problems in intersection graphs of unit balls. Given a set of balls in $d$-dimensional space, we are looking for the maximum independent set (a maximum set of pairwise disjoint balls), a minimum dominating set (a minimum subset of balls where the union of their closed graph neighborhoods is all balls), a maximum induced matching (a maximum set of balls whose induced subgraph is a matching), or a maximum induced forest (a maximum set of balls whose induced subgraph has no cycle). We also consider the piercing set problem: given a set of balls, find the minimum set of points such that each ball contains a point. The following theorem demonstrates several applications of our framework, and gives an affirmative answer to Question~\ref{q:MIS} assuming Gap-ETH.

\begin{thm}[Applications] \label{thm:application}
    For any constant $d \ge 2$, \MIS{}, \MDS{}, \MIF{}, \MIM{}, \MPS{} have PTASes with running time $n^{O(1/\eps^{d-1})}$ on unit ball graphs in $\R^d$. Under Gap-ETH, for any constant $d \ge 2$ there exists some $\gamma>0$ such that none of these problems have an $n^{\gamma/\eps^{d-1}}$-time PTAS.
\end{thm}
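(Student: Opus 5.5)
The statement has two halves, and I would treat them separately. Both rest on \Cref{thm:main-gcsp}.

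\emph{Upper bounds.} For the algorithmic part I would instantiate the shifting meta-algorithm described in the introduction. Fix a grid of hypercube cells of side $L=\Theta(1/\eps)$ and try all $O(1/\eps)^d$ discrete shifts. Balls meeting cell boundaries are handled problem-specifically: for \MIS, \MIF{} and \MIM{} they are discarded, and for \MDS{} and \MPS{} they are covered by a cheap local solution, e.g.\ a constant number of points per boundary ball. By averaging over shifts, the error is $O_d(\eps\cdot\OPT)$. This uses that $\OPT$ is within a constant factor of the number of balls in a maximal packing, which holds after standard preprocessing such as the removal of duplicate-heavy regions.

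Inside each cell, the unit ball graph has a balanced separator of $O(L^{d-1})$ \emph{cliques}, namely balls meeting a hyperplane slab. A solution intersects each clique in $O(1)$ balls, or in $O(1)$ points for piercing. The number of guesses at the separator is therefore $n^{O(L^{d-1})}$. Recursing (or using dynamic programming over a slab sweep) gives $n^{O(1/\eps^{d-1})}$ per cell, and hence in total.

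\emph{Lower bounds.} For the hardness part I would reduce from gap \tcsp{} (or \lecsp) via \Cref{thm:main-gcsp}, following the geometric gadgets of Marx and Sidiropoulos~\cite{MarxS14}, but checking gap preservation. For \MIS{} I would start from \lecsp. Each variable $v$ at grid location $\loc(v)$ becomes a small cluster of unit balls, one ball per value in $C_v$. The ball for value $x\in[\Lambda]^d$ is centred at $\loc(v)\cdot s+\delta x$, where the scale $s$ is slightly above $2$ and the perturbation $\delta$ is small, so that each cluster is a clique. The balls of adjacent clusters are disjoint iff $x_i\le y_i$. Then
\[
\OPT_{\mathrm{MIS}}=|V| \iff \text{the instance is satisfiable}.
\]
In the gap case, an independent set of size $|V|-t$ yields an assignment violating at most $O(d\,t)$ constraints: assign the missing variables arbitrarily, and note that each such variable violates at most $2d$ constraints. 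The number of constraints is $\Theta(|V|)$ because the degree is at most $2d$ and we can assume no isolated variables. Hence an $\eps$-fraction gap becomes a $\Theta(\eps)$-fraction gap in $\OPT$.

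A $(1-\eps')$-approximation running in time $n^{\gamma/\eps'^{d-1}}$ with $\eps'=c\eps$ would therefore contradict \Cref{thm:main-gcsp}, after adjusting $\gamma$ by a factor $c^{d-1}$. The instance size $n=\poly(N)$ only changes $\gamma$ by a constant.

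The other problems follow from the same construction by gadget replacement:
\begin{itemize}
\item \MIM: replace each ball by a touching pair.
\item \MIF: replace each ball by a small clique of three, so that picking two from one cluster creates no advantage.
\item \MPS: use the duality between points in the intersection of a cluster's region and value choices.
\item \MDS: use the standard variable-cluster plus ``forcing'' gadgets of~\cite{MarxS14}.
\end{itemize}
In each case I would verify the same local charging: every deviation from a canonical solution costs $O(1)$ and corresponds to $O(1)$ violated constraints.

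\emph{Main obstacle.} The hard part is this local charging argument for the problems with less rigid structure, namely \MDS{} and \MIF. There a solution may pick several balls in a cluster, or exploit cycles across clusters, and I must show that any solution within $\eps|V|$ of optimum can be locally rounded to one canonical choice per cluster while losing only $O(\eps|V|)$. I would first try to make the gadget weights (multiplicities of duplicated balls) large enough that non-canonical configurations are always strictly dominated, so that the rounding is forced cluster by cluster.
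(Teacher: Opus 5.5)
Your \MIS{} reduction and the shifting upper bounds follow the paper's route. The genuine gap is in the lower bounds for \MDS{} and \MPS{}: they are not established, and they cannot be obtained from your clique-cluster construction. There, every ball of a cluster contains the cluster centre. So one ball per cluster dominates that cluster and one point per cluster pierces it, whatever values are involved. A dominating or piercing set never has to commit to one value per cluster, so nothing forces its size up when the \lecsp{} instance is far from satisfiable. The ``duality'' between independent set and piercing is only LP duality and gives no reduction.

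What is needed are gadgets with three properties. The \emph{position} of a selected ball or point encodes a value. Auxiliary objects can be covered within budget only if neighbouring choices are consistent. Every inconsistent gadget provably costs an extra object. The paper obtains this for \MDS{} by lifting Marx's alternating $X$/$Y$-block domination gadget~\cite{Marx06MDS} to $d$ dimensions along a Hamiltonian cycle of the cross-polytope on the box boundary, with a canonical/non-canonical charging. For \MPS{} it rewrites \lecsp{} as \bsumset{} and uses the $k$-center construction of Blank et al.~\cite{geert}, whose key property is that a cover with $K+t$ balls yields an assignment violating $O(t)$ constraints. Nothing in your proposal replaces these.

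For \MIF{} and \MIM{} your blow-ups are in fact correct, and simpler than the paper's forest/private-ball gadgets. However, the repair you suggest for \MIF{} cannot work. An induced forest contains at most two vertices of any clique whatever the multiplicities, so taking two distinct values in a cluster always ties with doubling one value. The obstacle instead disappears through a global argument. Let $G$ be the \MIS{} graph and let $G^{(3)}$ replace each ball by three identical copies. An induced forest then consists of balls taken twice (set $D$) and balls taken once (set $T$). $D$ is independent with no neighbour in $T$, since otherwise a triangle appears. $G[T]$ is a forest, hence bipartite, so it has an independent set $I$ with $|I|\ge|T|/2$, and $D\cup I$ is independent in $G$. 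Thus $\mathrm{MIF}(G^{(3)})=2\alpha(G)$ exactly. Similarly, with two \emph{intersecting} twin copies, $\mathrm{MIM}(G^{(2)})=\alpha(G)$; merely tangent open balls would be non-adjacent. Both gaps therefore transfer verbatim from \MIS{}. The copies must be identical, or perturbed only after a slight rescaling, because the \MIS{} construction relies on exact tangencies.

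Two smaller points in the upper bounds:
\begin{itemize}
\item The extra cost for boundary balls in \MDS{}/\MPS{} must be charged to a fixed maximal packing near the grid hyperplanes, not to the boundary balls themselves. Alternatively, as in the paper, assign each ball to the cell containing its centre and note that $P^\star\cap C^{+}$ is feasible for that cell.
\item For \MIF{}, guessing the solution on the separator is not enough. You must also guess the connectivity of the separator balls on each side to exclude cycles through the separator.
\end{itemize}
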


The upper bounds of this theorem are either known or follow the standard shifting technique~\cite{HochbaumM85}, paired with divide-and-conquer algorithms in the hypercubes of side length $O(1/\eps)$. For the lower bounds, we use our framework: we reduce either from \textsc{Max-}\tcsp{} or \textsc{Max-}\lecsp{}. In case of \MIS{} and \MDS{} we are able to use existing gadgetry~\cite{Marx06MDS,MarxS14} without changes. For \MPS{} we rely on an intricate gadgetry introduced recently for the $k$-center problem by Blank~\etal~\cite{geert}. For \MIF{} and \MIM{} we introduce new gadgetry that may be of independent interest.

\subsection{Further related work}

The connection of approximation lower bounds and parameterized complexity has been studied before~\cite{GargK07,KulikS10}. More recently, Gap-ETH-based lower bounds have been an important tool in parameterized approximation algorithms, see~\cite{BonnetEKP15, BhattacharyyaGKM18,ChalermsookCKLM20} and the survey~\cite{FeldmannSLM20}.

Shifting as an algorithmic paradigm has several variants. In the setting of planar graphs, it was first introduced by Baker~\cite{Baker94}. A similar source of approximation schemes is the bidimensionality theory of Demaine and Hajiaghayi~\cite{DemaineH05}. In the Euclidean setting, sometimes it is important to perform shifting on a hierarchical decomposition rather than a fixed grid, leading to approximation schemes that tackle objects/distances at different scales simultaneously. Typically one shifts a so-called quadtree~\cite{Arora98, Chan03}. Our lower bound framework does provide lower bounds for some quadtree-based shifting algorithms, in particular we obtain a matching lower bound for Chan's algorithm~\cite{Chan03} for independent set in intersection graphs of balls (of arbitrary radius) in Euclidean space.

\begin{figure}
    \centering
    \includegraphics[width=0.95\linewidth]{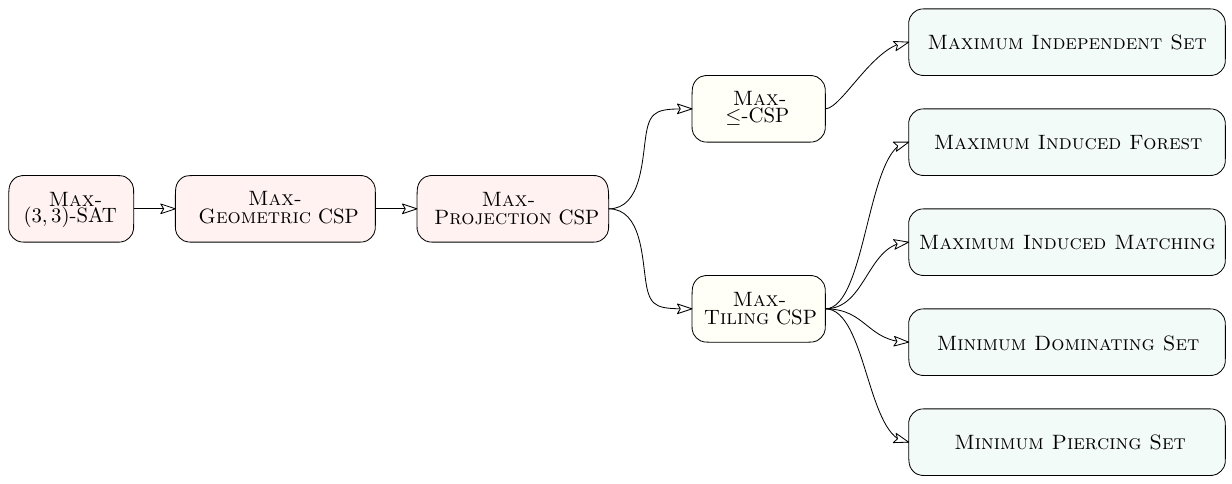}
    \caption{Reduction chain used in the proof of \Cref{thm:main-gcsp} (yellow boxes). The green boxes contain the problems for which we prove hardness in \Cref{thm:application}.}
    \label{fig:reduction-chain}
\end{figure}

\section{Preliminaries}

For a positive integer $n$, we write $[n] = \{1, 2, \dots, n\}$ and $[n]_\circ = \{0,1,\dots,n-1\}$. For a point $x \in \IR^d$, we denote its $i$-th coordinate by $x(i)$. The standard basis vectors of $\IR^d$ are $\dir{e}_1, \dir{e}_2, \dots, \dir{e}_d$. We write $\|x\|_1 = \sum_{i=1}^d |x(i)|$ for the $\ell_1$-norm and $\|x\| = \sqrt{\sum_{i=1}^d x(i)^2}$ for the Euclidean norm.

For integers $k, d > 0$, the \emph{$d$-dimensional grid graph} of side length $k$, denoted $G_k^d$, is the graph with vertex set $[k]^d$ in which two vertices $\ve{a}, \ve{b} \in V(G_k^d)$ are adjacent if and only if $\|\ve{a} - \ve{b}\|_1 = 1$. Throughout, a \emph{unit ball} in $\IR^d$ is an open ball of diameter~$1$. Two unit balls \emph{intersect} if their interiors overlap; equivalently, if their centers are at Euclidean distance strictly less than~$1$. A \emph{unit box} in $\IR^d$ is an open axis-aligned hypercube with side length~$1$. The \emph{intersection graph} of a collection $\cB$ of geometric objects has one vertex per object, with two vertices adjacent whenever the corresponding objects intersect.

An \emph{independent set} in a graph $G$ is a subset of vertices no two of which are adjacent. The \MIS{} problem asks for an independent set of maximum cardinality in a given graph.

An \emph{induced forest} of a graph $G$ is a subset $S \subseteq V(G)$ such that the subgraph of $G$ induced by $S$ is acyclic. The \MIF{} problem asks for an induced forest of maximum cardinality in a given graph.

An \emph{induced matching} in a graph $G$ is a set $M$ of edges such that the subgraph of $G$ induced by the endpoints of~$M$ is a $1$-regular graph (a perfect matching on those vertices). Equivalently, no two edges in $M$ share an endpoint and no edge of $G$ connects endpoints of distinct edges in~$M$. The \MIM{} problem asks for an induced matching of maximum cardinality in a given graph.

A \emph{dominating set} of a graph $G = (V, E)$ is a subset $S \subseteq V$ such that every vertex $v \in V \setminus S$ has at least one neighbor in~$S$. The \MDS{} problem asks for a dominating set of minimum cardinality.

A \emph{piercing set} for a collection $\cB$ of geometric objects is a point set $P \subseteq \IR^d$ such that every object in $\cB$ contains at least one point of $P$. The \MPS{} problem asks for a piercing set of minimum cardinality. A collection of geometric objects $\cB$ is a \emph{cover} of a point set $P$ if each point of $P$ is contained in at least one object of $\cB$. When $\cB$ is a collection of unit balls or unit boxes, one can use the following equivalent \emph{covering} formulation of \MPS{}: given a finite point set $P \subseteq \IR^d$, the goal is to find the minimum number of unit balls or unit boxes whose union covers~$P$.

We study these combinatorial optimization problems on intersection graphs. In geometric settings, the input is a collection $\cB$ of geometric objects (such as unit balls or unit boxes in $\IR^d$), and the goal is to solve each of the aforementioned problems in the intersection graph of $\cB$.

A \emph{polynomial-time approximation scheme} (PTAS) for a maximization (resp.\ minimization) problem is a family of algorithms parameterized by $\eps > 0$ that, for every fixed $\eps > 0$, runs in time polynomial in the input size and returns a solution of value at least $(1-\eps)\OPT$ (resp.\ at most $(1+\eps)\OPT$).

Our hardness results rely on the following hypothesis.

\begin{gapeth}[Dinur~\cite{Dinur16}, Manurangsi and Raghavendra~\cite{ManurangsiR17}]
    There exist constants $\delta, \gamma > 0$ such that there is no $2^{\gamma m}$-time algorithm that, given a $3$-CNF formula $\phi$ on $m$ clauses, can distinguish between the cases where 
    \begin{enumerate}
        \item[(i)] $\phi$ is satisfiable, or
        \item[(ii)] all assignments violate at least $\delta m$ clauses.
    \end{enumerate}
\end{gapeth}

A Boolean formula $\phi$ in CNF is an $(a,b)$-CNF formula if each variable occurs at most $a$ times and each clause has at most $b$ literals. We consider the \textsc{Max-}$(3,3)$-SAT problem, in which the objective is to maximize the number of satisfied clauses in a $(3,3)$-CNF formula. Note that for $(3,3)$-CNF formulas, the number of variables and the number of clauses are within constant factors of each other. Papadimitriou~\cite{Papadimitriou} gives an L-reduction from \textsc{Max-}$3$-SAT to \textsc{Max-}$(3,3)$-SAT, which implies the following:

\begin{lem}[Corollary of~\cite{Papadimitriou}] \label{lem:gap-eth}
There exist constants $\delta, \gamma > 0$ such that, unless Gap-ETH fails, there is no algorithm with running time $2^{\gamma n}$ that, given a $(3,3)$-CNF formula $\phi$ with $n$ variables and $m$ clauses, can distinguish between the following two cases:
    \begin{enumerate}
        \item[(i)] $\phi$ is satisfiable; and
        \item[(ii)] for every assignment at least $\delta m$ clauses of $\phi$ are violated.
    \end{enumerate}
\end{lem}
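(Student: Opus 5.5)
We compose the Gap-ETH statement with Papadimitriou's L-reduction from \textsc{Max-}$3$-SAT to \textsc{Max-}$(3,3)$-SAT. Suppose, for contradiction, that for every $\delta',\gamma'>0$ there is a $2^{\gamma' n}$-time algorithm separating satisfiable $(3,3)$-CNF formulas from those in which every assignment violates at least $\delta' m'$ of the $m'$ clauses. We will turn such an algorithm into a $2^{\gamma m}$-time algorithm for the gap problem in the Gap-ETH statement, with the constants $\delta,\gamma$ given there.

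\textbf{Step 1: the reduction and completeness.} Papadimitriou's reduction maps a $3$-CNF formula $\phi$ with $m$ clauses, in polynomial time, to a $(3,3)$-CNF formula $\phi'$. Each variable occurring $t$ times is replaced by $t$ fresh copies. These copies are linked by a constant-degree expander-based gadget of implication clauses, so that $\phi'$ has $m'=O(m)$ clauses and $n'=O(m)$ variables. Satisfiability is preserved in both directions. In particular, if $\phi$ is satisfiable, copying each variable's value to all of its copies gives a satisfying assignment of $\phi'$.

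\textbf{Step 2: soundness.} This is the main step. By the L-reduction property, with constants $\alpha,\beta$, any assignment $\sigma'$ of $\phi'$ can be mapped back to an assignment $\sigma$ of $\phi$. The number of clauses of $\phi$ violated by $\sigma$ is at most $\beta$ times the number of clauses of $\phi'$ violated by $\sigma'$. Concretely, the expander property lets us take a majority vote over the copies of each variable. Any assignment in which copies disagree on $s$ positions violates $\Omega(s)$ consistency clauses, so fixing the inconsistencies costs only a constant factor. Hence, if every assignment of $\phi$ violates at least $\delta m$ clauses, every assignment of $\phi'$ violates at least $\delta m/\beta \ge \delta'' m'$ clauses. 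Here $\delta''=\delta/(\beta c)$, where $m'\le cm$.

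\textbf{Step 3: running time.} Apply the assumed algorithm with $\delta'=\delta''$ and $\gamma'=\gamma/c'$, where $n'\le c'm$. It runs in time $2^{\gamma' n'}\le 2^{\gamma m}$, plus the polynomial cost of the reduction. This is absorbed by slightly decreasing $\gamma$, since $\poly(m)\le 2^{o(m)}$. The result would contradict Gap-ETH, so the claimed constants $\delta,\gamma$ exist.

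The only delicate point is the soundness direction, which must preserve a linear gap rather than merely satisfiability. The L-reduction parameters of Papadimitriou's construction provide this directly, so we invoke them rather than reprove the expander argument.
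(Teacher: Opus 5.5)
Your proposal is correct and follows the same route as the paper: compose Gap-ETH with Papadimitriou's L-reduction to \textsc{Max-}$(3,3)$-SAT, making explicit the completeness, soundness, and running-time bookkeeping that the paper leaves implicit. Two small precisions. First, a constant-degree expander gadget alone only bounds occurrences by some constant $B>3$, so reaching $(3,3)$ needs one more gap-preserving step, e.g., replacing each variable by a cycle of $B$ copies linked by implications. Second, Step~2 really uses the absolute bound on violated clauses from your majority-vote argument, which is stronger than the OPT-relative inequality in the generic definition of an L-reduction.
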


\section{From (3,3)-SAT to \gcsp} \label{sec:sat-to-gcsp}

In this section, we prove a hardness result for \textsc{Max-}\gcsp{} on $d$-dimensional grids ($d \ge 2$) via an approximation-preserving reduction from \textsc{Max-}$(3,3)$-SAT. To bridge the gap between the combinatorial structure of a $(3,3)$-CNF formula $\phi$ and the geometry of a $d$-dimensional grid in \gcsp, we utilize a strong form of the Cube Wiring Theorem~\cite{sandorPHDThesis} to embed the incidence graph of $\phi$ on a $d$-dimensional grid for~$d \ge 3$.

\begin{thm}[Strong Cube Wiring Theorem~\cite{sandorPHDThesis}]
\label{thm:strong-cube-wiring}
Let $d \ge 3$ and let $G=(A\cup B,E)$ be a bipartite graph on $n$ vertices of maximum degree at most $2d$. Let $n_1 = \Theta\left(n^{1/(d-1)}\right)$ be an integer. Then there exists a constant $c>0$ and injective mappings
\[
\emb_A : A \to [c n_1]^{\,d-1}\times\{1\} \subseteq V(G_{c n_1}^d),
\qquad
\emb_B : B \to [c n_1]^{\,d-1}\times\{c n_1\} \subseteq V(G_{c n_1}^d),
\]
such that the following holds. There exists a collection of $|E|$ internally vertex-disjoint paths $\mathcal{W}{=}\{W_e\}_{e\in E}$ in the $d$-dimensional grid graph $G^d_{c n_1}$ with the property that for every edge $e=\{a,b\}\in E$ with $a\in A$ and $b \in B$, the path $W_e$ connects $\emb_A(a)$ to $\emb_B(b)$. We refer to these paths as wires. Moreover, given $G$ and $d$ the embedding can be constructed in polynomial time.
\end{thm}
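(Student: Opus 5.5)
We prove the Strong Cube Wiring Theorem by routing on a multi-level structure in three stages: a degree reduction that places the endpoints, a routing-network argument for the (sparse) permutation routing, and an expansion of the grid so that wires become internally vertex-disjoint.

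\emph{Step 1 (placement and degree reduction).} Let $N = c_0 n_1^{d-1}$ with $n_1=\Theta(n^{1/(d-1)})$ chosen so that $N \ge 2d\,n \ge 2|E|$. Place each vertex $a\in A$ at a point of $[c n_1]^{d-1}\times\{1\}$, and each $b\in B$ on the top facet, spaced at distance a sufficiently large constant $s$ apart in the first $d-1$ coordinates. For each vertex $v$, reserve its $\deg(v)\le 2d$ incident wires as distinct ``ports'': short vertex-disjoint paths that leave $\emb(v)$ in distinct directions and end at distinct grid points inside a private $s^{d-1}\times s$ block next to the facet. Blocks of distinct vertices are disjoint, so this is trivially internally vertex-disjoint because the constant $c$ absorbs $s$. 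After this step, the problem reduces to connecting a set $P$ of at most $|E|$ port points near the bottom facet to a set $Q$ of port points near the top facet, via a prescribed bijection $\pi:P\to Q$ induced by $E$, using vertex-disjoint paths.

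\emph{Step 2 (permutation routing).} This is the core step. We route an arbitrary bijection between $m=O(n_1^{d-1})$ points on two opposite facets of a box of side $O(n_1)$. Use the classical approach of a Beneš-type or sorting-network decomposition along coordinates: $\pi$ can be written as a composition of $O(1)$ (depending on $d$) ``coordinate-wise'' permutations. Each of these permutes points only within lines parallel to one axis, or within $2$-dimensional slabs. This follows the standard $(d-1)$-dimensional mesh routing argument via Hall's theorem and König's edge-colouring. We view the $(d-1)$-dimensional port array as an $n_1\times n_1^{d-2}$ matrix. Edge-colour the bipartite multigraph between source rows and target rows, which is regular of degree $n_1^{d-2}$, to obtain a decomposition in which each column phase needs only permutations within lines. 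Each such phase is realized in a layer of $O(n_1)$ height in the $d$-th coordinate. A permutation of $n_1$ points on a line can be routed vertex-disjointly in an $n_1\times O(n_1)$ 2-dimensional slab, because crossings can be resolved using one spare dimension, which is available since $d\ge3$. Stacking $O(1)$ layers gives total height $O(n_1)$.

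\emph{Step 3 (making wires internally vertex-disjoint and assembling).} Refine the grid by a constant factor, scaling coordinates by $3$, so that paths that shared vertices only at crossing points in the 2D slab routing can be detoured through the third dimension. This keeps side length $c n_1$. Concatenate the port paths from Step 1, the layer routings from Step 2, and the top ports. Every step is explicit (edge colourings of bipartite regular graphs are polynomial), which gives polynomial time.

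The main obstacle is Step 2: achieving side length $O(n_1)$, rather than $O(n_1\log n_1)$, while keeping paths vertex-disjoint. The count is tight, since $\Theta(n_1^{d-1})$ wires must cross a volume $\Theta(n_1^d)$. I would first try the recursive row/column decomposition with König's theorem, and check that each level uses only $O(n_1)$ height with $O(1)$ levels.
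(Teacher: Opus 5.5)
The paper gives no proof of this statement. It is quoted from Kisfaludi-Bak's thesis (a bipartite strengthening of the Cube Wiring Theorem of De Berg et al.) and used as a black box, so there is no in-paper argument to compare yours against. Your plan is the standard way such wiring results are proved: fan each vertex out into ports inside a private block, pad the port bijection to a permutation of a $(d-1)$-dimensional slot array, split it via K\"onig's theorem into $O_d(1)$ phases of parallel line permutations, and realize each phase in a slab of height $O(n_1)$ using a spare axis. It does go through, but three points need repair.

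\emph{Degree bound.} A vertex on the facet $x_d=1$ has at most $2d-1$ grid neighbours. Internally vertex-disjoint wires leaving it must start through distinct neighbours, because their other endpoints lie on the opposite facet. So a vertex of degree $2d$ cannot be wired at all, and your Step~1 (``$\deg(v)\le 2d$ ports in distinct directions'') fails exactly there. This is an off-by-one in the statement as transcribed; the $d=2$ lemma accordingly asks for degree $3=2d-1$. The result and your argument are fine for maximum degree $2d-1$, which covers the paper's degree-$3$ application.

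\emph{K\"onig step.} With $n_1$ rows of length $n_1^{d-2}$, only your column phase consists of line permutations. The two row phases permute $(d-2)$-dimensional subarrays and must be decomposed recursively, giving $2^{d-1}-1$ phases in total. If you instead make the rows lines and the columns slices, you get $2d-3$ phases. Either way the row multigraph is regular only after padding with dummy tokens.

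\emph{Order of Steps~2 and~3.} The spreading of Step~3 must come before the planar routing of Step~2, not after. In a dense strip of width $n_1$, no non-identity permutation can be routed even edge-disjointly. Each of the $n_1$ edges between consecutive rows carries exactly one path. Hence inside each row the left- and right-crossings of every horizontal edge balance, so they vanish. The fix is as follows:
\begin{itemize}
\item put the tokens at multiples of $3$ first;
\item route each line permutation in its $(x_i,x_d)$-plane, e.g.\ by $n_1$ rounds of odd--even transposition sort, drawing each swap inside a $3\times 3$ block so the two paths meet in a single transversal crossing;
\item lift one path at each crossing by one unit along some $x_j$ with $j\in[d-1]\setminus\{i\}$.
\end{itemize}
This is the only place where $d\ge 3$ is used. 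The spacing $3$ keeps the lifts of neighbouring crossings, and of neighbouring parallel slabs, disjoint.
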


For $d=2$, we replace vertex-disjoint wires in \Cref{thm:strong-cube-wiring} with an \textit{edge-disjoint} wiring in a $2$-dimensional grid (allowing crossings, but no shared edges) as in~\cite{sandorPHDThesis}. This is formalized by the following $2$-dimensional grid-wiring lemma.

\begin{lem}[$2$-dimensional Grid Wiring]
\label{lem:2d-wiring}
Let $G=(A\cup B,E)$ be a bipartite graph on $n$ vertices with maximum degree at most $3$ and $|E|=m$.
Then there exists an integer $n_1=\Theta(n+m)$, a constant $c>0$, injective mappings
\[
\emb_A : A \to [c n_1]\times\{1\}\subseteq V(G^2_{c n_1}),
\qquad
\emb_B : B \to [c n_1]\times\{c n_1\}\subseteq V(G^2_{c n_1}),
\]
and a collection $\mathcal{W}=\{W_e\}_{e\in E}$ of \emph{edge-disjoint} grid paths in $G^2_{c n_1}$ such that each $W_e$ connects $\emb_A(a)$ to $\emb_B(b)$ for each edge $e=\{a,b\} \in E$ and every grid vertex belongs to at most three wires; if two wires meet at a grid vertex then they do not share any grid edge (they only cross). Moreover, given $G$, the embedding can be constructed in polynomial time.
\end{lem}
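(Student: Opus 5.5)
We will give a direct construction with an explicit ``bus'' layout. Let $A=\{a_1,\dots,a_{|A|}\}$ and $B=\{b_1,\dots,b_{|B|}\}$ in arbitrary order, and list the edges $E=\{e_1,\dots,e_m\}$. Set $n_1 = n+2m+O(1)$, which is $\Theta(n+m)$, and take $c$ a small constant such as $c=4$. This leaves room for spacing: consecutive terminals are placed at $x$-distance at least $4$. Concretely, put $\emb_A(a_i)=(4i,1)$ and $\emb_B(b_j)=(4j,cn_1)$.

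Each edge $e_t=\{a_i,b_j\}$ gets a private horizontal track: the row $y_t = 4+t$, so distinct edges use distinct rows. The wire $W_{e_t}$ has three segments.
\begin{itemize}
\item It leaves $\emb_A(a_i)$ going up along column $x=4i+\sigma$, where $\sigma\in\{-1,0,1\}$ indexes the (at most three) edges incident to $a_i$. It first takes a short horizontal step in row $1$ or $2$ to reach that column.
\item It runs horizontally along row $y_t$ to column $4j+\tau$, where $\tau\in\{-1,0,1\}$ indexes the edge among those incident to $b_j$.
\item It runs vertically up to row $cn_1$, stepping sideways near the top to reach $\emb_B(b_j)$.
\end{itemize}

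We treat the local fan-out near a degree-$3$ terminal separately. The three wires leaving $(4i,1)$ use the three distinct edges at that vertex: right, up, and left. They then turn up in columns $4i+1$, $4i$, and $4i-1$. These are edge-disjoint because there is no edge below row $1$, and the row-$1$ neighbours of the terminal are used by one wire each. The top terminals are handled symmetrically.

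Next we check that the wires are edge-disjoint. The vertical segments of different wires lie in distinct columns, so they share no vertical edges. Each vertical segment of $e_t$ uses only rows up to $y_t$ (bottom part) or from $y_t$ up (top part). Horizontal segments lie in distinct rows $y_t$, so they share no horizontal edges. Near the terminals, the horizontal stubs are in rows $1$ and $cn_1$, which lie outside the band of track rows, and each stub has length at most $1$ within its own block of width $4$.

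We then count the wires at each grid vertex. A vertex in the interior band meets at most one horizontal track, since each row is owned by one edge. It also lies in at most one vertical column segment, since each column is owned by at most one edge-end. Hence at most two wires pass through it. When two wires pass through, one is horizontal and one is vertical, so they cross rather than share an edge. Wherever a wire turns, we must also check that no other wire uses the continuation edge. This holds because a turn occurs exactly at the end of the wire's own row or column.

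At a terminal vertex, up to three wires meet. They share the vertex as a common endpoint, which the statement permits (``at most three wires''), and they leave it along distinct edges. Near-terminal vertices in rows $1$ and $2$ see at most the fan-out wires of one terminal, so the count there is also at most three. Wires ending at the same terminal touch at that terminal, and elsewhere they only cross.

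The main obstacle is this bookkeeping near terminals and at turning points. There one must verify that a turning wire never travels along an edge used by a crossing wire, and the degree bound of $3$ is exactly what lets three wires exit a grid vertex of degree at most $3$ on the boundary row. The construction is clearly polynomial time, and the grid has side $cn_1=\Theta(n+m)$.
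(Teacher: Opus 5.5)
Your construction is essentially the paper's: a private track row for each edge, private columns obtained from terminal spacing plus offsets in $\{-1,0,1\}$, and five-segment wires. As written, however, it has a concrete flaw. You place $\emb_A(a_i)=(4i,1)$ and $\emb_B(b_j)=(4j,cn_1)$ on the \emph{same} set of $x$-coordinates, so your claim that ``each column is owned by at most one edge-end'' is false.

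Column $4i+\sigma$ is used by two segments:
\begin{itemize}
\item the bottom segment of the wire for the $\sigma$-th edge at $a_i$, from row $1$ up to that edge's track row;
\item the top segment of the wire for the $\sigma$-th edge at $b_i$, from that edge's track row up to row $cn_1$.
\end{itemize}
Whenever the second edge has the smaller track index, these two vertical segments overlap and share grid edges. For example, take $e_1=\{a_2,b_1\}$ with track row $5$ and $e_2=\{a_1,b_2\}$ with track row $6$, and choose offset $0$ at $a_1$ and at $b_1$. Then $W_{e_2}$ runs up column $4$ from row $1$ to row $6$. Meanwhile $W_{e_1}$ arrives along row $5$ and runs up column $4$ from row $5$ to the top. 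Both wires use the grid edge $(4,5)$--$(4,6)$, so edge-disjointness fails. Your vertex count (at most one horizontal and one vertical wire per interior vertex) breaks at the same place.

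The fix is the one the paper uses: shift the $B$-terminals horizontally past all $A$-terminals. For instance, take $\emb_B(b_j)=(4(|A|+j),cn_1)$; the paper uses spacing $3$, $\emb_B(v_j)=(3(n_A+j),5n_1)$, and $n_1=n+m$. Then all bottom columns $4i+\sigma$ and top columns $4(|A|+j)+\tau$ are pairwise distinct, so your column-ownership claim becomes true. The rest of your argument then goes through unchanged: the distinct-rows argument, the turning-point check, and the fan-out at degree-$3$ terminals. Your parameters $n_1=n+2m+O(1)$ and $c=4$ still leave enough width, since $4(|A|+|B|)+1\le 4n_1$.
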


\begin{proof}

Let $A = \{u_1,u_2,\dots, u_{n_A}\}$ and $B = \{v_1,v_2,\dots, v_{n_B}\}$. That is $n = n_A + n_B$. Set $n_1 = n + m$ and $c = 5$. We construct the embedding in the grid $G_{5n_1}^2$. We map the elements of $A$ to the path induced by $[5n_1]\times\{1\}$ and the elements of $B$ to the path induced by $[5n_1] \times \{5n_1\}$ with spacing~3:
\[
\begin{aligned}
\emb_A(u_i) &= (3i, 1) && \text{for } i\in[n_A],\\
\emb_B(v_j) &= (3(n_A+j), 5n_1) && \text{for } j\in[n_B].
\end{aligned}
\]
These mappings are injective since the first coordinates are distinct. Fix an arbitrary ordering $E=\{e_1,e_2,\dots,e_m\}$.
For each $k\in[m]$, we assign a unique row $y_k = k+2$ from the grid, so $3\le y_k\le m+2$. Since the maximum degree of $G$ is at most $3$, for every vertex $w\in A\cup B$ and each incident edge $e$ to~$w$, we assign a distinct local offset $\delta(e,w)\in\{-1,0,1\}$.
For $e_k=\{u_i,v_j\}$ with $u_i \in A$ and $v_j \in B$, for convenience, we define $x_A(e_k) = 3i + \delta(e_k,u_i)$ and $x_B(e_k) = 3(n_A+j) + \delta(e_k,v_j)$. We construct $W_{e_k}$ as the concatenation of the following grid subpaths in the following order.

\begin{itemize}
    \item A (possibly empty) horizontal subpath from $\emb_A(u_i)=(3i,1)$ to $(x_A(e_k),1)$.
    \item A vertical subpath from $(x_A(e_k),1)$ to $(x_A(e_k),y_k)$.
    \item A horizontal subpath from $(x_A(e_k),y_k)$ to $(x_B(e_k),y_k)$.
    \item A vertical subpath from $(x_B(e_k),y_k)$ to $(x_B(e_k),5n_1)$.
    \item A (possibly empty) horizontal subpath from $(x_B(e_k),5n_1)$ to $\emb_B(v_j)=(3(n_A+j),5n_1)$.
\end{itemize}
By construction, it is immediate from the construction that $W_{e_k}$ connects $\emb_A(u_i)$ to $\emb_B(v_j)$. Edge-disjointness holds because all middle rows $y_k$ are distinct, and all columns $\{x_A(e):e\in E\}$ and $\{x_B(e):e\in E\}$ are pairwise distinct (since spacing~$3$ plus distinct local offsets at each vertex), so no two wires share a horizontal or vertical grid edge. Note that if two wires meet, one is vertical and the other horizontal, so they share no grid edge (they only cross). This also implies that every grid vertex that is not an endpoint of a wire belongs to at most two wires (at most one wire continuing horizontally and one wire continuing vertically). Grid vertices that are extremes of a wire belong to at most three wires by hypothesis. The construction of this grid embedding (choosing offsets and five segments per edge) is computable in $O(n_1^2)$ time.
\end{proof}

The main result of this section is the following hardness theorem which gives an approximation-preserving reduction from \textsc{Max-}$(3,3)$-SAT to \textsc{Max-}\gcsp{} on $d$-dimensional grids, for every $d \ge 2$.

\begin{thm}
\label{thm:sat-to-gcsp}
Assuming Gap-ETH, for every fixed integer $d \ge 2$ there exists a constant $\gamma > 0$ such that there is no algorithm with running time $N^{\gamma / \eps^{d-1}}$ that, given any $\eps>0$ and any instance $I = (V, \domain, \constraints)$ of $d$-dimensional \gcsp{} of input size $N = \poly(|\constraints|, |\domain|)$, can distinguish between the following two cases:
\begin{enumerate}
    \item[(i)] there exists an assignment satisfying all constraints in $I$, and
    \item[(ii)] every assignment violates at least $\eps |\constraints|$ constraints.
\end{enumerate}
\end{thm}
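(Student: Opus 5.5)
We prove \Cref{thm:sat-to-gcsp} by a gap-preserving reduction from \textsc{Max-}$(3,3)$-SAT (\Cref{lem:gap-eth}) whose output has a grid of side about $1/\eps$ and a domain of size exponential in $n/\eps^{-(d-1)}$. Throughout, $\delta,\gamma_0$ are the constants of \Cref{lem:gap-eth}.

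\textbf{Step 1: embed the incidence graph.} Given a $(3,3)$-CNF formula $\phi$ with $n$ variables and $m=\Theta(n)$ clauses, let $G_\phi=(A\cup B,E)$ be its incidence graph, with $A$ the variables and $B$ the clauses. Its maximum degree is $3\le 2d$. For $d\ge 3$ we apply \Cref{thm:strong-cube-wiring}; for $d=2$ we apply \Cref{lem:2d-wiring}. In both cases we obtain wires $\mathcal W$ in $G^d_L$ with $L=\Theta(n^{1/(d-1)})$, respectively $L=\Theta(n)$ for $d=2$. Every grid vertex lies on $O(1)$ wires, and the total wire length is $O(L^d)=O(n^{d/(d-1)})$.

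\textbf{Step 2: choose the target grid size.} For a target $\eps$, set the CSP grid side to $k=\Theta(1/\eps)$ with an appropriate constant. If $k\ge L$, instead take $k=L$, so that each block is a single grid vertex and the CSP has size polynomial in $n$. Otherwise, partition $G^d_L$ into $k^d$ blocks, each a subcube of side $s=\lceil L/k\rceil$. Each block becomes one CSP variable, located at its block index in $G^d_k$.

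\textbf{Step 3: define domains and constraints.} The value of a block variable is a Boolean value for each wire segment passing through the block. The unary constraint enforces three things:
\begin{itemize}
\item all wire pieces inside the block that belong to the same wire carry the same value;
\item wires starting at a variable vertex $\emb_A(a)$ inside the block all carry one common value, namely the value of variable $a$;
\item for each clause vertex $\emb_B(b)$ inside the block, at least one incoming wire carries a value whose literal, with the sign taken from the edge, is true.
\end{itemize}
Binary constraints between adjacent blocks require that any wire crossing their common face gets the same value on both sides.

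The number of wires meeting a block is $O(s^d)$, so $|D|\le 2^{O(s^d)}=2^{O(n/k^{d-1})}=2^{O(\eps^{d-1}n)}$. For $d=2$ we use $L=\Theta(n)$ with the wires restricted to a region of $O(n)\times O(1)$ blocks scaled accordingly; there $s^2$ must be replaced by the wire load per block, which is $O(n/k)$ after choosing a fine vertical scaling. This is the spot I would double check.

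The constraint count is $|\constraints|=\Theta(k^d)$, so $N=\poly(k,2^{O(\eps^{d-1}n)})$. Hence an $N^{\gamma/\eps^{d-1}}$-time algorithm runs in time $2^{O(\gamma n)}\poly(1/\eps)$. Choosing $\gamma$ small relative to $\gamma_0$ contradicts \Cref{lem:gap-eth}, provided we pick $\eps$ as a function of $n$, for example the smallest $\eps$ for which the gap below holds, so that the polynomial factors are absorbed.

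\textbf{Step 4: completeness and soundness.} Completeness is immediate: a satisfying assignment of $\phi$ propagates along the wires and satisfies every constraint.

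Soundness is the main obstacle. Suppose an assignment $f$ violates fewer than $\eps|\constraints|$ constraints. Call a wire \emph{clean} if no constraint on its route is violated, and call a variable or clause clean if its endpoint block's unary constraint is satisfied. Each violated constraint touches only the wires through one or two blocks, that is $O(s^d)$ wires, and each such wire affects at most one clause and one variable. So $\eps k^d$ violations spoil at most $O(\eps k^d s^d)=O(\eps n)$ clauses.

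Now decode the assignment of $\phi$ from the value at each clean variable endpoint, setting the other variables arbitrarily. Every clause whose block and whose three wires are clean, and whose variables' endpoints are clean, is satisfied. Thus at most $O(\eps n)$ clauses are violated. Choosing the constant in $k=\Theta(1/\eps)$, equivalently choosing $\eps=c'\delta$ up to rescaling, makes this fewer than $\delta m$ clauses, which is the required contradiction.

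The delicate point is to state the per-violation blame bound as \emph{wires per block} instead of wire length, since long wires pass through many blocks. Charging each spoiled wire to one violated constraint on its route handles this.
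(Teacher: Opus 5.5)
Your reduction has the same structure as the paper's: wiring, blocks of side $s$ with $k=L/s$ blocks per side, one Boolean per wire in each block, unary variable/clause constraints and binary wire-consistency constraints. The gap is the bound on how many wires meet a block, which is the fact that balances the parameters.

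You bound the number of wires meeting a block by $O(s^d)$ and then write $2^{O(s^d)}=2^{O(n/k^{d-1})}$. That identity is false. With $L=\Theta(n^{1/(d-1)})$ and $s=L/k$ you get $s^d=\Theta(n^{d/(d-1)}/k^d)$, whereas $n/k^{d-1}=\Theta(s^{d-1})$. The same slip appears in your soundness count $O(\eps k^d s^d)=O(\eps n)$; in fact $k^ds^d=L^d=\Theta(n^{d/(d-1)})$.

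With only the volume bound the reduction collapses:
\begin{itemize}
\item each violation may spoil $\Theta(s^d)$ clauses, so soundness forces $\eps L^d=O(\delta n)$, i.e.\ $1/\eps^{d-1}=\Omega(n)$;
\item then $N^{\gamma/\eps^{d-1}}$ can no longer be bounded by $2^{O(\gamma n)}$, because the exponent picks up a factor $\log k+s^d$, which is unbounded since $ks=L$.
\end{itemize}

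What you need is the surface bound that the paper uses. Every wire meeting a block contains one of the at most $2ds^{d-1}$ vertices on the block's facets: either the vertex where it leaves the block, or an endpoint, which lies on an outer layer $x_d\in\{1,cn_1\}$. Each grid vertex lies on at most three wires. Hence at most $6ds^{d-1}$ wires meet a block, $|D|=2^{O(s^{d-1})}=2^{O(n/k^{d-1})}$, and each violated constraint spoils only $O(s^{d-1})$ clauses.

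This also resolves your $d=2$ worry without any ``vertical scaling''. \Cref{lem:2d-wiring} already gives side $\Theta(n)=\Theta(n^{1/(d-1)})$ and at most three wires per vertex, so the same count gives $O(s)=O(n/k)$ wires per block.

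Once this is fixed, two parameter slips remain.
\begin{itemize}
\item The spoiled-clause count is $O(\eps k^d s^{d-1})=O(\eps k n)$, not $O(\eps n)$. So the gap requires $\eps k\le c\delta$, i.e.\ $\eps=\Theta(\delta/k)=\Theta(\delta s/n^{1/(d-1)})$ as in the paper, not merely $\eps=c'\delta$.
\item The factor $|\constraints|^{\gamma/\eps^{d-1}}=2^{\Theta(\gamma k^{d-1}\log k)}$ is not $\poly(1/\eps)$. It is absorbed into $2^{O(\gamma n)}$ only if $k^{d-1}\log k=O(n)$. This holds, e.g., when $s^{d-1}=\Omega(\log n)$ (the paper's choice $s=\Omega(\log^{1/(d-1)}n)$) or when $k=O(1)$. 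For a given $k$ you want the largest $\eps$ for which the gap holds, not the smallest.
\end{itemize}
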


\begin{proof}

Fix a positive integer $d \ge 2$. Let $\phi$ be a $(3,3)$-CNF formula. We construct an instance of \gcsp{} whose variables correspond to blocks of a $d$-dimensional grid, and whose domain encodes Boolean values propagated along vertex-disjoint paths (edge-disjoint paths for the case $d = 2$). The key ingredients of the reduction are as follows.
\begin{itemize}
    \item We consider the incidence graph of $\phi$, viewed as a bipartite
    graph between variables and clauses. Using the Strong Cube Wiring Theorem (\Cref{thm:strong-cube-wiring}) for $d \ge 3$ and $2$-dimensional Edge-Disjoint Wiring (\Cref{lem:2d-wiring}) for $d = 2$, we embed this bipartite graph into a $d$-dimensional grid so that each variable and clause
    incidence is represented by a disjoint path (a \emph{wire}).
    \item We partition the grid into small cubes and assign to each cube a
    variable whose value encodes the Boolean values of the wires
    passing through that cube. Unary constraints enforce consistency of literals belonging to
    the same variable and satisfaction of clauses, while binary
    constraints enforce consistency of wire values across adjacent cubes.
\end{itemize}

We show that a satisfiable formula $\phi$ yields fully satisfiable \gcsp{} instances, whereas assignments violating a constant fraction of clauses in $\phi$ correspond to assignments violating a proportional fraction of constraints in the \gcsp{} instance.

\subsection*{Embedding the incidence graph of $\phi$}
\label{subsec:incidence}

Assume that the $(3,3)$-CNF formula $\phi$ has variable set $\var_\phi$ and clause set $\cla_\phi$. We associate with $\phi$ its \emph{incidence graph} $G = (A \cup B, E)$, where $A = \var_\phi$, $B = \cla_\phi$, and $\{\phiv, \phic\} \in E$ if and only if the variable $\phiv \in \var_\phi$ appears in the clause $\phic \in \cla_\phi$.
Since $\phi$ is a $(3,3)$-CNF formula, each variable appears in at most three clauses and each clause contains at most three literals. Consequently, the maximum degree of $G$ is at most $3$.

For any fixed $d \ge 3$, we apply the Strong Cube Wiring Theorem (\Cref{thm:strong-cube-wiring}) and for $d = 2$, we apply the $2$-dimensional Edge-Disjoint Wiring (\Cref{lem:2d-wiring}) to embed $G$ into a $d$-dimensional grid.
See \Cref{fig:s-partition}. Let $n_1 = \Theta\bigl(n^{1/(d-1)}\bigr)$ be the grid side length parameter guaranteed by \Cref{thm:strong-cube-wiring} (or \Cref{lem:2d-wiring}).
Moreover, it provides injective mappings
\[
\emb_A : A \to [\gridconst n_1]^{\,d-1} \times \{1\}
\qquad \text{and} \qquad
\emb_B : B \to [\gridconst n_1]^{\,d-1} \times \{\gridconst n_1\},
\]
for a constant $\gridconst > 0$, together with a collection of vertex-disjoint paths (edge-disjoint for $d = 2$) $\mathcal{W} = \{W_e\}_{e \in E}$ in the grid graph $G^d_{\gridconst n_1}$.

Note that the size of the grid is $\Theta\bigl(n^{d/(d-1)}\bigr)$. Each path $\wire_e \in \wires$ connects $\emb_A(\phiv)$ to $\emb_B(\phic)$ for the corresponding incidence $e = \{\phiv, \phic\} \in E$. Recall that we refer to these paths as \emph{wires}. Intuitively, each wire is responsible for transmitting the Boolean value assigned to a literal from its variable vertex to the corresponding clause vertex.

\subsection*{Constructing an instance of \gcsp}

\begin{figure}
    \centering
    \includegraphics[width=0.5\linewidth]{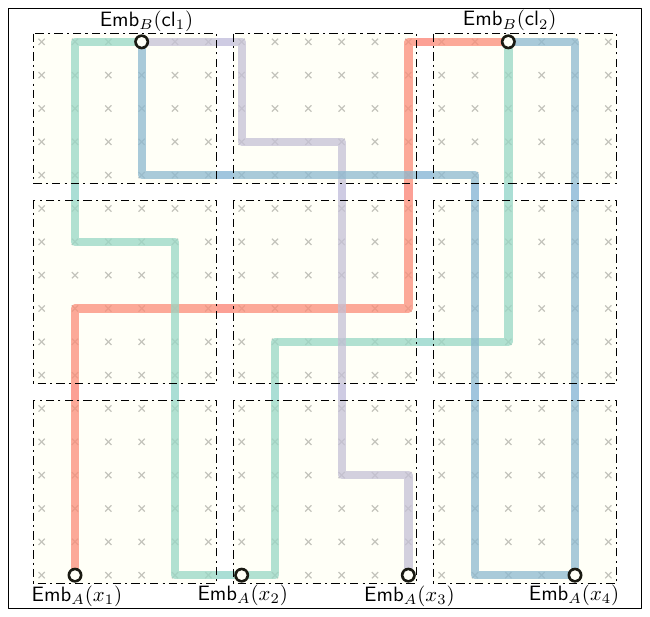}
    \caption{An $s$-partition of a $2$-dimensional embedding of the incidence graph of the formula $\phi = (\bar{x}_2 \lor x_3 \lor x_4) \land (\bar{x}_1 \lor x_2 \lor \bar{x}_4)$ for $s = 6$. The wires corresponding to the same variable share the same color.}
    \label{fig:s-partition}
\end{figure}

We first define the variables and domain of the \gcsp{} instance. Let $s$ be a positive integer parameter. We define an $s$-\emph{partition} of $G^d_{\gridconst n_1}$ by grouping vertices
into axis-aligned $d$-dimensional cubes of side length $s$. Formally, define a mapping $\psi : V(G^d_{\gridconst n_1}) \to V(G^d_{n_2})$, by
\[
\psi(x_1, \dots, x_d)
=
\bigl(\lceil x_1 / s \rceil, \dots, \lceil x_d / s \rceil\bigr),
\]
where $n_2 = \left\lceil \frac{\gridconst n_1}{s} \right\rceil$. For each $v \in V(G^d_{n_2})$, the preimage $\psi^{-1}(v)$ induces a connected subgraph of $G^d_{\gridconst n_1}$ isomorphic to a $d$-dimensional grid of side length at most $s$.
We refer to such a subgraph as a \emph{small cube} and we denote the set of all small cubes by $\Sigma$.
The variables $V$ of the \gcsp{} instance correspond to the small cubes in $\Sigma$, i.e., $V = \{v_\sigma : \sigma \in \Sigma\}$. Naturally, we define the subgraph isomorphism  $\loc : V \rightarrow V(G_{n_2}^d)$ by $\loc(v_\sigma) := \psi(z)$ for an arbitrary vertex~$z$ of~$\sigma$. We say that two small cubes $\sigma$ and $\sigma'$ are adjacent if $\loc(v_\sigma)$ and $\loc(v_{\sigma'})$ are adjacent in $G_{n_2}^d$. See \Cref{fig:s-partition}.


Recall that the collection $\mathcal{W}$ consists of internally vertex-disjoint paths in $G^d_{\gridconst n_1}$ (edge-disjoint paths for $d=2$).
For a small cube $\sigma \in \Sigma$, let $\mathcal{W}_\sigma \subseteq \mathcal{W}$ denote the set of wires that intersect $\sigma$. We bound the number of such wires as follows.
Each wire enters or exits a small cube through one of its $2d$ facets, each facet containing at most $s^{d-1}$ vertices.
Since the wires are internally vertex-disjoint, at most three wires can pass through each boundary vertex (for $d > 2$, only when it is an endpoint of a wire; otherwise, the bound is smaller, and for $d=2$, \Cref{lem:2d-wiring} ensures this property also holds for the edge-disjoint paths). Consequently, $|\mathcal{W}_\sigma| \le 6d \cdot s^{d-1}$.

As a shorthand, we define $\wirebound := 6d \cdot s^{d-1}$. The domain of the \gcsp{} instance is defined as $\domain := \{0,1\}^{\wirebound}$.
Intuitively, each coordinate of a domain element corresponds to the Boolean value carried by one of the wires passing through the relevant small cube.
For each small cube $\sigma$, we fix an arbitrary injective mapping $f_\sigma : \mathcal{W}_\sigma \to [\wirebound]$, which assigns to each wire intersecting $\sigma$ a unique coordinate of the domain vector.

Finally, we define the set of constraints $\constraints$ of the \gcsp{} instance. The constraints are of three types:

\subparagraph{(i) Variable consistency constraints.}
Let $\sigma$ be a small cube that contains the embedded vertex $\emb_A(\phiv)$ corresponding to a variable $\phiv \in \var_\phi$. Let $L(\phiv)$ denote the set of literals of $\phi$ in which $\phiv$ participates.
For the variable $v_\sigma \in V$, we introduce a unary constraint requiring that all wires corresponding to occurrences of $\phiv$ encode a consistent Boolean~assignment.

Formally, the allowed domain values at $v_\sigma \in V$ are those vectors $\ve{x} \in \{0,1\}^\wirebound$ such that for all literals $l, l' \in L(\phiv)$ with corresponding wires $\wire$ and $\wire'$,
\[
\ve{x}(f_\sigma(\wire)) =
\begin{cases}
\ve{x}(f_\sigma(\wire')) & \text{if $l$ and $l'$ have the same sign},\\[4pt]
1 - \ve{x}(f_\sigma(\wire')) & \text{otherwise}.
\end{cases}
\]

\subparagraph{(ii) Clause satisfaction constraints.}
Let $\sigma$ be a small cube that contains the embedded vertex $\emb_B(\phic)$ corresponding to a clause $\phic \in \cla_\phi$.
Let $L(\phic) = \{l_1, l_2, l_3\}$ be the (at most three) literals appearing in the clause $\phic$. For the variable $v_\sigma$, we introduce a unary constraint requiring that at least one of these literals evaluates to true.

Formally, the allowed domain values at $v_\sigma \in V$ are those vectors $\ve{x} \in \{0,1\}^\wirebound$ such that if $\wire_1$, $\wire_2$, and $\wire_3$ are the corresponding wires of the literals $l_1$, $l_2$, and $l_3$ respectively, then
\[ \bigl( \ve x(f_\sigma(\wire_1)), \ve x(f_\sigma(\wire_2)), \ve x(f_\sigma(\wire_3)) \bigr) \neq (0,0,0). \]

\subparagraph{(iii) Wire consistency constraints.}
Finally, consider two adjacent small cubes $\sigma, \sigma' \in \Sigma$. For every wire $\wire \in \mathcal{W}_\sigma \cap \mathcal{W}_{\sigma'}$, the Boolean value carried by $\wire$ must be consistent across both cubes.
We introduce a binary constraint between $v_\sigma$ and $v_{\sigma'}$ requiring equality on all shared wire coordinates.

Formally, the allowed pairs $(\ve{x}, \ve{y}) \in \{0,1\}^\wirebound \times \{0,1\}^\wirebound$ at the binary constraint between $v_\sigma$ and $v_{\sigma'}$ satisfy
\[ \ve x(f_\sigma(\wire)) = \ve y(f_{\sigma'}(\wire))
\qquad \text{for all }
\wire \in \mathcal{W}_\sigma \cap \mathcal{W}_{\sigma'}. \]
This completes the definition of the constraint set $\constraints$.

\subsection*{Exact Equivalence and Error Analysis}

We show that if the Boolean formula $\phi$ is satisfiable, then the constructed \gcsp{} instance admits an assignment that satisfies all constraints. Let $\alpha : \var_\phi \to \{0,1\}$ be a solution of $\phi$. We define an assignment $\mu : V \to \{0,1\}^\wirebound$ to the \gcsp{} variables as follows. Each literal $l$ of $\phi$ corresponds to a unique wire $\wire_l \in \mathcal{W}$ in the grid embedding. For each small cube $\sigma$ and each wire $\wire_l \in \mathcal{W}_\sigma$, where $l$ is a literal over variable $\phiv$, we set
\[
\mu(v_\sigma)\bigl(f_\sigma(\wire_l)\bigr) =
\begin{cases}
\alpha(\phiv) & \text{if $l$ is a positive occurrence of the variable $\phiv$},\\
1 - \alpha(\phiv) & \text{if $l$ is a negated occurrence of the variable $\phiv$}.
\end{cases}
\]
All remaining coordinates of $\mu(v_\sigma)$ (corresponding to wires not intersecting $\sigma$) may be~assigned arbitrarily. By construction, $\mu$ satisfies all constraints of the \gcsp{} instance.

Conversely, consider an arbitrary assignment $\lambda : V \to \{0,1\}^\wirebound$ to the \gcsp{} instance. We define a Boolean assignment $\alpha : \var_\phi \to \{0,1\}$ for the formula $\phi$ as follows. For each variable $\phiv \in \var_\phi$, let $\sigma(\phiv)$ be the unique small cube containing the embedded vertex $\emb_A(\phiv)$. If the variable-consistency constraint at $\sigma(\phiv)$ is satisfied by $\lambda$, then all wires corresponding to occurrences of $\phiv$ encode a consistent value; we set $\alpha(\phiv)$ to that value. Otherwise, we assign $\alpha(\phiv)$ arbitrarily. Again by construction, if $\lambda$ is a solution of the \gcsp{} instance, then $\alpha$ is a solution of~$\phi$.

\begin{clm} \label{clm:clause-violation}
Each violated constraint under the assignment $\lambda$ for the \gcsp{} instance can cause at most~$6\wirebound$ violated clauses in the corresponding assignment $\alpha$ for the formula $\phi$.
\end{clm}

\begin{proof}
    Any violated constraint, whether unary or binary, involves at most two small cubes (a pair of adjacent small cubes in case of binary constraint) whose value assignment may corrupt the Boolean values carried by all wires intersecting the cube(s). Each such cube intersects at most $\wirebound$ wires, and each wire corresponds to a literal occurrence that appears in at most three clauses. Therefore, each violated constraint can affect at most $6\wirebound$ clauses of $\phi$.
\end{proof}

The number of variables, hence the number of constraints up to a constant factor, is
\[
|\constraints|=\Theta(n_2^d)=\Theta\left(\left(\frac{n_1}{s}\right)^d\right).
\]
Moreover, for $(3,3)$-formulas we have $m=\Theta(n)$ and $|E|=\Theta(n)$. Suppose that at most $\eps|\constraints|$ constraints are violated by the arbitrary \gcsp{} assignment $\lambda$. Then by \Cref{clm:clause-violation} the number of clauses violated in the induced Boolean assignment $\alpha$ is at most 
\[
6\wirebound\cdot \eps|\constraints|
= \Theta(s^{d-1})\cdot \eps\cdot \Theta\left(\left(\frac{n_1}{s}\right)^d\right) = \eps \cdot \Theta\left(\frac{n_1^d}{s}\right) = \eps\cdot \Theta\left(\frac{n^{d/(d-1)}}{s}\right).
\]

Assume that Gap-ETH holds, and let $\gamma$ and $\delta$ be the two constants from \Cref{lem:gap-eth}. By \Cref{clm:clause-violation}, setting $\eps := \delta \, s \, c_d /{n^{1/(d-1)}}$ for an appropriate constant $c_d > 0$ yields that if every assignment of $\phi$ violates at least $\delta m$ clauses, every assignment to the \gcsp{} instance violates at least $\eps|\constraints|$ constraints.

The constructed \gcsp{} instance has $|\constraints| = \Theta\left(n^{d/(d-1)} / s^d\right)$ and domain size $|D| = 2^\wirebound = 2^{\Theta(s^{d-1})}$ and it is obtained from $\phi$ in polynomial time in the instance size $N = \poly(|\constraints|, |\domain|)$. Suppose there exists an algorithm that, for every $\gamma_0>0$, distinguishes satisfiable \gcsp{} instances from those in which every assignment violates at least $\eps|\constraints|$ constraints in time $N^{\gamma_0/\eps^{d-1}}$. Applying this algorithm to the instance produced by the reduction for values of $s = \Omega(\log^{1/(d-1)}n)$ yields an algorithm that distinguishes satisfiable $(3,3)$-CNF formulas from those in which every assignment violates at least $\delta m$ clauses in~time

\begin{align*}
\poly\bigl(|\constraints|, |D|\bigr)^{\Theta(\gamma_0 n/s^{d-1})}
&\le
\left(
2^{\Theta(s^{d-1})}\cdot \frac{n^{d/(d-1)}}{s^d}
\right)^{\Theta(\gamma_0 n/s^{d-1})}
\\[6pt]
&=
2^{\Theta(s^{d-1})\cdot \Theta(\gamma_0 n/s^{d-1})}
\cdot
\left(\frac{n^{d/(d-1)}}{s^d}\right)^{\Theta(\gamma_0 n/s^{d-1})}
\\[6pt]
&=
2^{\gamma_0\Theta(n)}
\cdot
2^{\Theta\left(\frac{\gamma_0 n\log(n/s)}{s^{d-1}}\right)}
\\[6pt]
&=
2^{\gamma_0 c'_d n},
\end{align*}
for some constant $c'_d > 0$. Choosing $\gamma_0>0$ sufficiently small contradicts \Cref{lem:gap-eth}.
\end{proof}

\section{\gcsp{} with (In)equality} \label{sec:tcsp}

We continue the reduction chain to $d$-dimensional \lecsp{} and \tcsp{} (thereby proving \Cref{thm:main-gcsp}) by showing that the reductions of Proposition~2.18 and Proposition~2.19 of~\cite[Section 2.3]{MarxS14} are approximation preserving. See \Cref{fig:reduction-chain}.

\subsection{From \gcsp{} to \pcsp{}}

As an intermediate step, we first show that the hardness of approximation for \textsc{Max-}\gcsp{} transfers to \textsc{Max-}\pcsp{}. An instance $I = (V, \domain, \constraints)$ of \gcsp{} is also an instance of \pcsp{} if every binary constraint $C_{\ve{a},\ve{b}} \in \constraints$ is of the form $\{(x,y) \in D \times D : y = p(x)\}$ (projection from $\ve{a}$ to $\ve{b}$) or of the form $\{(x,y) \in D \times D : x = p(y)\}$ (projection from $\ve{b}$ to $\ve{a}$) for some function $p:\domain \to \domain$.

Again, in the \textsc{Max-}\pcsp, the goal is to maximize the number of satisfied constraints. We use the same reduction as in~\cite[Proposition 2.18]{MarxS14} where a $d$-dimensional \gcsp{} instance $I = (V,\domain,\constraints)$ is reduced to an instance $I' = (V', \domain^2, \constraints')$ of $d$-dimensional \pcsp{} while preserving satisfiability. We show that their reduction is approximation preserving.

\begin{prop}
    \label{prop:gcsp-to-pgcsp}
    Assuming Gap-ETH, for every integer $d \ge 2$, there exists a constant $\gamma > 0$ such that there is no algorithm with running time $N^{\gamma / \eps^{d-1}}$ that, given any $\eps>0$ and instance $I = (V,\domain,\constraints)$ of $d$-dimensional \pcsp{} of input size $N = \poly(|\constraints|,|\domain|)$, can distinguish between the following two cases:
    \begin{enumerate}
        \item[(i)] there exists an assignment satisfying all constraints, and
        \item[(ii)] every assignment violates at least $\eps|\constraints|$
        constraints.
    \end{enumerate}
\end{prop}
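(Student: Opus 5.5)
We reduce from \Cref{thm:sat-to-gcsp}, using the construction of~\cite[Proposition 2.18]{MarxS14}, and show that it preserves the gap up to constant factors. Given a $d$-dimensional \gcsp{} instance $I=(V,\domain,\constraints)$ with primal graph embedded by $\loc$ into $G^d_k$, we build $I'=(V',\domain^2,\constraints')$ on the grid $G^d_{2k}$. Each variable $\ve v$ with $\loc(\ve v)=p$ is represented by the vertex $2p$. Each binary constraint $C_{\ve u,\ve v}$ with $\loc(\ve v)=\loc(\ve u)+\dir e_i$ is represented by a new ``edge variable'' at the midpoint $2\loc(\ve u)+\dir e_i$.

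\textbf{Construction.} The edge variable ranges over $\domain^2$. Its unary constraint restricts it to pairs $(x,y)\in C_{\ve u,\ve v}$. The two binary constraints linking it to $\ve u$ and $\ve v$ are projections onto the first and second coordinate, respectively. Original variables take values $(x,x)$ with $x\in C_{\ve v}$, enforced by a unary constraint. The primal graph of $I'$ is an induced subgraph of $G^d_{2k}$, since only vertices with at most one odd coordinate are used and their grid adjacencies are exactly the intended ones. We have $|V'|\le |V|+|\constraints|$ and $|\constraints'|\le |V|+3|\constraints| = O(|\constraints|)$; here $|V|=O(|\constraints|)$ because every variable carries a unary constraint. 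The domain size is $|\domain|^2$, so $N'=\poly(N)$.

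\textbf{Completeness.} A satisfying assignment $f$ of $I$ extends to $I'$ by setting $\ve v\mapsto(f(\ve v),f(\ve v))$ and assigning $(f(\ve u),f(\ve v))$ to each edge variable. All constraints of $I'$ are then satisfied.

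\textbf{Soundness.} Take an assignment $\lambda'$ of $I'$ that violates $t$ constraints. Define $\lambda(\ve v)$ as the first coordinate of $\lambda'(\ve v)$. We charge each constraint of $I$ violated by $\lambda$ to a violated constraint of $I'$ in its local neighbourhood.
\begin{itemize}
\item If $C_{\ve v}$ is violated, then either the unary constraint at $\ve v$ in $I'$ is violated, or $\lambda'(\ve v)$ fails to have the form $(x,x)$ with $x\in C_{\ve v}$; in both cases that unary constraint is violated.
\item If $C_{\ve u,\ve v}$ is violated, then among the edge variable's unary constraint, its two projection constraints, and the unary constraints of $\ve u,\ve v$, at least one is violated. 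Otherwise the edge value would equal $(\lambda(\ve u),\lambda(\ve v))\in C_{\ve u,\ve v}$.
\end{itemize}
Each constraint of $I'$ is charged by at most $2d+1$ constraints of $I$: a unary constraint at $\ve v$ is charged by $C_{\ve v}$ and by up to $2d$ incident binary constraints. Hence $\lambda$ violates at most $(2d+1)t$ constraints.

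\textbf{Gap.} Suppose every assignment of $I$ violates at least $\eps|\constraints|$ constraints. Then every assignment of $I'$ violates at least $\eps|\constraints|/(2d+1)\ge \eps'|\constraints'|$ constraints, with $\eps'=\eps/c_d$ for a constant $c_d$ depending only on $d$.

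\textbf{Running time.} An $N'^{\gamma/\eps'^{d-1}}$-time algorithm for \pcsp{} then yields an algorithm for \gcsp{} with running time
\[
N^{O(\gamma c_d^{d-1}/\eps^{d-1})}.
\]
Choosing $\gamma$ small enough contradicts \Cref{thm:sat-to-gcsp}.

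The main point needing care is the soundness accounting: constraints of $I$ may share a violated constraint of $I'$, and the argument depends on bounding that multiplicity by a constant in $d$. This is exactly what the locality of the construction and the bounded degree of the grid guarantee. A second check is that the total number of constraints grows only by a constant factor.
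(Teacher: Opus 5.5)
Your proof is correct and follows essentially the same route as the paper: the Marx--Sidiropoulos midpoint construction (a pair-valued variable on each subdivided edge, with projection constraints to its two endpoints), a local charging argument showing that each violated constraint of $I$ forces a violated constraint of $I'$ nearby with bounded multiplicity, and the same running-time bookkeeping with $\eps' = \eps/c_d$. The differences are cosmetic. You encode original variables as $(x,x)$, so the projections are $(x,y)\mapsto(x,x)$ and $(x,y)\mapsto(y,y)$, rather than as $(x,z_0)$. You also note that, under the induced-subgraph definition, the subdivided placement needs no unimportant padding constraints, which gives $|\constraints'| = O(|\constraints|)$ directly.
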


\begin{proof}
Let $I=(V,\domain,\constraints)$ be an arbitrary instance of $d$-dimensional \gcsp{} with subgraph isomorphism $\loc : V \rightarrow V(G_{k}^d)$. By Proposition~2.18 of Marx and Sidiropoulos~\cite{MarxS14}, there exists a polynomial-time reduction that creates an equivalent instance $I'=(V',\domain^2,\constraints')$ of $d$-dimensional \pcsp{} from $I$ with subgraph isomorphism $\loc' : V' \rightarrow V(G_{2k}^d)$ such that $I$ is satisfiable if and only if $I'$ is satisfiable. We do not repeat the reduction here. Instead, we only state the properties of the reduction that are relevant for the approximation analysis. Our argument is based on a charging scheme.

In this reduction, each unary constraint of $I$ corresponds to exactly one unary constraint in $I'$ and each binary constraint of $I$ corresponds to three constraints in $I'$: one unary constraint and two projection constraints. Formally, fix an element $z_0 \in D$. Each variable $\ve{a} \in V$ is mapped to variable $\ve{a}' \in V'$ such that $\loc'(\ve{a}') = 2\,\loc(\ve{a})$, and the unary constraint $C_\ve{a} \subseteq \domain$ of the variable $\ve{a}$ is mapped to a unary constraint $\{(x,z_0) : x \in C_\ve{a}\} \subseteq \domain^2$ on $\ve{a}'$. A binary constraint $C_{\ve{a}, \ve{b}} \subseteq \domain^2$ is enforced by

\begin{itemize}
    \item Imposing $C_{\ve a,\ve b}$ as a unary constraint on the variable $\ve{\mathsf{mid}_{\ve a,\ve b}}$ with $\loc'(\ve{\mathsf{mid}_{\ve a,\ve b}}) = \loc(\ve{a}) + \loc(\ve{b})$,
    \item Setting $\{((x,z_0), (x,y)) : x,y \in \domain\}$ as a binary constraint between $\ve{a}'$ and $\ve{\mathsf{mid}_{\ve a,\ve b}}$, and
    \item Setting $\{((x,y),(y,z_0)) : x,y \in \domain\}$ as a binary constraint between $\ve{\mathsf{mid}_{\ve a,\ve b}}$ and $\ve{b'}$.
\end{itemize}

The variables and constraints introduced above are called \emph{important} in~\cite{MarxS14}. We charge each important constraint of $I'$ to the corresponding constraint in $I$. In addition, the construction introduces \emph{unimportant} constraints. Each such constraint lies within a box of side length~$2$ centered at an important variable in the primal graph of $I'$. We charge each unimportant constraint to a nearest important constraint in the primal graph, and hence equivalently to the corresponding constraint in $I$. With this charging scheme, each constraint in $I$ is charged at most $O(3^d)$ times. Therefore, there exists an explicit constant $c_d = O(3^d)$ such that $|\constraints'| \le c_d \, |\constraints|$.

Assume that every assignment for $I$ violates at least $\eps|\constraints|$ constraints. Fix an arbitrary assignment $\bar f$ to $I'$ and let $f$ be the induced assignment to $I$ from~$\bar f$. For every violated constraint $c\in\constraints$ of $f$, the construction guarantees that $\bar f$ violates at least one constraint corresponding to $c$. Hence $\bar f$ violates at least $\eps|\constraints|$ constraints of $I'$ and therefore at least $(\eps/c_d)|\constraints'|$ constraints.

The reduction increases the input size only polynomially. If $N'=\poly(|\constraints'|,|\domain^2|)$ denotes the input size of $I'$, then $N'=\poly(N)$.
For every $\gamma_0 > 0$, suppose there exists an algorithm for $d$-dimensional
\pcsp{} that can distinguish satisfiable instances from those where every
assignment violates a $\delta$-fraction of constraints, running in time
${N'}^{\gamma_0/\delta^{d-1}}$ on instances of input size $N'$. Applying it to
$I'$ with $\delta = \eps/c_d$ yields a running time of
\[
  {N'}^{\gamma_0 c_d^{d-1}/\eps^{d-1}}
  = \poly(N)^{\gamma_0 c_d^{d-1}/\eps^{d-1}}
  = N^{\gamma_0c'_d/\eps^{d-1}},
\]
for a suitable constant $c'_d > 0$. This gives an algorithm that distinguishes satisfiable
instances of $d$-dimensional \gcsp{} from those where every assignment violates
at least an $\eps$-fraction of constraints in time $N^{\gamma/\eps^{d-1}}$,
contradicting \Cref{thm:sat-to-gcsp} under Gap-ETH (for small enough~$\gamma_0$).
\end{proof}

\subsection{From \pcsp{} to \texorpdfstring{$\le$}{ <=}-CSP}

We now reduce from \pcsp{} to \lecsp{}. Recall that an instance of \lecsp{} can contain arbitrary unary constraints, but the binary constraints $C_{\ve{a},\ve{b}}$ are of the form $\{(x,y) \in \domain\times \domain : x(i) \le y(i)\}$ where $\ve{a}$ and $\ve{b}$ are adjacent variables with $\loc(\ve{b}) = \loc(\ve{a})+\dir{e}_i$ for some $1\le i \le d$ and $D = [\Lambda]^d$ is the domain for some positive integer $\Lambda$.

\begin{thm}
    \label{thm:pgcsp-to-lecsp}
    Assuming Gap-ETH, for every integer $d \ge 2$ there exists a constant $\gamma > 0$ such that there is no algorithm with running time $N^{\gamma / \eps^{d-1}}$ that, given any $\eps>0$ and instance $I = (V,D, \constraints)$ of~$d$-dimensional \lecsp{} of input size $N = \poly(|\constraints|,|D|)$, can distinguish between the following two cases:
    \begin{enumerate}
        \item[(i)] there exists an assignment satisfying all constraints, and
        \item[(ii)] every assignment violates at least an $\eps$-fraction of the constraints.
    \end{enumerate}
\end{thm}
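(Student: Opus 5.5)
We reduce from \textsc{Max-}\pcsp{} via the construction of Proposition~2.19 of Marx and Sidiropoulos~\cite{MarxS14}, and show it is gap preserving by the same kind of charging scheme used in the proof of \Cref{prop:gcsp-to-pgcsp}. Take a $d$-dimensional \pcsp{} instance $I=(V,\domain,\constraints)$ with $\loc: V\to V(G_k^d)$. Their reduction identifies $\domain$ with $[\Lambda]$ for $\Lambda=|\domain|$ (or a polynomial in it). It replaces each variable $\ve a$ by a constant-size block of $\le$-CSP variables with domain $[\Lambda']^d$, $\Lambda'=\poly(|\domain|)$. Each projection constraint $y=p(x)$ between adjacent blocks is simulated by the standard trick: an equality is encoded as two opposite inequalities $x_i\le y_i$ and $y_i\le x_i$ along a short cycle of the grid, and the unary constraints of intermediate variables carry pairs $(x,p(x))$ encoded monotonically, e.g.\ as $x\cdot\Lambda+p(x)$ in one coordinate. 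The first step is to isolate, as before, only the structural facts we need. First, the new instance $I'$ is satisfiable iff $I$ is. Second, every variable and constraint of $I'$ lies within a box of constant side length (depending only on $d$) around the image of some variable or constraint of $I$. Third, $\Lambda'$ and $|V'|$ are polynomial in $|\domain|$ and $|V|$.

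The second step is the size bound. Charge each constraint of $I'$ to the constraint of $I$ whose gadget it belongs to; unimportant constraints go to the nearest important one. Each constraint of $I$ receives $O(c^d)$ charges for a constant $c$, so $|\constraints'|\le c_d|\constraints|$.

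The third step is the soundness direction, which is the main obstacle. Given an arbitrary assignment $\bar f$ of $I'$, we decode an assignment $f$ of $I$. Each variable $\ve a$ gets the value decoded from a designated representative variable $\ve a'$ in its block, with an arbitrary value if its unary constraint fails. We must show that if $f$ violates a constraint $c\in\constraints$, then $\bar f$ violates at least one constraint among the $O(1)$ constraints of $I'$ within constant distance of $c$'s gadget. Equivalently, if every constraint in the gadget of $c$ and in the blocks of its two endpoints is satisfied by $\bar f$, then the decoded values satisfy $c$. This is a local version of the completeness argument of~\cite{MarxS14}. The difficulty is that their correctness proof may use a global argument: inequalities around a cycle, or across the whole grid, force equalities. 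We need these forcing cycles to be of constant length and contained in the local neighbourhood. I would first check that in their construction every equality is enforced by a cycle of length $O(1)$, namely the $\le$ constraints going ``forward'' and ``back'' inside a $2\times\dots\times 2$ box. If instead some argument relies on long monotone chains, I would modify the gadget locally so that each block contains both directions of comparison, at the cost of a constant blow-up. The disjoint-gadget charging then gives that $\bar f$ violates at least $\eps|\constraints|/O(1)\ge(\eps/c'_d)|\constraints'|$ constraints, where the division by $O(1)$ accounts for each violated constraint of $I'$ being shared by at most $O(1)$ gadgets.

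The final step is the running-time transfer, done exactly as in \Cref{prop:gcsp-to-pgcsp}. Since $N'=\poly(N)$, an $N'^{\gamma_0/\delta^{d-1}}$ algorithm with $\delta=\eps/c'_d$ runs in time $N^{\gamma_0 c''_d/\eps^{d-1}}$. Choosing $\gamma_0$ small contradicts \Cref{prop:gcsp-to-pgcsp} under Gap-ETH. Completeness is immediate from the satisfiability equivalence.
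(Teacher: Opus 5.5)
Your proposal follows the paper's proof essentially step for step: you reduce via Proposition~2.19 of Marx and Sidiropoulos~\cite{MarxS14}, use locality of the gadgets to get $|\constraints'|\le c_d|\constraints|$, argue that every violated constraint of $I$ forces a violated constraint of $I'$ nearby, and transfer the running time as in \Cref{prop:gcsp-to-pgcsp}. Your worry about non-local forcing does not arise, because the construction replaces each variable by a $12$-variable cycle of $\le$-constraints located near $(5a_1,5a_2,2a_3,\dots,2a_d)$, and your explicit division by $O(1)$ (since one violated constraint of $I'$ can witness several constraints of $I$) is slightly more careful than the paper's brief appeal to ``a charging argument similar to'' the previous one.
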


\begin{proof}
    Let $I=(V,\domain,\constraints)$ be an arbitrary instance of a $d$-dimensional \pcsp{} where the primal graph of $I$ is isomorphic to an induced subgraph of the grid $G_{k}^d$ for some positive integer $k$.
    Proposition~2.19 of~\cite{MarxS14} gives a polynomial-time reduction that produces an instance $I'=(V',\domain',\constraints')$ of \lecsp{} with subgraph isomorphism $\loc' : V' \rightarrow V(G_{ck}^d)$ between the primal graph of $I'$ and an induced subgraph of the grid $G_{ck}^d$ for some positive constant $c$, and the domain is $\domain'=[\Lambda]^d$ for $\Lambda = 2|\domain|+1$.
    
    Following the reasoning of~\cite{MarxS14}, for each solution $f$ of $I$, there exists a corresponding solution $\bar f$ of $I'$, and vice versa. That is, $I$ is satisfiable if and only if $I'$ is satisfiable.
    
    For each variable $\ve{a} \in V$, the construction introduces a cycle gadget consisting of exactly $12$ variables, $\ve{a}'_1, \ve{a}'_2, \dots, \ve{a}'_{12}$, all located within a constant-size neighborhood of the grid vertex $\ve{a}^\star = (5a_1, 5a_2, 2a_3, \dots, 2a_d) \in V(G_{ck}^d)$. That is $\|\loc'(\ve{a}'_i)-\ve{a}^\star\|_1 \le \zeta$, for each $1\le i\le 12$ and some constant $\zeta > 0$. Unary constraints are imposed on each of these $12$ variables, and binary inequality constraints are added along edges in the cycle. Each projection constraint in $I$ between adjacent variables $\ve{a}$ and $\ve{b}$ with $\loc(b) = \loc(a) + \dir{e}_i$ for some $1 \le i \le d$, is replaced by two new variables and four binary inequality constraints between the corresponding cycle gadgets in $I'$.

    Similar to the argument in the proof of \Cref{prop:gcsp-to-pgcsp}, the construction is local in the sense that each variable $\ve{a}$ in $I$, together with the constraints incident to it, is mapped to variables and constraints within a constant-size neighborhood in the primal graph of the instance $I'$. Therefore, there exists a constant $c_d = 2^{O(d)}$ such that $|\constraints| \le c_d \, |\constraints'|$. By the correspondence of assignments between the two instances, if an assignment $f$ violates a constraint in $I$, there is at least one corresponding constraint in $I'$ that is violated by the assignment $\bar f$. A charging argument similar to the one used in the proof of \Cref{prop:gcsp-to-pgcsp} concludes the proof.
\end{proof}

\subsection{From \pcsp{} to \tcsp{}}

Recall that the definition of \tcsp{} is identical to that of \lecsp{}, except that the inequality constraints are replaced the inequality ($\le$) constraints by equalities ($=$). As such, one can make the same replacement, changing $\le$ to $=$, in the proof of \Cref{thm:pgcsp-to-lecsp} and all the arguments stay valid. We state the analogue of \Cref{thm:pgcsp-to-lecsp} for \tcsp{}.\footnote{One can construct a \tcsp{} instance from $I$ in which each variable of $I$ is replaced by a cycle with $8$ variables, and each binary projective constraint of $I$ is replaced by a new variable together with two equality constraints. This would yield a construction by a constant factor smaller instance size. However, for our analysis, improving the constant factor does not strengthen the hardness result. Therefore, we proceed by borrowing the construction from~\cite{MarxS14}.}

\begin{thm}
    \label{thm:pgcsp-to-tcsp}
    Assuming Gap-ETH, for every integer $d \ge 2$ there exists a constant $\gamma > 0$ such that there is no algorithm with running time $N^{\gamma / \eps^{d-1}}$ that, given any $\eps>0$ and instance $I = (V,D,\constraints)$ of $d$-dimensional \tcsp{} of input size $N = \poly(|\constraints|,|D|)$, can distinguish between the following two cases:
    \begin{enumerate}
        \item[(i)] there exists an assignment satisfying all constraints, and
        \item[(ii)] every assignment violates at least an $\eps$-fraction of the constraints.
    \end{enumerate}
\end{thm}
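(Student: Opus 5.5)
We will reduce from \textsc{Max-}\pcsp{}, whose hardness is \Cref{prop:gcsp-to-pgcsp}, using the construction of \Cref{thm:pgcsp-to-lecsp} with every inequality constraint replaced by an equality. Let $I=(V,\domain,\constraints)$ be a $d$-dimensional \pcsp{} instance embedded in $G_k^d$. We build $I'=(V',[\Lambda]^d,\constraints')$ with $\Lambda=2|\domain|+1$ and embedding $\loc':V'\to V(G^d_{ck})$. Each $\ve a\in V$ gets a $12$-variable cycle gadget placed within $\ell_1$-distance $\zeta$ of $\ve a^\star$. Each projection constraint between $\ve a$ and $\ve b$ is replaced by two connector variables and four equality constraints joining the two gadgets. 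The unary constraints are the ones of Marx and Sidiropoulos, adapted as follows. In the $\le$ construction, the cycle of inequalities $x(i)\le y(i)$ around the gadget forces equality of the relevant coordinates. With $=$ this equality is imposed directly, so the same unary constraints (encoding the value of $\ve a$ in the coordinates of the gadget variables, and the projection $p$ in the connectors) still carry the values.

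\textbf{Completeness.} Given a solution $f$ of $I$, set every gadget variable of $\ve a$ to the encoding of $f(\ve a)$. Set the connectors to the encoding of the pair $(f(\ve a),p(f(\ve a)))$. Every equality and unary constraint then holds, exactly as in~\cite{MarxS14}.

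\textbf{Soundness.} Given any assignment $\bar f$ of $I'$, define $f(\ve a)$ by decoding from a fixed designated gadget variable $\ve a'_1$. Decoding an arbitrary default value is used if that variable violates its unary constraint. We then argue locally. Suppose that all constraints of the gadgets of $\ve a$ and $\ve b$, together with the four connector constraints for $C_{\ve a,\ve b}$, are satisfied. Then equality propagates the encoded value around each cycle and through the connectors, so $(f(\ve a),f(\ve b))$ satisfies $C_{\ve a,\ve b}$. The same argument gives that $f(\ve a)\in C_{\ve a}$ when the gadget of $\ve a$ is fully satisfied. Hence every constraint violated by $f$ in $I$ can be charged to a violated constraint of $I'$ lying in a constant-size neighbourhood of it. By locality, each constraint of $I'$ is charged by at most $c_d=2^{O(d)}$ constraints of $I$. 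We also have $|\constraints'|\le c_d|\constraints|$, since each variable and each binary constraint of $I$ contributes $O(1)$ constraints. So if every assignment of $I$ violates $\eps|\constraints|$ constraints, every assignment of $I'$ violates at least $(\eps/c_d^2)|\constraints'|$.

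\textbf{Running time.} We have $N'=\poly(N)$, because $|\domain'|=\Lambda^d=\poly(|\domain|)$ for fixed $d$. An $N'^{\gamma_0/\eps'^{d-1}}$-time distinguisher for \tcsp{} with $\eps'=\eps/c_d^2$ would therefore give an $N^{\gamma_0 c''_d/\eps^{d-1}}$-time distinguisher for \pcsp{}. This is the same computation as in \Cref{prop:gcsp-to-pgcsp}, and choosing $\gamma_0$ small contradicts \Cref{prop:gcsp-to-pgcsp}.

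\textbf{Main obstacle.} The delicate step is soundness, specifically checking that the gadget still correctly propagates values once $\le$ becomes $=$. In the $\le$ version, correctness relies on the cycle of inequalities together with the sandwiching encoding (values $x$ and $\Lambda-x$). We must verify that equality constraints along the same cycle still propagate consistently and that local correctness still implies satisfaction of $C_{\ve a,\ve b}$. This should only be easier, since equality implies $\le$ in both directions. If replacing $\le$ by $=$ creates some unsatisfiable case in the completeness direction, the fallback is the simpler $8$-cycle construction from the footnote. The charging analysis there is identical.
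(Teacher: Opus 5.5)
Your proposal follows the paper's own argument: reuse the Marx--Sidiropoulos $\le$-construction with every $\le$ replaced by $=$, and transfer the gap by the same locality-based charging as in \Cref{prop:gcsp-to-pgcsp} and \Cref{thm:pgcsp-to-lecsp}. The one fact both arguments rely on is that the canonical assignment built from a solution of $I$ satisfies every inequality with equality; this is also why the projection $p$ must sit in the unary constraints of the gadget variables rather than the connectors, since both of a connector's constraints compare the same coordinate $i$, so your ``pair encoding'' of the connectors should be read that way. Once that fact holds, soundness is inherited directly from the $\le$ instance, because every violated $\le$ constraint is a violated $=$ constraint on an identical constraint set, so no new charging argument is needed.
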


\section{Hardness of Approximation via \tcsp{} and \texorpdfstring{$\le$}{<=}-CSP} \label{sec:lower_bound}

As applications of \Cref{thm:main-gcsp}, we derive running-time lower bounds for approximation schemes for \MIS{}, \MIF{}, \MIM{}, \MDS{}, and \MPS{} on intersection graphs of unit balls (and unit cubes) in~$\R^d$. Our framework is based on reductions from either \textsc{Max-}\tcsp{} or \textsc{Max-}\lecsp{}. For \MIS{} and \MDS{}, we can directly employ existing gadget constructions~\cite{Marx06MDS,MarxS14}, whereas \MPS{} requires the more involved gadgetry recently developed for the $k$-center problem by Blank~\etal~\cite{geert}. For \MIF{} and \MIM{}, we present a new reduction inspired by the reduction used for \MIS{} in \cite{Marx07}. As a matching upper bound, in \Cref{sec:algorithms}, we discuss shifting based algorithms for these problems.

\subsection{\MIS{} Problem}

\begin{thm} \label{thm:mis}
    Let $d \ge 2$ be an integer. There is a real $\gamma > 0$ such that if the \MIS{} on intersection graphs of $n$ unit balls in $\mathbb{R}^d$ admits a PTAS with running time $n^{\gamma/\eps^{d-1}}$ then Gap-ETH~fails.
\end{thm}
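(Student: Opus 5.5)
We reduce from \textsc{Max-}\lecsp{} (\Cref{thm:pgcsp-to-lecsp}) using the unit-ball gadget of Marx and Sidiropoulos~\cite{MarxS14}. Given a \lecsp{} instance $I=(V,[\Lambda]^d,\constraints)$ with $\loc:V\to V(G_k^d)$, we replace each variable $\ve{a}$ by a cluster $\cB_{\ve a}$ of unit balls, one ball for each value $x\in C_{\ve a}\subseteq[\Lambda]^d$. The centre of the ball for $x$ is placed at $L\cdot\loc(\ve a)+\eta\,x$, where $L$ is a constant spacing and $\eta=\Theta(1/\Lambda)$ is a small perturbation. The parameters are tuned as in~\cite{MarxS14}: balls in the same cluster pairwise intersect, so the cluster is a clique. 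For $\loc(\ve b)=\loc(\ve a)+\dir e_i$, the balls of $x\in C_{\ve a}$ and $y\in C_{\ve b}$ are disjoint if and only if $x(i)\le y(i)$. Non-adjacent clusters are far apart, so they never interact. After scaling, the balls have unit size. The number of balls is $n\le |V|\Lambda^d=\poly(N)$, and a clique cover shows $\OPT\le|V|$, with equality exactly when $I$ is satisfiable.

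For the gap, suppose $S$ is an independent set with $|S|\ge(1-\eps')|V|$. Each cluster contributes at most one ball, so at most $\eps'|V|$ clusters are empty. We define an assignment $f$ by taking the value of the chosen ball in each nonempty cluster and an arbitrary value in each empty cluster. Then every unary constraint at a nonempty cluster holds, and every binary constraint between two nonempty clusters holds by disjointness. So every violated constraint touches an empty cluster. Since the primal graph has maximum degree at most $2d$, each empty cluster is incident to at most $2d+1$ constraints, and $f$ violates at most $(2d+1)\eps'|V|$ constraints. The primal graph is an induced grid subgraph and each variable has its own unary constraint, so $|V|\le|\constraints|\le(d+1)|V|$. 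Setting $\eps'=\eps/(2d+1)$, if every assignment violates at least $\eps|\constraints|\ge\eps|V|$ constraints, then $\OPT<(1-\eps')|V|$, while in the satisfiable case $\OPT=|V|$.

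A PTAS run with parameter $\eps'/2$ would therefore separate the two cases of \Cref{thm:pgcsp-to-lecsp}. Its running time is $n^{\gamma(4d+2)^{d-1}/\eps^{d-1}}=N^{O(\gamma)/\eps^{d-1}}$, since $n=\poly(N)$. Choosing $\gamma$ small enough contradicts \Cref{thm:pgcsp-to-lecsp}, and hence Gap-ETH. I expect the main obstacle to be the geometric verification of the gadget. We need all of the following at once: intersection within a cluster, intersection between adjacent clusters exactly when $x(i)>y(i)$, and no interaction between non-adjacent (including diagonal) clusters. Unit balls are open and the coordinate perturbations are small. The plan is to follow the computation of~\cite{MarxS14} with explicit constants. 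If the Euclidean distance breaks the clean ``exactly when'' threshold, we would try placing centres along $\pm\dir e_i$ offsets, using that $\sqrt{1+\delta}\approx 1+\delta/2$ to first order. Once the gadget is verified, the approximation accounting above is routine.
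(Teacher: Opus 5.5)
Your reduction and your gap accounting are the same as the paper's proof, and that part is correct. Both use the Marx--Sidiropoulos clique-per-variable gadget, the observation that every violated constraint is incident to an empty cluster, the loss factor $2d+1$, and a PTAS call with parameter $\Theta(\eps/d)$.

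What fails as written is the gadget itself, which is exactly the step you flagged. With perturbation $\eta=\Theta(1/\Lambda)$ the threshold property is false. Normalize the ball diameter to $1$. The spacing must then satisfy $L\ge 1$, since for $x=y$ the two balls must be disjoint. Now take $\loc(\ve{b})=\loc(\ve{a})+\dir{e}_1$, $x=(2,1,\dots,1)$ and $y=(1,\Lambda,1,\dots,1)$. Although $x(1)>y(1)$, the squared distance of the centres is at least $(1-\eta)^2+\eta^2(\Lambda-1)^2$. For $\eta=c/\Lambda$ this tends to $1+c^2>1$ as $\Lambda\to\infty$. So the two balls are disjoint and the violated constraint goes undetected. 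In general, the transverse coordinates contribute up to $(d-1)\eta^2(\Lambda-1)^2$ to the squared distance. This must be beaten by the first-order gain of about $2\eta$ in the $i$-th coordinate, which forces $\eta$ to be of order at most $1/(d\Lambda^2)$.

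The fix is the paper's choice: spacing $1$ and $\eta=\rho=1/(3d\Lambda^2)$. No $\pm\dir{e}_i$ workaround is then needed, and the verification you deferred takes three lines.
\begin{itemize}
\item Within a cluster, centres are at distance at most $\rho\sqrt{d}\,\Lambda<1$.
\item For adjacent clusters with $y(i)\ge x(i)$, the $i$-th coordinate alone gives distance at least $1$. If $y(i)<x(i)$, the squared distance is at most $(1-\rho)^2+(d-1)\rho^2\Lambda^2\le 1-2\rho+d\rho^2\Lambda^2=1-5\rho/3<1$.
\item Non-adjacent clusters satisfy $\|\loc(\ve{a})-\loc(\ve{b})\|\ge\sqrt{2}$, so their centres are at distance at least $\sqrt{2}-\rho\sqrt{d}\,\Lambda>1$.
\end{itemize}
With this parameter your argument coincides with the paper's proof.
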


\begin{proof}
    Let $d \ge 2$ be a fixed integer. Let $I=(V,\domain,\constraints)$ be an instance of $d$-dimensional \lecsp{} with domain $D = [\Lambda]^d$ for a positive integer $\Lambda$ and with subgraph isomorphism $\loc : V \rightarrow V(G_k^d)$ that maps the variables of $I$ to vertices of the grid $G_k^d$ for a positive integer $k$. Let $\totalconstraints$ be the total number of constraints (unary and binary). Following the construction in~\cite[Theorem~3.1]{MarxS14}, we build a set of unit balls in $\R^d$ from~$I$ as an instance of \MIS{}.
    In this subsection, we denote the size of \MIS{} of a set $\cB$ of unit balls in $\R^d$ by~$\OPT(\mathcal{B})$.
    
    \subparagraph{Construction.} Let $\rho = 1/ (3\cdot d\cdot \Lambda^2)$. For every variable $\ve a\in V$ and every $\ve x\in C_{\ve a}\subseteq [\Lambda]^d$, create a ball $B(\ve a,\ve x)$ centered at $\loc(\ve a)+\rho\,\ve x \in \R^d$. For each $\ve a\in V$ let $\mathcal{B}_{\ve a} = \{B(\ve a,\ve x):\ve x\in C_{\ve a}\}$. Let $\cB$ be the set of all created balls. That is $\mathcal{B} = \bigcup_{\ve{a}\in V} \mathcal{B}_\ve{a}$. The size of this instance is $|\mathcal{B}|=\sum_{\ve a\in V}|C_{\ve a}|\ \le\ |V| \Lambda^d = O(\poly(|\constraints|,|\domain|)$.

    Note that the balls in $\mathcal{B}_\ve{a}$ create a clique in the intersection graph of $\cB$. If $\ve{a} \in V$ and $\ve x, \ve y \in C_{\ve a}$, then by the choice of $\rho$, $B(\ve{a},\ve{x})$ and $B(\ve{a},\ve{y})$ intersect. Moreover cliques of two non-adjacent variable are disjoint. If $\ve{a} \in V$ and $\ve{b} \in V$ are two distinct variables that are not adjacent, then $\|\ve a-\ve b\|\ge \sqrt{2}$ and by the choice of $\rho$ centers of balls in $\mathcal{B}_\ve{a}$ and  $\mathcal{B}_\ve{b}$ are at distance at least $1$ from each other. The following claim from \cite{MarxS14} demonstrates that the intersection pattern of adjacent variables encodes the binary constraints.

    \begin{clm}[\cite{MarxS14}]\label{clm:mis-disjointness}
         Let $\ve{a},\ve{b}\in V$ such that $\loc(\ve b)=\loc(\ve a)+\dir{e}_i$ for some $i\in[d]$. Then for $\ve{x} \in C_\ve{a}$ and $\ve{y}\in C_\ve{b}$, the unit balls $B(\ve a,\ve x)$ and $B(\ve b,\ve y)$ are disjoint if and only if $x(i)\le y(i)$.
    \end{clm}
    
    \subparagraph{Exact Equivalence and Error Analysis.} For the exact equivalence, Marx and Sidiropoulos~\cite{MarxS14} prove that $\OPT(\mathcal{B}) = |V|$ if and only if $I$ is satisfiable. Assume now that every assignment of $I$ violates at least a $\delta$-fraction of the constraints. Let $S$ be an arbitrary independent set of balls. The set $S$ contains balls from $|S|$ distinct variables. Let $U \subseteq V$ be those variables. The set $S$ induces a partial assignment $f$ on $U$ by $f(\ve a) = \ve{x}$ whenever $B(\ve a,\ve x)\in S$. We extend $f$ to a total assignment $\bar f$ on all of $V$ arbitrarily, choosing for each $\ve{a} \in V \setminus U$ some value from $D$. Then all unary constraints for variables in $U$ are satisfied by construction. Every binary constraint whose two endpoints both lie in $U$ is satisfied. If two adjacent variables $\ve{a}$ and $\ve{b}$ are both in $U$, then $B(\ve a,\bar f(\ve a))$ and $B(\ve b,\bar f(\ve b))$ are both in the independent set $S$. Therefore they are disjoint. If $\bar f(\ve a) = (x_1, x_2,\dots, x_d) \in C_\ve{a}$ and $\bar f(\ve b) = (y_1, y_2,\dots, y_d)\in C_\ve{b}$ and $a$ and $b$ are adjacent with $\loc(b) = \loc(a) +\dir{e}_i$ (for some $i \in [d]$), then \Cref{clm:mis-disjointness} implies $x_i \le y_i$.

    Thus, the only constraints that may be violated are those incident to $V\setminus U$. Each variable participates in at most $2d$ binary constraints, and in exactly one unary constraint. Therefore the number of constraints that can possibly be violated is at most $(2d+1)(|V|-|S|)$. Since every assignment violates at least $\delta \, \totalconstraints$ constraints, we obtain $(2d+1)(|V|-|S|) \ge \delta \, \totalconstraints$. Therefore $|S| \le |V|-\frac{\delta \cdot \totalconstraints}{2d+1}$. In particular, as $\totalconstraints \ge |V|$, the size of maximum independent set of $\cB$ is at most
    \[
         \Bigl(1-\frac{\delta}{2d+1}\Bigr)|V|.
    \]

    \subparagraph{Running Time.} Assume toward a contradiction that there is a PTAS for \MIS{} of $n$ unit balls in $\R^d$ running in time $n^{\gamma_0/\eps^{\,d-1}}$ for every constant $\gamma_0>0$. We run this PTAS with $\eps = \frac{\delta}{4(2d+1)}$ on the instance $\cB$. If $I$ is satisfiable, the algorithm returns an independent set of size at least $(1-\eps)|V| > \bigl(1-\frac{3\delta}{4(2d+1)}\bigr)|V|$. If every assignment of $I$ violates a $\delta$-fraction of constraints then every independent set of $\cB$ has size at most $\bigl(1-\frac{\delta}{2d+1}\bigr)|V| < \bigl(1-\frac{3\delta}{4(2d+1)}\bigr)|V|$. Hence an algorithm with running time
    \[
    |\mathcal{B}|^{\gamma_0/\eps^{d-1}} = 
    |\mathcal{B}|^{O(1)\gamma_0/\delta^{d-1}} =
    \poly(k,\Lambda)^{\gamma_0 /\delta^{d-1}},
    \]
    can distinguish the satisfiable instances of \lecsp{} from the instances where every assignment violates $\delta$-fraction of the constraints. This contradicts \Cref{thm:pgcsp-to-lecsp} (for a suitable constant $\gamma_0$), and therefore such a PTAS cannot exist unless Gap-ETH fails.
\end{proof}

Following the construction of \cite[Theorem~3.2]{MarxS14} for unit cubes and a similar reasoning as in the proof of \Cref{thm:mis}, we obtain the analogous result for unit cubes in $\R^d$. We omit the details.

\begin{cor} \label{thm:mis-cube}
    Let $d \ge 2$ be an integer. There is a real $\gamma > 0$ such that if the \MIS{} on intersection graphs of $n$ unit cubes in $\mathbb{R}^d$ admits a PTAS with running time $n^{\gamma/\eps^{d-1}}$ then Gap-ETH~fails.
\end{cor}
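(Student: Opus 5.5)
The plan mirrors the proof of \Cref{thm:mis}, replacing balls with unit cubes and following the construction of \cite[Theorem~3.2]{MarxS14}. We start from a $d$-dimensional \lecsp{} instance $I=(V,\domain,\constraints)$ with $\domain=[\Lambda]^d$ and $\loc:V\to V(G_k^d)$. For each $\ve a\in V$ and $\ve x\in C_{\ve a}$ we create an open axis-parallel unit cube $Q(\ve a,\ve x)$ whose center is $\loc(\ve a)+\rho\,\ve x$. Here $\rho=1/(c\Lambda)$ for a suitable constant $c$, for instance $\rho<1/\Lambda$. The instance has $|\cB|\le |V|\Lambda^d=\poly(|\constraints|,|\domain|)$ cubes.

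The geometric claims to verify are the following.
\begin{enumerate}
\item[(a)] The cubes of a single variable pairwise intersect, because every center offset is strictly less than $1$ in each coordinate.
\item[(b)] Cubes of non-adjacent variables are disjoint. If two grid points are at $\ell_1$ distance at least $2$, then either some coordinate differs by at least $2$, or two coordinates each differ by $1$.
\item[(c)] For $\loc(\ve b)=\loc(\ve a)+\dir e_i$, the cubes $Q(\ve a,\ve x)$ and $Q(\ve b,\ve y)$ are disjoint if and only if $x(i)\le y(i)$.
\end{enumerate}

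Items (b) and (c) need care, and this is the main obstacle, since the $\ell_\infty$ geometry differs from the Euclidean one. Two open unit cubes are disjoint exactly when some coordinate of their centers differs by at least $1$.

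For (c), the centers differ in coordinate $i$ by $1+\rho(y(i)-x(i))$, which is at least $1$ iff $x(i)\le y(i)$. In every other coordinate $j$ they differ by only $\rho|y(j)-x(j)|<1$. So (c) holds.

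For (b), suppose coordinate $j$ of the grid points differs by exactly $1$ and the offsets reduce the center difference below $1$; the cubes may then intersect even though the variables are not adjacent. The case where some coordinate differs by at least $2$ is harmless, since the offsets shrink that difference by less than $1$. The problematic case is diagonal neighbours. To fix it I would follow Marx--Sidiropoulos and spread variables apart: scale $\loc$ by $2$ and insert intermediate copy variables with equality-type constraints, or equivalently check that their construction already handles this. The cleanest route is to apply the argument to a \lecsp{} instance whose variables are at pairwise $\ell_\infty$ distance at least $2$ unless they are adjacent. I expect \cite[Theorem~3.2]{MarxS14} to guarantee exactly this, and I will cite it as given.

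With these properties the counting argument of \Cref{thm:mis} carries over verbatim. A satisfying assignment gives $|V|$ pairwise disjoint cubes. Conversely, an independent set $S$ uses distinct variables and induces a partial assignment satisfying every constraint inside its support, so at most $(2d+1)(|V|-|S|)$ constraints are violated. Any modification made to repair (b) changes the constraint count only by a constant factor, which is absorbed into $\delta$. Hence in the NO case $\OPT(\cB)\le(1-\delta/(2d+1))|V|$.

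Running a hypothetical $n^{\gamma_0/\eps^{d-1}}$-time PTAS with $\eps=\delta/(4(2d+1))$ then distinguishes the two cases of \Cref{thm:pgcsp-to-lecsp} in time $\poly(N)^{O(\gamma_0/\delta^{d-1})}$. Choosing $\gamma_0$ small enough contradicts Gap-ETH.
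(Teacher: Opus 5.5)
Your items (a) and (c), the choice of $\rho$, and the counting and running-time parts are fine. Step (b), however, is a genuine gap, and it breaks \emph{completeness}, not soundness, so it cannot be absorbed into $\delta$. Take your placement with $d=2$, $\loc(\ve a)=(0,0)$, $\loc(\ve b)=(1,0)$, $\loc(\ve c)=(1,1)$ and $C_{\ve a}=\{(2,2)\}$, $C_{\ve b}=\{(2,1)\}$, $C_{\ve c}=\{(1,1)\}$. This instance is satisfiable. Yet the cube centres of $\ve a$ and $\ve c$ differ by $(1-\rho,1-\rho)$, so those two cubes intersect and $\OPT(\cB)=2<|V|$.

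Neither of your remedies fixes this.
\begin{itemize}
\item The property you hope to import from \cite{MarxS14} is that non-adjacent variables are at $\ell_\infty$-distance at least $2$. It fails in every instance where some variable has neighbours along two different axes, since those two neighbours are diagonal and non-adjacent. This happens, for example, at the corners of the $12$-cycle gadgets behind \Cref{thm:pgcsp-to-lecsp}. So neither \cite[Theorem~3.2]{MarxS14} nor any other result can supply it.
\item Scaling by $2$ with one intermediate per edge still leaves diagonal pairs, and copying does not rescue it. The intermediate is adjacent to both endpoints. If it copies, say, its lower endpoint, then the two incoming intermediates at $2\loc(\ve b)-\dir{e}_1$ and $2\loc(\ve b)-\dir{e}_2$ carry values of two unrelated variables. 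Their cubes can then intersect just as in the example above.
\end{itemize}

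A preprocessing that does work is spacing $3$.
\begin{itemize}
\item Put each original variable at $3\loc(\ve a)$. For every edge with $\loc(\ve b)=\loc(\ve a)+\dir{e}_i$, add two new variables at $3\loc(\ve a)+\dir{e}_i$ and $3\loc(\ve a)+2\dir{e}_i$ with unary constraint $[\Lambda]^d$. The original constraint becomes a chain of three $\le$-constraints on coordinate $i$, and the new primal graph is still an induced grid subgraph.
\item Every position has at most one coordinate $\not\equiv 0 \pmod 3$. Hence every non-adjacent pair at $\ell_\infty$-distance $1$ consists of two new variables adjacent to the same original variable.
\item In the YES case, give every new variable the value of its nearer original endpoint; the chain then reads $x_i\le x_i\le y_i\le y_i$. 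Such diagonal pairs carry identical values, so their centres differ by exactly $\pm\dir{e}_i\pm\dir{e}_j$ and the open cubes are disjoint.
\item For soundness, restrict an assignment of the new instance to the original variables. Each violated original constraint forces a violation in its own chain, or in its own unary constraint. The number of constraints grows only by a constant factor, so the gap survives up to a constant.
\end{itemize}
With this step, or with whatever mechanism the cube construction of \cite[Theorem~3.2]{MarxS14} uses, the rest of your argument goes through as in \Cref{thm:mis}. The paper itself simply defers the geometric part to that cube construction.
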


\subsection{\MIF{} Problem} \label{appendix:lower_bound_mif}

Next, inspired by the construction of~\cite{Marx07} for \MIS{} on unit disks via \mati{} in the plane, we develop a reduction from \tcsp{} to \MIF{} on intersection graphs of unit balls in $\R^d$, for any fixed $d \ge 2$. Moreover, for $d\ge3$, the same reduction extends to intersection graphs of unit cubes in~$\R^d$.

\begin{thm}\label{thm:mif-ball}
    Let $d \ge 2$ be an integer. There is a real $\gamma > 0$ such that if \MIF{} on intersection graphs of $n$ unit balls in $\R^d$ admits a PTAS with running time $n^{\gamma/\eps^{d-1}}$, then Gap-ETH fails.
\end{thm}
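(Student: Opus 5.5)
We reduce from \textsc{Max-}\tcsp{} (\Cref{thm:pgcsp-to-tcsp}), mimicking the structure of the proof of \Cref{thm:mis}. Given an instance $I=(V,[\Lambda]^d,\constraints)$ with $\loc:V\to V(G_k^d)$, we scale the grid by a constant factor $L$ (say $L=3$) so that gadgets of non-adjacent variables are far apart. Each variable $\ve a$ becomes a cluster of balls whose \MIF{}-optimal local configurations encode values $\ve x\in C_{\ve a}$. The ``forest'' requirement replaces ``independent'': a cluster should admit a local induced forest of size $t$ exactly when it commits to a single value, and at most $t-1$ otherwise.

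\emph{Gadget.} We use the standard trick that turns \MIF{} into a selection problem. For each $\ve a$ and each $\ve x\in C_{\ve a}$ we place a \emph{pair} of balls (a ``double'') centered very close to one another near $\loc(\ve a)\cdot L+\rho\ve x$, with $\rho$ small as in \Cref{thm:mis}. All balls of one cluster lie within diameter $<1$, so the cluster is a clique. An induced forest can take at most $2$ vertices from a clique, because a triangle is a cycle. Taking exactly the two balls of a single double yields one edge, and this is the intended choice of value $\ve x$.

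For adjacent $\ve a,\ve b$ with $\loc(\ve b)=\loc(\ve a)+\dir e_i$, we connect the clusters by a chain of intermediate cliques along direction $i$, as in Marx's planar construction with copies of the coordinate. The offsets are chosen so that the doubles selected in consecutive cliques are disjoint from each other iff their $i$-th coordinates agree. Equality is obtained by combining a $\le$ shift in one direction with a $\ge$ shift, using two intermediate cliques, each comparison being realized by \Cref{clm:mis-disjointness}-style geometry. If two selected doubles touch, they form a $K_4$ minus edges, and they contain a cycle since each double is an edge and the cross intersections create at least two cross edges. To guarantee this, we ensure that whenever balls from different doubles intersect, all four cross pairs intersect, by placing the two balls of a double at distance $\ll\rho$. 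Then any cross-intersection forces a $4$-cycle.

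\emph{Analysis.} With $t=2\cdot(\#\text{cliques})$, a satisfying assignment gives an induced forest of size $t$ by choosing the corresponding doubles. Conversely, an induced forest $S$ takes at most $2$ balls per clique. Call a clique \emph{good} if $S$ contains both balls of one double there. Then $|S|\le t-(\#\text{bad cliques})$. Reading off the values at the good variable cliques gives an assignment. A violated constraint in $I$ must be incident to a bad clique, since otherwise the chain carries consistent values and the forest condition enforces the $=$ relation. Each bad clique is incident to $O(d)$ constraints, and $t=\Theta(\totalconstraints)$. Hence a $\delta$-fraction violation forces $|S|\le(1-\delta/c_d)t$. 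The running-time calculation is then verbatim as in \Cref{thm:mis}, with $\eps=\delta/(4c_d)$.

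\emph{Main obstacle.} The hard step is the geometric verification that the chains realize equality while preserving the ``all-or-nothing'' intersection between doubles. A cycle can also close through longer paths mixing single balls from several cliques. Taking one ball from a clique while selecting doubles elsewhere must not be profitable beyond losing $1$. This needs the bad-clique accounting to hold when $S$ takes a single ball, or two balls from different doubles, which might avoid conflicts. The approach here is to argue that such cliques are already counted as bad, and that with at most one ball they cannot create a consistency claim. So the induced assignment only uses good cliques, and a violated constraint whose whole chain is good yields a cycle (contradiction). The remaining risk is whether ``two balls from different doubles'' could, by a geometric coincidence, be a cheaper way to bypass constraints. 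Treating such cliques as bad resolves this, since they still cost nothing extra but are charged.
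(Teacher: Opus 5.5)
There are two genuine gaps. First, your connector cannot enforce equality. With $L=3$, the link between $\ve a$ and $\ve b$ is a single path $\ve a - q_1 - q_2 - \ve b$ of cliques, and in perturbation geometry every link is a threshold relation. Near the next clique the unit sphere around a center is essentially a hyperplane. So for a fixed selected value on one side, the compatible values on the other side lie on one side of a threshold in a single coordinate (an up-set or a down-set). Composing such relations along a path again gives, for fixed $\ve x$, an up-set or down-set of values $z_i$, which for $1<x_i<\Lambda$ can never be $\{z_i=x_i\}$. Consequently:
\begin{itemize}
\item no single link can be disjoint ``iff the $i$-th coordinates agree'';
\item a path gives at best one inequality;
\item a $\le$-link in series with a $\ge$-link is vacuous (take the intermediate values extremal).
\end{itemize}
Equality needs a cycle through both clusters, i.e.\ two parallel chains of opposite orientation. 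This is what the paper's entry and exit chains between separate ports of a box gadget provide. With a point-like cluster perturbed by $\rho\ve x$ this is awkward: only contacts in directions $\pm\dir e_i$ read $x_i$ alone, while the two neighbours along axis $i$ need four chain ends. The clean repair is to reduce from \lecsp{} (\Cref{thm:pgcsp-to-lecsp}) with clusters at unit spacing as in \Cref{thm:mis}, where \Cref{clm:mis-disjointness} gives the $\le$-relation with no chains at all. In that geometry, however, equal values give tangent balls, so you must build in a margin of order $\rho$ (e.g.\ centre the clusters at $(1+\rho/2)\loc(\ve a)$). Otherwise an offset $\eta\ll\rho$ inside a double creates single spurious cross edges, which around grid cycles can destroy the intended forest.

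Second, the soundness count fails exactly where you flagged the obstacle. An induced forest may take two balls from \emph{different} doubles of one clique, and such a clique contributes $2$ balls. Calling it bad makes $|S|\le t-\#\text{bad}$ false. Calling it good requires saying which value it encodes and why its incident constraints hold. Between two such mixed cliques a single cross edge closes no cycle, so one conflicting pair really is permitted there. ``Charged but costing nothing'' is exactly what your counting cannot afford. The missing step is a readout lemma:
\begin{itemize}
\item Call a clique \emph{full} if it contributes two balls and \emph{bad} iff it contributes at most one, so that $|S|\le t-\#\text{bad}$ holds.
\item Between two full cliques there is no cross edge if one of them is a pure double: by all-or-nothing, a conflicting ball meets both balls of the double and closes a triangle.
\item If both are mixed there is at most one cross edge: two cross edges plus the two internal edges close a triangle or a $4$-cycle.
\item Contracting the internal edges of full cliques preserves acyclicity, so the links carrying a cross edge form a forest on the mixed cliques.
\item Root each tree and let every mixed clique pick, of its two values, one avoiding the unique forbidden pair with its parent's choice.
\end{itemize}
The resulting values satisfy every constraint between full cliques, so every violated constraint is incident to a bad clique, and the rest of your computation goes through. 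With both repairs, your doubles construction would be a genuinely different and much lighter proof than the paper's. The paper keeps Marx's cyclic box gadgets with entry and exit chains and adds a private ball and a forest ball at every clique position, with the forest balls linked into a global tree $\mathcal{F}$. This ensures at most one clique ball per position is selected, and any intersection between consecutive selections closes a cycle through~$\mathcal{F}$.
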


\begin{proof}
    We adapt the gadget construction of Marx~\cite{Marx07} given for \MIS{} on unit disks in the plane to the \MIF{} problem on unit balls in~$\R^d$, and reduce from \tcsp{} (\Cref{thm:pgcsp-to-tcsp}). Marx~\cite{Marx07} reduces \mati{} to \MIS{} on unit disks by representing each variable as a cyclic arrangement of cliques around a square, with per-value perturbations chosen so that independent selections around the cycle are forced to encode a single common value (\emph{independent selection}). Adjacent gadgets are tied by two connection cliques, one enforcing a $\le$-relation and the other a $\ge$-relation on the shared coordinate, so that any independent selection makes the two gadgets agree on that coordinate (\emph{coordinate consistency}). See \Cref{fig:marx-gadget}. We refer to~\cite{Marx07} for the full accounting; our reduction reproduces these two mechanisms in the induced-forest setting, where neither is enforced automatically.

    \begin{figure}
        \centering
        \includegraphics[width=0.75\linewidth]{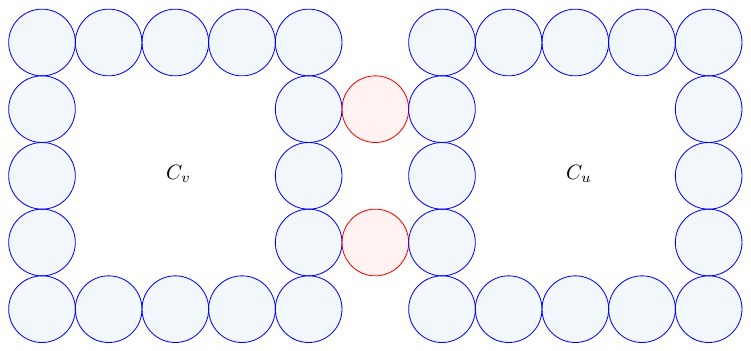}
        \caption{Marx's gadget~\cite{Marx07} for \MIS{} of unit disks via \mati{}. The blue and red disks represent clique positions along the variable gadget and the connector cliques, respectively.}
        \label{fig:marx-gadget}
    \end{figure}

    \subparagraph{Overview of the reduction for \MIF{}.}
    Our reduction preserves the two mechanisms of Marx's construction. Each variable $\ve{v}$ is represented by a \emph{variable gadget} $G_{\ve{v}}$: unit balls placed at clique positions arranged cyclically around the boundary of a box $Q_{\ve{v}}$, with per-value perturbations so that pairwise-disjoint selections around the cycle collapse to a single value of $C_{\ve{v}}$ (\emph{independent selection}, \Cref{clm:mif-cyclic}). Adjacent gadgets are tied by two connector chains, an entry chain and an exit chain, whose opposite perturbations enforce a $\ge$- and a $\le$-relation on the shared coordinate and hence, together, equality (\emph{coordinate consistency}, \Cref{clm:mif-connector}).

    In the independent-set setting these two properties come for free from cliques and disjointness. In the induced-forest setting they do not: an induced forest may contain two clique balls at the same position (as long as no triangle appears), and two intersecting balls at consecutive positions carry no penalty on their own. We restore both properties by adding two auxiliary balls at every clique position $p_j$: a \emph{private ball} $h_j(\ve{v})$, placed just outside $Q_{\ve{v}}$, and a \emph{forest ball} $t_j(\ve{v})$, placed just inside; each intersects exactly the clique balls at $p_j$. The private ball enforces the first property: together with two clique balls at $p_j$ it would close a triangle, so at most one clique ball per position can be selected alongside it (\Cref{clm:mif-private}). The forest balls enforce the second: we link all of them into a single \emph{global forest} $\mathcal{F}$, so that two selected intersecting balls at consecutive positions would create a cycle through the path joining their positions in $\mathcal{F}$ (\Cref{clm:mif-forest-forces}). Selections at consecutive positions are thus forced to be disjoint, which reinstates \Cref{clm:mif-cyclic} within a gadget and \Cref{clm:mif-connector} across connectors, exactly as in Marx's reduction. The global forest is assembled by linking the per-gadget trees along a fixed spanning tree $T$ of the primal graph. See \Cref{fig:mif.intersection.graph}. We give the construction first for $d = 2$ and then extend it to $d \ge 3$ using a Hamiltonian cycle on the cross-polytope of the box boundary.

    \subparagraph{Construction.}
    Let $d \ge 2$ be a fixed integer. As mentioned before, we use reduction from \textsc{Tiling}~CSP (\Cref{thm:pgcsp-to-tcsp}) to prove this theorem. Let $I=(V,\domain,\constraints)$ be an instance of $d$-dimensional \tcsp{} with $D = [\Lambda]^d$ for a positive integer $\Lambda$. Let $\loc : V \rightarrow V(G_k^d)$ be a subgraph isomorphism  between the primal graph $G=(V,E)$ of $I$ and an induced subgraph of the grid graph $G_k^d$ for a positive integer $k$. The set $\constraints$ contains the unary constraints $C_{\ve v} \subseteq [\Lambda]^d$, and for each pair of adjacent variables $\ve{a},\ve{b} \in V$ with $\loc(b) = \loc(a) + \dir{e}_i$ ($i \in [d]$), the implicit binary equality constraint $\ve{x}(i) = \ve{y}(i)$ is defined with $\ve{x}\in C_\ve{a}$ and $\ve{y} \in C_\ve{b}$. Let $\totalconstraints$ denote the total number of constraints (unary and binary). We construct a set $\mathcal{B}$ of unit balls in~$\R^d$ such that the~size of the \MIF{} in the intersection graph of~$\mathcal{B}$ encodes the optimal value of~$I$. We denote by $\textsc{mif}(\mathcal B)$ the maximum induced-forest size of the intersection graph of~$\mathcal B$. Let $\rho = 1/(400\,\Lambda^{d+2})$ and let $\iota : [\Lambda]^d \to \{0,1,\dots,\Lambda^d-1\}$ be an arbitrary bijection. Fix a spanning tree $T$ of $G$.\footnote{We assume that the primal graph $G$ is connected. Otherwise, the same construction can be applied to each connected component separately.} 

    \begin{figure}
        \centering
        \includegraphics[width=0.75\linewidth,page=9]{figs/mif2.pdf}
        \caption{The vertices corresponding to clique balls (gray regions), forest balls (depicted in green), private balls (depicted in pink) are represented in the intersection graph of the instance in the neighborhood of two vertically adjacent variable gadgets $G_\ve{u}$ and $G_\ve{v}$. The dashed edge in this figure represents a potential edge of the spanning tree $T$ used to obtain the global forest.}
        \label{fig:mif.intersection.graph}
    \end{figure}

    \medskip\textbf{Case $d = 2$}: For each variable $\ve{v} \in V$, we construct a \emph{variable gadget} $G_{\ve{v}}$. The gadget is organized around an axis-aligned square $Q_{\ve{v}}$ of side length~$11$ centered at $c_{\ve{v}} := (14-2\rho)\loc(\ve{v}) \in \R^2$. On each side of $Q_{\ve{v}}$ we place $12$ \emph{clique positions} at unit spacing along the boundary, with the four corner positions shared between adjacent sides, yielding $\tau = 4 \times 12 - 4 = 44$ distinct positions. Let $p_1, p_2, \dots, p_\tau$ denote these positions listed consecutively around the boundary. Among these positions, $8$ are designated as \emph{port positions} (two per side): the fifth and eighth positions on each side. We call the positions $p_j$ for $j \in X^\downarrow:=\{5,16,27,38\}$ \emph{entry} ports and positions $p_j$ for $j \in X^\uparrow:=\{8, 19, 30, 41\}$ \emph{exit} ports (the orange points in \Cref{fig:mif-gadget}). Unlike the non-port positions, which lie exactly on the boundary of $Q_{\ve{v}}$, each port position is shifted slightly inward into $Q_\ve{v}$. On each side of $Q_{\ve{v}}$ with inward unit normal $\vec{n}$, the port positions are shifted by $\rho\vec{n}$.

    \begin{figure}
        \centering
        \includegraphics[width=0.5\linewidth,page=5]{figs/mif2.pdf}
        \caption{The clique positions along a square of side length $11$ ($12$ points per side with unit spacing). The orange points are the entry and exit port positions.}
        \label{fig:mif-gadget}
    \end{figure}
    
    Note that by construction, on the facing side of the adjacent gadget $G_{\ve{u}}$ with $\loc(\ve{u}) = \loc(\ve{v})+\dir{e}_i$, the exit port is aligned with the entry port of $G_{\ve{v}}$, and the entry port is aligned with the exit port of $G_{\ve{v}}$. The distance between the corresponding port positions from adjacent gadgets is exactly $3$. See \Cref{fig:mif-connector}(a).

    Our collection of unit balls $\cB$ constructed from $I$ contains five structures: clique balls, connector balls, private balls, forest balls, and the global forest. Next, we explain each separately.

    \smallskip
    \noindent\emph{Clique balls.}
    For each corner clique position $p_j$ for $j \in \{1, 12, 23, 34\}$, and each $\ve{s} \in C_{\ve{v}}$, we place a unit disk $B_j(\ve{v}, \ve{s})$ centered at $p_j + \iota(\ve{s})\rho\, (\vec{d}_j + \vec{d}'_j)$ where $\vec{d}_j = p_{j+1} - p_j$ and $\vec{d}'_j=p_j - p_{j-1}$. For the remaining clique positions $p_j$, define $\vec{d}_j$ as the unit vector along the boundary~of~$Q_v$ at $p_j$ in the direction of increasing index and let $\vec{n}$ be the inward unit normal of~$Q_\ve{v}$ on the side where $p_j$ is located. At each non-port position $p_j$, $j \in [\tau] \setminus (X^\downarrow \cup X^\uparrow)$, for each $\ve{s} \in C_{\ve{v}}$, we place a unit disk $B_j(\ve{v}, \ve{s})$ centered at $p_j + \iota(\ve{s})\rho\, \vec{d}_j$. At each port clique position $p_j$, for each $\ve{s} = (x,y) \in C_{\ve{v}}$, we place a unit disk $B_j(\ve{v}, \ve{s})$ centered~at
    \begin{itemize}
        \item $p_j + \iota(\ve{s})\rho\, \vec{d}_j + y\rho\,\vec{n}$ if $j \in \{5,27\}$,
        \item $p_j + \iota(\ve{s})\rho\, \vec{d}_j + x\rho\,\vec{n}$ if $j \in \{16,38\}$,
        \item $p_j + \iota(\ve{s})\rho\, \vec{d}_j - y\rho\,\vec{n}$ if $j \in \{8,30\}$, or
        \item $p_j + \iota(\ve{s})\rho\, \vec{d}_j - x\rho\,\vec{n}$ if $j \in \{19,41\}$.
    \end{itemize}
    
    In general, on the side facing direction $\dir{e}_i$, the entry port uses a perturbation of the form $+\ve{s}(i)\rho\,\vec{n}$ and the exit port uses $-\ve{s}(i)\rho\,\vec{n}$. The disks at each clique position form a clique in the intersection graph of~$\mathcal{B}$. The disks from cliques at different positions are disjoint unless they are at consecutive positions. We show that the intersection pattern of unit disks in this construction has the \emph{cyclic ordering} property described in the following claim.

    \begin{clm}\label{clm:mif-cyclic}
        For any two consecutive clique positions $p_j$, $p_{j+1}$ along the cycle (where one may be a port position shifted inward by $\rho$) and any $\ve{s}, \ve{s}' \in C_{\ve{v}}$ with $r = \iota(\ve{s})$, $r' = \iota(\ve{s}')$, the unit disks $B_j(\ve{v},\ve{s})$ and $B_{j+1}(\ve{v},\ve{s}')$ intersect if and only if $r > r'$.
    \end{clm}
    \begin{proof}
        Let $\vec{d}$ be the common unit vector at $p_j$ and $p_{j+1}$ along the boundary of~$Q_v$ (both $p_j$ and $p_{j+1}$ are on the same side of $Q_\ve{v}$), and let $\vec{n}$ be the inward normal, so $\vec{d}$ and $\vec{n}$ are perpendicular. Before the inward shift of the port positions, $\vec{d} = p_{j+1} - p_j$ (unit spacing). The contribution of the port-specific perturbation in the direction of the normal $\vec{n}$ to the distance between consecutive balls on the same side is at most $3\Lambda^d\rho$ ($\pm \ve{s}(i)\rho\,\vec{n} + \rho\,\vec{n} + \Lambda^d\rho\,\vec{n}$), so the squared distance satisfies
        \[
            \Delta^2 \le \bigl(1 + (r'-r)\rho\bigr)^2 + 9\Lambda^{2d}\rho^2.
        \]
        If $r \le r'$, then $1 + (r'-r)\rho \ge 1$, giving $\Delta^2 \ge 1$ and the open disks are disjoint.
        If $r > r'$, then $r - r' \ge 1$ and the remaining terms are at most $9\Lambda^{2d}\rho^2$, so
        \[
            \Delta^2 \le (1-\rho)^2 + 9\Lambda^{2d}\rho^2 \le 1 - 2\rho + 10\Lambda^{2d}\rho^2.
        \]
        Since $\rho = 1/(400\,\Lambda^{d+2})$, we have $10\Lambda^{2d}\rho^2 \ll \rho$, hence $\Delta^2 < 1$. The disks intersect.
    \end{proof}

    By the cyclic ordering property (\Cref{clm:mif-cyclic}), if balls $B_j(\ve{v},\ve{s})$ and $B_{j+1}(\ve{v},\ve{s}')$ from consecutive clique positions are both selected and are disjoint, then $\iota(\ve{s}) \le \iota(\ve{s}')$. Applying this around the entire cycle of gadget $Q_\ve{v}$ yields $\iota(\ve{s}) \le \iota(\ve{s}') \le \cdots \le \iota(\ve{s})$, so if the selected balls on clique positions are pairwise disjoint, all selected balls correspond to the same value of~$C_\ve{v}$.

    \begin{figure}
        \centering
        \includegraphics[width=0.9\linewidth,page=4]{figs/mif2.pdf}
        \caption{(a) Illustration of the entry chain from entry port position at $p_{16}$ in $G_\ve{a}$ to the exit port position at $p'_{41}$ in $G_\ve{b}$ for adjacent variables $\ve{a}$ and $\ve{b}$. The distance of entry port $p_{16}$ in $G_\ve{a}$ from the exit port $p'_{41}$ in $G_\ve{b}$ is exactly $3$. Two intermediate connector positions are centered at $c_1$ and $c_2$ with unit spacing from the port positions. For entry chain the centers of the private balls are placed below the clique positions (centers $r_1$ and $r_2$). The two forest balls centered at $t_1$ and $t_2$ intersect each other and the clique balls from their corresponding clique position. The private ball of $p_{15}$ centered at $r_3$ is placed slight outward to intersect the forest balls of the first intermediate connector position. If the edge $\{\ve{a},\ve{b}\}$ appears in the spanning tree $T$ of $G$ then the private ball of the clique position $p'_{42}$ centered at $r_4$ will be positioned slightly outward so that it serves as a connector of $T_\ve{a}$ and $T_\ve{b}$ in the global forest. (b) Illustration of placement of a forest ball on the corner of a variable gadget. The forest ball of corner position~$p_{12}$ intersects the private ball of the clique position~$p_{13}$ centered at~$r_7$.}
        \label{fig:mif-connector}
    \end{figure}

    \begin{figure}
        \centering
        \includegraphics[width=0.85\linewidth,page=6]{figs/mif-old.pdf}
        \caption{Demonstration of relative coordinates of unit disks at the neighborhood of entry port at position $p_{16}$. The port position $p_{16}$ is placed slightly inward such that the private ball centered at $r_0$ and the forest ball centered at $t_0$ can both be placed inside the square $Q_\ve{v}$. Note that the coordinates are chosen in a way that the unit disk centered at $r_0$ only intersects the clique balls from the position $p_{16}$. Moreover the balls centered at $t_1,t_2,t_4,t_5,t_6,t_0,t_3$ form a subpath of the tree $T_\ve{v}$. The coordinates of $t_1$ are set such that it intersects all the clique balls in $p_{15}$ and it is tangent to the private ball centered at $r_0$ (but not intersecting it as the disks are open).}
        \label{fig:mif3}
    \end{figure}

    \smallskip
    \noindent\emph{Connectors.}
    For each edge $\{\ve{a}, \ve{b}\} \in E$ with $\loc(\ve{b}) = \loc(\ve{a}) + \dir{e}_i$ for some $i \in [d]$, we connect gadgets $G_{\ve{a}}$ and $G_{\ve{b}}$ through the two port pairs on the relevant sides. Let $p^{\ve{a}}_1, p^{\ve{a}}_2$ be the entry and exit port positions on the side of $G_{\ve{a}}$ facing $G_{\ve{b}}$, and $p^{\ve{b}}_1, p^{\ve{b}}_2$ be the corresponding exit and entry positions on the facing side of $G_{\ve{b}}$, respectively.
 
    We build a chain of $2$ intermediate cliques between each aligned pair: one \emph{entry chain} from $p^{\ve{a}}_1$ (entry of $G_{\ve{a}}$) to $p^{\ve{b}}_1$ (exit of $G_{\ve{b}}$), and one \emph{exit chain} from $p^{\ve{a}}_2$ (exit of $G_{\ve{a}}$) to $p^{\ve{b}}_2$ (entry of $G_{\ve{b}}$). Note that the entry chain of $G_\ve{a}$ is an exit chain of $G_\ve{b}$ and vice versa. Each intermediate clique has $\Lambda$~balls, one for each value $q \in [\Lambda]$ of the $i$-th coordinate.
    \begin{itemize}
        \item In the \textbf{entry chain}, for each $q \in [\Lambda]$, we place two balls centered at
        \[
            p^{\ve{a}}_1 + \dir{e}_i - q\rho\,\dir{e}_i
            \quad\text{and}\quad
            p^{\ve{a}}_1 + 2\dir{e}_i - q\rho\,\dir{e}_i.
        \]
        \item In the \textbf{exit chain}, for each $q \in [\Lambda]$, we place two balls centered at
        \[
            p^{\ve{a}}_2 + \dir{e}_i + q\rho\,\dir{e}_i
            \quad\text{and}\quad
            p^{\ve{a}}_2 + 2\dir{e}_i + q\rho\,\dir{e}_i.
        \]
    \end{itemize}
    The critical asymmetry is that the entry chain has perturbation $-q\rho$ (opposing the chain direction $\dir{e}_i$), while the exit chain has perturbation $+q\rho$ (along the chain direction).
 
    \begin{clm}\label{clm:mif-connector}
        In the entry chain between $G_{\ve{a}}$ and $G_{\ve{b}}$ ($\loc(\ve{b}) = \loc(\ve{a})+\dir{e}_i$), consecutive balls corresponding to the values $r_\ell \in [\Lambda]$ and $r_{\ell+1} \in [\Lambda]$ (ordered from $G_{\ve{a}}$ toward $G_{\ve{b}}$) are disjoint if and only if $r_\ell \ge r_{\ell+1}$.
        In the exit chain, consecutive balls corresponding to the values $r_\ell \in [\Lambda]$ and $r_{\ell+1} \in [\Lambda]$ (ordered from $G_{\ve{a}}$ toward $G_{\ve{b}}$) are disjoint if and only if $r_\ell \le r_{\ell+1}$.
    \end{clm}
    \begin{proof}
        Consider the entry chain. All balls (port and intermediate) have perturbation in the $-\dir{e}_i$ direction: the entry port of $G_{\ve{a}}$ has perturbation $-\ve{s}(i)\rho$ in the $\dir{e}_i$ direction for each $\ve{s} \in C_\ve{a}$ (from the $+\ve{s}(i)\rho\,\ve{n}$ term, since the inward normal $\ve{n}$ equals $-\dir{e}_i$ for the side facing $+\dir{e}_i$), and each intermediate ball has perturbation $-q\rho$ in the $\dir{e}_i$ direction. Similarly, the exit port of $G_{\ve{b}}$ (on the side with inward normal $+\dir{e}_i$) has perturbation $-\ve{t}(i)\rho$ in the $\dir{e}_i$ direction for each $\ve{t} \in C_\ve{b}$.

        Consecutive positions are separated by $1$ in the $\dir{e}_i$ direction. Writing $r_\ell$ for the value that encodes the perturbation at position $\ell$ and $r_{\ell+1}$ at position $\ell+1$, the distance in the $\dir{e}_i$ direction is $1 - (r_{\ell+1} - r_\ell)\rho$, and the perpendicular components contribute at most $O(\Lambda^{d} \rho)$ (from the tangent perturbation $\iota(\ve{s})\rho$ at port positions). Thus
        \[
            \Delta^2 = \bigl(1 - (r_{\ell+1}-r_\ell)\rho\bigr)^2 + c_0\Lambda^{2d}\rho^2.
        \]
        for a constant $c_0$. If $r_{\ell+1} \le r_\ell$, then $1 - (r_{\ell+1}-r_\ell)\rho \ge 1$, so $\Delta \ge 1$ and the balls are disjoint.
        If $r_{\ell+1} > r_\ell$, then $r_{\ell+1}-r_\ell \ge 1$, so $\Delta^2 \le (1-\rho)^2 + c_0\Lambda^{2d}\rho^2 < 1$ and the balls intersect ($\Lambda^{2d}\rho^2 \ll \rho)$. Similar argument proves the claim for the exit chain.
    \end{proof}

    \smallskip
    \noindent\emph{Private balls.}
    For each clique position $p_j$ in the gadget $G_v$, we place one unit disk $h_j(\ve{v})$, called the \emph{private ball}, with its center slightly exterior to the boundary square~$Q_v$. The port positions along the boundary of $Q_\ve{v}$ are the only case in which we need to place the private balls inside the square $Q_\ve{v}$ (the inward shift of each port position creates room for this). Similarly, for each clique at the connectors, we dedicate a private ball specific to that clique (see \Cref{fig:mif-connector}(a)). The private ball $h_j(\ve{v})$ intersects every clique ball at position $p_j$ and it does not intersect any other private ball or any clique ball at a different position (See \Cref{fig:mif.gadget}). A private ball may intersect a ``forest ball'' from an adjacent clique position or a connector position (this will be explained in detail in the following parts).

    \smallskip  
    \noindent\emph{Forest balls.}
    For each clique position $p_j$, we place one additional unit disk $t_j(\ve{v})$, called the \emph{forest ball}, with center slightly interior to the boundary square. The forest ball $t_j(\ve{v})$ intersects every clique ball at position $p_j$ and no clique ball at any other position. Moreover, we build a tree $T_\ve{v}$ in the intersection graph of $G_\ve{v}$ such that $T_\ve{v}$ covers the $\tau$ forest balls within $G_\ve{v}$. This is achieved by positioning the forest balls within $G_\ve{v}$ in a way that their mutual intersection pattern forms spanning paths in $T_\ve{v}$ and by routing extra balls along paths in the interior of the square. Since the square has ample interior space this can be done with $O(1)$ additional forest balls serving as intermediate vertices in $T_{\ve{v}}$ (See \Cref{fig:mif.gadget}). These additional balls intersect only their forest neighbors and no clique or private balls.
    
    There is one exception in the placement of forest balls. For the corner clique positions due to space limitations, we place the forest ball outside $Q_\ve{v}$. Let $b$ be the forest ball of the corner clique positioned at $p_j$. In this particular case, the private ball of the neighboring clique positioned at $p_{j+1}$ intersects the forest ball $b$ in order to include $b$ in the tree $T_\ve{v}$ (See \Cref{fig:mif-connector}(b)). As explained earlier, the private balls can be part of the tree $T_\ve{v}$ in order to connect the tree components near clique positions. Let $g$ denote the total number of forest balls in gadget $G_{\ve{v}}$.
 
    Each of the $2$ intermediate clique positions in every connector also receives one forest ball and one private ball, following the same rules as in the gadgets: the forest ball intersects all clique balls at its position and its neighboring forest ball, and the private ball intersects only the clique balls at its position (See \Cref{fig:mif-connector}(a)).


    \smallskip
    \noindent\emph{Global forest.}
    We connect the trees of all gadgets and connectors into a single tree $\mathcal{F}$ as follows. For each edge $\{\ve{a}, \ve{b}\} \in E$, the forest balls of each connector chain (entry and exit) between $G_{\ve{a}}$ and $G_{\ve{b}}$ intersect so that they induce a path on two vertices (e.g. forest balls $t_1$ and $t_2$ in \Cref{fig:mif-connector}(a)). Next, we connect these connector paths into the neighboring trees $T_\ve{a}$ and $T_\ve{b}$ in the gadgets $G_\ve{a}$ and $G_\ve{b}$ as follows.
 
    \begin{itemize}
        \item \textbf{Spanning tree edges} ($\{\ve{a},\ve{b}\} \in E(T)$): We designate one chain (say the entry chain) as the \emph{bridge}. The forest ball path of this chain connects $T_{\ve{a}}$ to $T_{\ve{b}}$, via neighboring private balls. The other chain (exit) is attached as a \emph{pendant subtree} to $T_{\ve{a}}$ without connecting to $T_{\ve{b}}$.
        \item \textbf{Non-spanning tree edges} ($\{\ve{a},\ve{b}\} \notin E(T)$): Both connector chains are attached as pendant subtrees to $T_{\ve{a}}$ via a private ball on $T_{\ve{a}}$.
    \end{itemize}

    \begin{figure}
        \centering
        \includegraphics[width=0.95\linewidth,page=1]{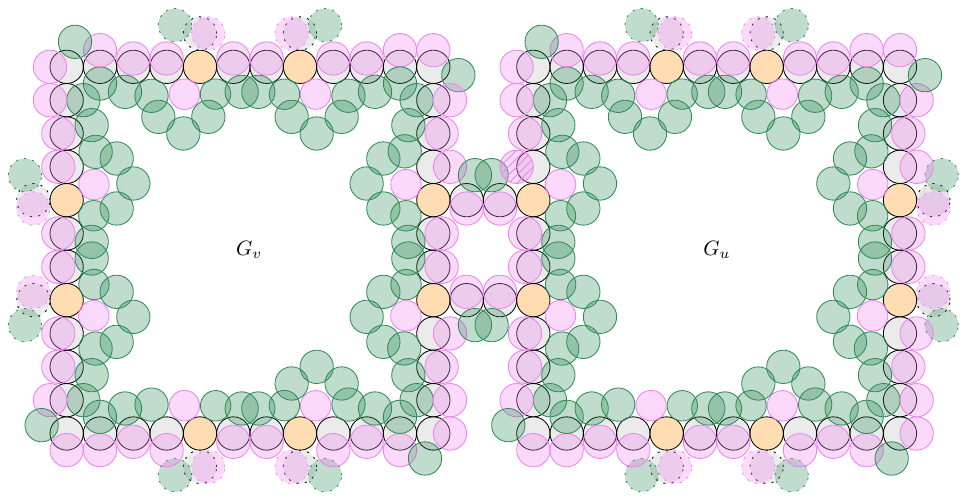}
        \caption{An instance of unit disks obtained from two adjacent variables of a $2$-dimensional \tcsp{} where their corresponding edge appears in the spanning tree $T$. The green, violet and orange colors represent the forest balls, private balls, port positions, respectively. The violet colored disk with parallel stripes represents the private ball in $G_u$ that ensures the connectivity of the global forest.}
        \label{fig:mif.gadget}
    \end{figure}
 
    Since $T$ spans $G$, using one bridge per edge in $T$ connects all gadget trees. The pendant subtrees do not create cycles in~$\mathcal{F}$. In order to merge the forest ball chain of a connector to $T_\ve{a}$, we use the private ball of the clique position in $G_\ve{a}$ adjacent to the entry/exit port of the chain along the sequence of positions. This can be achieved by placing the private ball in a way that intersects both the corresponding clique disks and the forest ball of the first intermediate clique along the connector chain. For example, in \Cref{fig:mif-connector}(a), the private ball centered at $r_3$ intersects both the clique at position $p_{15}$ and the forest ball centered at $t_1$. Moreover, if the edge $\{\ve{a},\ve{b}\} \in E(T)$, the disk centered at $r_4$ can be placed slightly outward the gadget $Q_\ve{b}$ such that it intersects the disk centered at $t_2$.

    \Cref{fig:mif3} gives relative coordinates of different types of balls in the neighborhood of an entry port position $p_{16}$. The coordinates of the remaining balls are either described earlier or can be obtained from those listed in \Cref{fig:mif3} by rotating and reflecting. \Cref{fig:mif.intersection.graph} shows the subgraph of the intersection graph of the instance constructed from $I$ at the neighborhood of two adjacent variables.

    \medskip\textbf{Case $d\ge 3$.}
    We now describe the construction in full for $d \ge 3$. The gadget for each variable $\ve{v} \in V$ is organized around an axis-aligned $d$-dimensional box $Q_{\ve{v}}$ of side length $30$ centered at $c_{\ve{v}} := 33\loc(\ve{v}) \in \IR^d$. The box $Q_{\ve{v}}$ has $2d$ facets. Let $\boldsymbol{\sigma} = \boldsymbolvec{\sigma}_1, \boldsymbolvec{\sigma}_2,\dots, \boldsymbolvec{\sigma}_{2d}$ be the outward normals of these facets where $\boldsymbolvec{\sigma}_i = \dir{e}_i$ for $1 \le i \le d$ and $\boldsymbolvec{\sigma}_i = -\dir{e}_{i - d}$ for $d+1 \le i \le 2d$. Note that the ordering of the normals in $\boldsymbolvec{\sigma}$ follows a Hamiltonian cycle of the cross-polytope whose vertices correspond to the facets of the box $Q_\ve{v}$. The facet with outward normal $\boldsymbolvec{\sigma}_i$ is denoted $F_i$. We place clique positions on the boundary of $Q_{\ve{v}}$ in a cyclic order determined by a Hamiltonian cycle of the \emph{cross-polytope} of $Q_\ve{v}$ similar to \cite{Kisfaludi-BakMZ19}, and define clique, private, forest, and connector balls exactly as in the $d=2$ case but with the geometry adapted to $d$ dimensions.
    
    \begin{figure}
        \centering
        \includegraphics[width=0.7\linewidth,page=2]{figs/ham-cycle-8.pdf}
        \caption{Demonstration of the Hamiltonian cycle $H_\ve{v}$ on the boundary of $Q_\ve{v}$ for~$d = 3$. The corner points $u'_i$ for $i \in [6]$ are represented by small squares. The solid points and squares represent points on the facets $F_1$, $F_2$, and $F_3$, and the empty points and squares represent the point on the facets~$F_4$, $F_5$, and $F_6$.}
        \label{fig:ham-cycle}
    \end{figure}

    \noindent\emph{Hamiltonian cycle on the boundary.}
    Recall that the ordering $\boldsymbolvec{\sigma}_1, \boldsymbolvec{\sigma}_2, \ldots, \boldsymbolvec{\sigma}_{2d}$ is a Hamiltonian cycle of the cross-polytope: consecutive normals $\boldsymbolvec{\sigma}_i$ and $\boldsymbolvec{\sigma}_{i+1}$ (indices modulo $2d$) are orthogonal.  We say that two points $\ve{x}, \ve{y} \in \IR^d$ are adjacent if there is an index $ i \in \{1,\dots,d\}$ such that $\ve y = \ve x + \dir{e}_i$ or $\ve x = \ve y + \dir{e}_i$. For each facet $F_i$, we place 12 consecutively ordered adjacent points $u_{i,1}, u_{i,2}, \dots, u_{i,12}$ on $F_i$. Again we call the point $u_{i,5}$ the \emph{entry port} and the point $u_{i,8}$ the \emph{exit port}. The coordinates are defined as follows with respect to the center of $Q_\ve{v}$:
    \begin{itemize}
        \item $u_{1,j} = c_v + 15\boldsymbolvec{\sigma}_1 + (j-6)\dir{e}_2$ for $j = 1,\ldots,12$,
        \item $u_{i,j} = c_v + 15\boldsymbolvec{\sigma}_i + (7-j)\dir{e}_1$ for $2 \le i \le d$ and $j = 1,\ldots,12$,
        \item $u_{d+1,j} = c_v + 15\boldsymbolvec{\sigma}_{d+1} + (7-j)\dir{e}_2$ for $j = 1,\ldots,12$,
        \item $u_{i,j} = c_v + 15\boldsymbolvec{\sigma}_i + (j-6)\dir{e}_1$ for $d+2 \le i \le 2d$ and $j = 1,\ldots,12$.
    \end{itemize}
    Observe that the entry and exit ports of facet $F_i$ are aligned with the exit and entry ports of the opposite facet $F_{i+d}$, for each $1 \le i \le d$.
 
    Next, for each $1 \le i \le 2d$, we connect $u_{i,12}$ to $u_{i+1,1}$ (indices modulo $2d$) by a rectilinear directed subpath: starting from $u_{i,12}$, we add constant many consecutively adjacent points in direction $\boldsymbolvec{\sigma}_{i+1}$ until reaching a point $u'_{i+1}$ on the facet $F_{i+1}$. Within each facet $F_i$, we connect $u'_i$ to $u_{i,1}$ using adjacent points via a rectilinear directed subpath $\pi_i$ of minimal length. We choose the path $\pi_i$ so that every point of $\pi_i$ is at $\ell_1$-distance at least $2$ from all other previously placed points on $F_i$ (except for $u_{i,1}$ and $u'_{i}$). Let $H_{\ve{v}}$ denote the resulting directed Hamiltonian cycle on the boundary of $Q_{\ve{v}}$, and let $p_1, p_2, \ldots, p_\tau$ be its points in cyclic order (See \Cref{fig:ham-cycle}). Since each facet contributes $O(1)$ points, we have $\tau = O(d)$.
    
    A directed subpath of $H_{\ve{v}}$ is called \emph{maximal} if all its points lie on the same axis-aligned line (differ in exactly one coordinate) and cannot be extended. Each maximal directed subpath on facet $F_i$ has a directional unit vector $\vec{a} = p_{j+1} - p_j$ (for two consecutive points $p_j$ and $p_{j+1}$ on the subpath) and $\boldsymbolvec{\sigma}_i$ as the outward facet normal. A position $p_j$ on $H_\ve{v}$ is called a corner position if the two vectors $\vec{d}_j = p_{j+1}-p_j$ and $\vec{d}'_j=p_j - p_{j-1}$ are perpendicular.
     
    Note that unlike the case $d=2$, for $d\ge3$ we do not need to shift the port positions inwards as there is enough space to position the private and forest balls.

    \noindent\emph{Clique balls for $d \ge 3$.}
    The clique positions are the $\tau$ points of $H_{\ve{v}}$. Among these, the points $u_{i,5}$ and $u_{i,8}$ on each facet $F_i$ are designated as the \emph{entry port} and \emph{exit port} of facet $F_i$, respectively.
 
    At each corner clique position, similar to the $d = 2$ case, for each $\ve{s} \in C_{\ve{v}}$, we place a unit ball $B_j(\ve{v}, \ve{s})$ centered at $p_j + \iota(\ve{s})\rho\, (\vec{d}_j + \vec{d}'_j)$ where $\vec{d}_j = p_{j+1} -p_j$ and $\vec{d}'_j=p_j - p_{j-1}$. At each non-port clique position $p_j$ other than corners, for each $\ve{s} \in C_{\ve{v}}$, we place a unit ball $B_j(\ve{v}, \ve{s})$ centered at $p_j + \iota(\ve{s})\rho\,\vec{a}_j$, where $\vec{a}_j$ is the directional unit vector of the maximal directed subpath containing~$p_j$.
    
    At each port position on facet $F_i$ ($1 \le i \le 2d$), let $\ell \in [d]$ be the coordinate such that $\boldsymbolvec{\sigma}_i = \pm\dir{e}_\ell$ (i.e., $\ell = i$ if $i \le d$ and $\ell = i - d$ if $i > d$). For each $\ve{s} \in C_{\ve{v}}$, the clique ball $B_j(\ve{v},\ve{s})$ is centered at
    \[
        p_j + \iota(\ve{s})\rho\,\vec{a}_j - \ve{s}[\ell]\,\rho\,\boldsymbolvec{\sigma}_i \qquad\text{(entry port)}, \qquad
        p_j + \iota(\ve{s})\rho\,\vec{a}_j + \ve{s}[\ell]\,\rho\,\boldsymbolvec{\sigma}_i \qquad\text{(exit port)}.
    \]
    That is, the entry port on the side facing direction $\boldsymbolvec{\sigma}_i$ uses perturbation $-\ve{s}[\ell]\rho\,\boldsymbolvec{\sigma}_i$ and the exit port uses $+\ve{s}[\ell]\rho\,\boldsymbolvec{\sigma}_i$. Note that the cyclic ordering property (\Cref{clm:mif-cyclic}) holds verbatim for $d\ge3$.

    \noindent\emph{Connectors for $d \ge 3$.}
    For each edge $\{\ve{a},\ve{b}\} \in E$ with $\loc(\ve{b}) = \loc(\ve{a})+\dir{e}_i$, the connector structure is the same as for $d = 2$. The gadget $G_{\ve{a}}$ has entry and exit ports on facet $F_i$ (with normal $\boldsymbolvec{\sigma}_i = \dir{e}_i$), aligned with the exit and entry ports on facet $F_{i+d}$ (with normal $\boldsymbolvec{\sigma}_{i+d} = -\dir{e}_i$) of $G_{\ve{b}}$. We build an entry chain and an exit chain between each aligned pair, with $2$ intermediate cliques of $\Lambda$ balls each. The entry chain uses perturbation $-q\rho\,\dir{e}_i$ (opposing the direction from $G_{\ve{a}}$ to $G_{\ve{b}}$), enforcing $\ve{s}(i) \ge \ve{t}(i)$ for $\ve{s} \in C_\ve{a}$ and $\ve{t} \in C_\ve{b}$. The exit chain uses perturbation $+q\rho\,\dir{e}_i$, enforcing $\ve{s}(i) \le \ve{t}(i)$. Observe that \Cref{clm:mif-connector} holds for the $d \ge 3$ case with the same proof.

    \noindent\emph{Private balls, forest balls, and global forest for $d \ge 3$.}
    At each clique position $p_j$ (in a gadget or connector), we place one private ball $h_j(\ve{v})$ and one forest ball $t_j(\ve{v})$, following the same rules as in $d = 2$:
    \begin{itemize}
        \item The private ball is placed slightly exterior to $Q_{\ve{v}}$ (i.e., placed from $p_j$ in the direction $+\boldsymbolvec{\sigma}_i$ for the facet $F_i$ containing $p_j$, or inward for port positions). It intersects all clique balls at $p_j$ and nothing else.
        \item The forest ball is placed slightly interior to $Q_{\ve{v}}$ (placed in the direction $-\boldsymbolvec{\sigma}_i$, toward the center). It intersects all clique balls at $p_j$ and no clique ball at any other position.
    \end{itemize}
    As in $d = 2$, the private ball and forest ball at the same position do not intersect each other (they are on opposite sides of the clique). Unlike the $d = 2$ case, at a corner position along the Hamiltonian cycle, there is enough room in the $d$-dimensional geometry to place private and forest balls without unwanted intersections.
    
    The forest balls within each gadget are connected into a tree $T_{\ve{v}}$ using $O(d)$ additional balls routed through the interior of $Q_{\ve{v}}$.
    Recall that we denote by $g = O(d)$ the total number of forest balls per variable gadget.
    
    The global forest $\mathcal{F}$ is constructed exactly as in $d = 2$: for tree edges of $T$, one connector chain serves as a bridge; for non-tree edges, both chains are pendant subtrees. This finishes the construction. We capture the structure of the induced forest by this construction in the next claims.

    \begin{clm}\label{clm:mif-private}
        Let $F$ be an induced forest of $\mathcal{B}$ that contains a private ball $h_j(\ve{v})$ or a forest ball $t_j(\ve{v})$ at position $p_j$. Then $F$ can contain at most one clique ball at position $p_j$.
    \end{clm}
    
    \begin{proof}
        Suppose $F$ contains two clique balls $B_j(\ve{v},\ve{s})$ and $B_j(\ve{v},\ve{s}')$ at the same position $p_j$. These two balls intersect. Both also intersect $h_j(\ve{v})$, which is in $F$ by assumption. This creates a triangle in the intersection graph of $F$, contradicting the assumption that $F$ is a forest. Replacing the role of $h_j(\ve{v})$ with $t_j(\ve{v})$ implies the same result.
    \end{proof}

    \begin{clm}\label{clm:mif-forest-forces}
        Let $F$ be an induced forest of $\mathcal{B}$ containing all forest balls of $\mathcal{F}$. Then clique balls in distinct clique positions are disjoint. Moreover, if $F$ is maximal, all private balls are in~$F$.
    \end{clm}
    \begin{proof}
        By construction, clique balls from non-adjacent clique positions do not intersect. For two adjacent clique positions $p_j$ and $p_{j+1}$ along $Q_\ve{v}$, if the forest $F$ contains a clique ball from each of the positions then they do not intersect. If they intersect then these two clique balls, together with the path between $t_j(\ve{v})$ and $t_{j+1}(\ve{v})$ create a cycle. A similar argument works for two adjacent connector positions. Note that the perturbations of the private balls are chosen so that they help realize the spanning tree~$T$ of the primal graph $G$ within the global forest $\mathcal{F}$. Therefore, a clique ball in a gadget adjacent to a connector ball in an entry (or exit) chain at that position does not intersect, as such an intersection would create a cycle in~$\mathcal{F}$. A private ball, other than a clique ball in its corresponding clique position, may intersect a forest ball either at a corner or along the connector chain. In the former case, the forest ball together with the private ball forms a pendant subtree of the global forest. In the latter case, the spanning tree $T$ of the primal graph $G$ guarantees that no new cycle is created. Therefore, any maximal induced forest that contains all forest balls must also include all private balls.
    \end{proof}

    \subparagraph{Exact Equivalence.}
    Assume $I$ has a solution $f : V \to [\Lambda]^d$ with $f(\ve{v}) \in C_{\ve{v}}$ for all $\ve{v}$.
    We construct an induced forest $F$ as follows.
    \begin{itemize}
        \item \textit{Forest balls:} include every forest ball.
        \item \textit{Private balls:} include every private ball.
        \item \textit{Clique representatives:} for each gadget $G_{\ve{v}}$ and each clique position $p_j$, include $B_j(\ve{v}, f(\ve{v}))$. For each intermediate clique in the connector between $\ve{a}$ and $\ve{b}$ with $\loc(\ve{b}) = \loc(\ve{a})+\dir{e}_i$, include the ball corresponding to the $i$-th coordinate of $f(\ve{a})$ and $f(\ve{b})$.
    \end{itemize}

    We verify that $F$ is a forest. The forest balls induce the forest $\mathcal{F}$. Since $f$ is satisfying, the cyclic ordering is respected everywhere: within each gadget, consecutive representatives correspond to the same domain value $\iota(f(\ve{v}))$ and hence do not intersect. Across each connector, the $i$-th coordinate values agree and hence consecutive representatives do not intersect by \Cref{clm:mif-connector}. Therefore, each representative $B_j(\ve{v}, f(\ve{v}))$ is adjacent in $F$ only to the forest ball $g_j(\ve{v})$, making it a pendant of~$\mathcal{F}$. Each private ball $h_j(\ve{v})$ is adjacent in $F$ only to $B_j(\ve{v}, f(\ve{v}))$, making it a pendant of the representative. Hence $F$ is a tree with pendants, which is a forest. The size of this forest is $|F| = (g + 2\tau)|V| + 12|E|$: there are $g$ forest balls, $\tau$ private balls, and $\tau$ clique balls per variable, and $4$ forest balls, $4$ private balls, and $4$ clique balls per binary constraint.

    Conversely, suppose $F^\star$ is an induced forest of size $(g + 2\tau)|V| + 12|E|$. We first establish an upper bound. Each private ball can be in $F^\star$ only if at most one clique ball at its position is also in $F^\star$ (by \Cref{clm:mif-private}, otherwise a triangle forms). Since each clique position contributes at most two balls to any maximum induced forest, the clique representatives together with private balls contribute at most $2\tau|V| + 8|E|$ balls. Each clique contributes either two clique balls or one clique ball and one private ball. The forest $\mathcal{F}$ has $g|V|+4|E|$ balls. Hence $|F^\star| \le (g + 2\tau)|V| + 12|E|$, and equality forces:
    \begin{itemize}
        \item (a) every forest ball is in $F^\star$,
        \item (b) every private ball is in $F^\star$, and
        \item (c) exactly one clique representative per clique position is in $F^\star$.
    \end{itemize}
    By (a), (c), and \Cref{clm:mif-forest-forces}, each representative is a pendant of $\mathcal{F}$ via its forest ball. Since $F^\star$ is a forest and $\mathcal{F}$ is a tree, if two consecutive representatives (along the gadget cycle or along a connector chain) were to intersect, they would create a cycle through the forest ball path connecting their positions in $\mathcal{F}$, contradicting the forest property. Therefore, consecutive representatives are disjoint.
    
    Within each gadget $G_{\ve{v}}$, the disjointness of consecutive representatives around the cycle forces all representatives to encode the same domain value $\ve{s}_{\ve{v}} \in C_{\ve{v}}$. Define $\bar f(\ve{v}) := \ve{s}_{\ve{v}}$.
    
    For each edge $\{\ve{a},\ve{b}\} \in E$ with $\loc(\ve{b}) = \loc(\ve{a})+\dir{e}_i$, consecutive representatives along the entry chain are disjoint. By the assignment $\bar f$, let $\bar f(\ve{a}) = (x_1,x_2,\dots,x_d)$ and $\bar f(\ve{b})=(y_1,y_2,y_3,\dots,y_d)$. By \Cref{clm:mif-connector} the $i$-th coordinate values satisfy $x_i \ge y_i$. Similarly, disjointness along the exit chain gives $x_i \le y_i$. Together, $x_i = y_i$, so every binary equality constraint is satisfied. The unary constraints are satisfied by construction since $\ve{s}_{\ve{v}} \in C_{\ve{v}}$. Hence $\bar f$ is a solution for $I$.

    Therefore, we conclude that $I$ has a solution $f$ if and only if the constructed instance $\cB$ has an induced forest of size at $(g + 2\tau)|V| + 12|E|$.
    
    \subparagraph{Error Analysis.} Let $\Gamma = (g + 2\tau)|V| + 12|E|$. Assume that every assignment to $I$ violates at least a $\delta$-fraction of the constraints. We show that every induced forest of the intersection graph of $\mathcal{B}$ has size at most $(1 - c'_d\,\delta)\,\Gamma$ for a constant $c'_d > 0$ depending only on $d$. Let $F^\star$ be a maximum induced forest of the intersection graph of $\mathcal{B}$. A variable gadget $G_{\ve{v}}$ is called \emph{full} if $F^\star$ contains all forest and private balls of $G_{\ve{v}}$ and all incident connectors, together with exactly one clique representative per clique position (within the gadget and its connectors), and all these representatives are mutually consistent: within the gadget, they correspond to a single value $\ve{s} \in C_{\ve{v}}$, and in each connector in direction $i$, they correspond to the value $\ve{s}(i)$. Otherwise we call that gadget \emph{partial}. Accordingly, we call a variable $v$ full (partial) if the gadget $G_v$ is full (resp. partial). The number of balls in each full gadget is stated in the following observation via simple counting.

    \begin{obs} \label{obs:full-gadget-size}
        Each full variable $\ve{v} \in V$ contributes $g + 2\tau + 12|E_\ve{v}|$ balls to $F^\star$ where $E_\ve{v}$ is the set of edges incident to $\ve{v}$ and each partial variable $\ve{u}$ contributes strictly less than $g + 2\tau + 12|E_\ve{u}|$.
    \end{obs}

    The following observation is a direct consequence of \Cref{clm:mif-connector}.
 
    \begin{obs}\label{clm:mif-consistency}
        If $G_{\ve{a}}$ and $G_{\ve{b}}$ are both full with $\loc(\ve{b}) = \loc(\ve{a})+\dir{e}_i$, and $G_{\ve{a}}$~encodes~$\ve{s}$, $G_{\ve{b}}$~encodes~$\ve{t}$, then $\ve{s}(i) = \ve{t}(i)$.
    \end{obs}
 
    \begin{clm}\label{clm:mif-nonfull}
        If $|F^\star| = \Gamma - \ell$, then the number of partial gadgets is at most $2\ell$.
    \end{clm}
    \begin{proof}
        By \Cref{obs:full-gadget-size}, every partial gadget causes a loss of
        at least one ball compared to a full gadget. If this lost ball is located on a connector of a partial gadget, then it makes two gadgets partial. Hence, if $t$ is the number of partial gadgets, then $|F^\star| \le \Gamma - t/2$. Since $|F^\star| = \Gamma - \ell$, it follows that $t \le 2\ell$.
    \end{proof}
 
    We define an assignment to the variables of $I$ based on $F^\star$ as follows. For each full gadget $G_{\ve{v}}$ encoding $\ve{s} \in C_{\ve{v}}$, set $f(\ve{v}) = \ve{s}$; for each partial gadget, assign $f(\ve{v})$ arbitrarily from $C_{\ve{v}}$. By \Cref{clm:mif-cyclic} and \Cref{clm:mif-consistency}, adjacent full gadgets satisfy their binary constraint. The only constraints that may be violated are those incident to partial gadgets. Each variable participates in at most $2d$ binary constraints and one unary constraint, so each partial gadget is incident to at most $2d + 1$ constraints. Since every assignment to $I$ violates at least $\delta\, \totalconstraints$ constraints, we obtain $(2d+1)t \ge \delta\,\totalconstraints$ and $t \ge \delta\,\totalconstraints/(2d+1)$. By \Cref{clm:mif-nonfull},
    \[
    |F^\star| \le \Gamma - \frac{\delta\, \totalconstraints}{2(2d+1)}.
    \]
    Since $\Gamma = \Theta(|V|)$ and $\totalconstraints = \Theta(|V|)$, there exists a constant $c'_d > 0$ such that $\frac{\delta\,\totalconstraints}{2(2d+1)} \ge c'_d\,\delta\, \Gamma$, and therefore $|F^\star| \le (1 - c'_d\,\delta)\,\Gamma$, Since this holds for every induced forest $F^\star$, we conclude $\OPT(\mathcal{B}) \le (1 - c'_d\,\delta)\,\Gamma$.

    \subparagraph{Running Time.} 
    Set $\eps = c'_d \delta / 2$. Assume toward a contradiction that there is a PTAS for \MIF{} of $n$ unit balls in $\R^d$ running in time $n^{\gamma_0/\eps^{d-1}}$ for some constant $\gamma_0 > 0$. Run this PTAS on $\mathcal{B}$ with parameter $\eps$.
 
    If $I$ is satisfiable, the optimal forest has size $\Gamma = (g + 2\tau)|V| + 12|E|$, so the PTAS returns a forest of size at least $(1-\eps)\Gamma$.
    If every assignment to $I$ violates $\ge \delta \totalconstraints$ constraints, every induced forest has size at most $(1-2\eps)\Gamma < (1-\eps)\Gamma$. Hence the PTAS distinguishes the two cases in time
    $|\mathcal{B}|^{\gamma_0/\eps^{d-1}} = \poly(|\constraints|,|D|)^{\gamma_0/\delta^{d-1}}$. Choosing $\gamma_0$ small enough contradicts \Cref{thm:pgcsp-to-tcsp}, completing the proof.
\end{proof}

For $d\ge3$, by replacing each unit ball with a unit cube in the construction of \Cref{thm:mif-ball} and following the same arguments we obtain an analogues result for unit cubes in $\R^d$. Note that our construction of a \MIF{} instance for unit balls in $\R^2$ does not generalize to \MIF{} on unit cubes in its current form due to space limitations in the construction.

\begin{cor} \label{thm:mif-cube}
    Let $d \ge 3$ be an integer. There is a real $\gamma > 0$ such that if the \MIF{} of $n$ unit cubes in $\R^d$ admits a PTAS with running time $n^{\gamma/\eps^{d-1}}$ then Gap-ETH fails.
\end{cor}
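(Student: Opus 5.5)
The plan is to reuse the reduction from the proof of \Cref{thm:mif-ball} verbatim, with unit balls replaced by unit cubes (open axis-aligned hypercubes of side length~$1$) centered at exactly the same points. We again reduce from \textsc{Max-}\tcsp{} via \Cref{thm:pgcsp-to-tcsp}. The error analysis (\Cref{obs:full-gadget-size}, \Cref{clm:mif-consistency}, \Cref{clm:mif-nonfull}) and the running-time argument use only the combinatorial structure of the intersection graph. The same holds for the counting bound $\Gamma=(g+2\tau)|V|+12|E|$ and for \Cref{clm:mif-private} and \Cref{clm:mif-forest-forces}. Hence it suffices to show that the intersection graph of the cube instance is identical to that of the ball instance. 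Everything else then transfers word for word, including the choice $\eps=c'_d\delta/2$ and the contradiction with \Cref{thm:pgcsp-to-tcsp}.

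The key observation is that two open unit cubes intersect if and only if their centers are at $\ell_\infty$-distance strictly less than~$1$. So the task is to verify the $\ell_\infty$ analogues of \Cref{clm:mif-cyclic} and \Cref{clm:mif-connector}.

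\textbf{Consecutive positions.} Consider two consecutive clique positions along a maximal directed subpath of $H_{\ve v}$, or two consecutive positions on a connector chain. Their centers differ by $1+(r'-r)\rho$ (respectively $1-(r_{\ell+1}-r_\ell)\rho$) in the direction of travel. In every other coordinate they differ by at most $O(\Lambda^d\rho)\ll 1$. The $\ell_\infty$-distance is therefore exactly the along-path component, and the ``if and only if'' follows even more cleanly than in the Euclidean case, with no quadratic error term.

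\textbf{Corner positions.} Here the perturbation is $\iota(\ve s)\rho(\vec d_j+\vec d'_j)$. Its component along the relevant direction again yields the order relation, while the orthogonal component is small.

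\textbf{Non-consecutive positions.} Clique cubes at non-consecutive positions must have centers at $\ell_\infty$-distance at least~$1$. Along straight segments this holds since positions are spaced at least $2$ apart. The delicate case is two positions that are diagonal neighbours around a corner of $H_{\ve v}$: they are at $\ell_1$-distance $2$ but $\ell_\infty$-distance only~$1$, so after perturbation they could intersect. I would handle this by using the rule that the paths $\pi_i$ keep $\ell_1$-distance at least $2$ from other points. At corners I would check that the perturbation directions only push such pairs apart, or I would refine the grid by a factor of $2$ (placing positions at spacing $1$ but scaling the box), so that all non-consecutive pairs have $\ell_\infty$-distance at least $2$.

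\textbf{Auxiliary cubes.} The private, forest and routing cubes inside $Q_{\ve v}$ must realize the same adjacencies. Since $d\ge3$ and the box $Q_{\ve v}$ has side length $30$, there is ample room. For each clique position I would place its private cube at offset $(1-\rho')\boldsymbolvec\sigma_i$ and its forest cube at offset $-(1-\rho')\boldsymbolvec\sigma_i$, where $\rho'$ is chosen larger than all clique perturbations, so that each intersects every clique cube at that position and nothing else. The connecting tree $T_{\ve v}$ is then routed through the interior along axis-parallel unit-spaced chains, which is trivially realizable with cubes.

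The main obstacle is this verification near corners of the Hamiltonian cycle and at the junctions where private cubes attach connector chains or the bridge of the spanning tree $T$. There, $\ell_\infty$-balls are larger than Euclidean balls of the same diameter and create diagonal contacts that balls avoid. This is exactly why the planar construction fails for squares. My first attempt would be to exploit the extra dimension: move the offending private and forest cubes off the facet plane in a third coordinate direction, available because $d\ge3$, so that diagonal neighbours are separated by at least~$1$ in that coordinate. I would then recheck the finite list of local configurations in the analogue of \Cref{fig:mif3}.
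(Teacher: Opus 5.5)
Your outline is the paper's own: its proof of this corollary is only the remark that the construction and analysis of \Cref{thm:mif-ball} go through with cubes in place of balls. You correctly locate the crux in the new diagonal ($\ell_\infty$) contacts, and your treatment of consecutive positions is right. However, that check fails, and your patches cannot repair it; the obstruction below applies equally to a literal reading of the paper's remark.

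Your explicit placement already breaks. Take a straight run $p_{j-1},p_j,p_{j+1}$ with direction $\vec a$ (e.g.\ $u_{i,2},u_{i,3},u_{i,4}$). The cube centred at $p_j+(1-\rho')\vec\sigma_i$ and the clique cube at $p_{j-1}$, centred at $p_j-(1-\iota(\ve s)\rho)\vec a$, are at $\ell_\infty$-distance $\max(1-\rho',\,1-\iota(\ve s)\rho)<1$ whenever $\iota(\ve s)\ge1$.

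More fundamentally, no auxiliary cube can be placed at all. The clique cubes at $p_{j-1},p_j,p_{j+1}$ agree in every coordinate except the $\vec a$-coordinate. So a cube meeting all clique cubes at $p_j$ is within distance $<1$ of all three families in every other coordinate. Avoiding those at $p_{j-1}$ and at $p_{j+1}$ then forces its $\vec a$-offset $x_a$ (relative to $p_j$) to satisfy $\max_{\ve s}\iota(\ve s)\rho\le x_a\le\min_{\ve s}\iota(\ve s)\rho$, which is impossible once $|C_{\ve v}|\ge2$. Your fallback of shifting the private or forest cube in a third coordinate does not help, since the obstruction lives entirely in the $\vec a$-coordinate. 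Concretely, take a satisfying value $r=\iota(f(\ve v))$ with $r\rho\neq x_a$. Then the forest cube $t_j$ (or private cube $h_j$) meets the representative at $p_{j-1}$ or $p_{j+1}$. Together with the path in $\mathcal{F}$ between the corresponding forest cubes, this closes a cycle, so the completeness direction fails.

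There is a second bad contact you did not list. The first intermediate cube of an entry chain is centred at $p^{\ve a}_1+(1-q\rho)\vec e_i$ with $q\ge1$. The clique cubes at the gadget position preceding the entry port are centred at $p^{\ve a}_1-(1-\iota(\ve s)\rho)\vec a$. These are at $\ell_\infty$-distance $\max(1-q\rho,\,1-\iota(\ve s)\rho)<1$, which again closes a cycle through $\mathcal{F}$ in the intended solution.

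By contrast, the turn pairs you single out are not where it breaks: at an isolated turn the outgoing coordinate differs by $1+\iota(\ve s')\rho\ge1$. Your alternative of making all non-consecutive pairs $\ell_\infty$-distance $\ge2$ is impossible at a turn with unit steps anyway.

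The missing idea is that the clique cubes themselves must be relocated, using the extra dimension available for $d\ge3$, and the connector junctions re-laid out likewise. For instance, stagger consecutive positions by an amount in $(0,1)$ along a coordinate orthogonal to both $\vec a$ and the facet normal, so that an auxiliary cube pushed further along that coordinate reaches only its own position. Only for such a modified construction can you then argue that the combinatorial analysis of \Cref{thm:mif-ball} carries over unchanged.
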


\subsection{\MIM{} Problem}

We now turn to the \MIM{} problem. Recall that \MIM{} for a set $\mathcal B$ of geometric objects asks for a maximum induced matching in the intersection graph of~$\mathcal B$. An induced matching is a set $M$ of edges such that the subgraph of $G$ induced by the endpoints of~$M$ is $1$-regular. We prove the equivalent of \Cref{thm:mif-ball} for the \MIM{} problem. The construction is identical to the \MIF{} construction, except that we remove the forest balls.

\begin{thm} \label{thm:mim-ball}
    Let $d \ge 2$ be an integer. There is a real $\gamma > 0$ such that if \MIM{} on intersection graphs of $n$ unit balls in $\R^d$ admits a PTAS with running time $n^{\gamma/\eps^{d-1}}$ then Gap-ETH fails.
\end{thm}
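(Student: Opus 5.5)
We reduce from \textsc{Max-}\tcsp{} (\Cref{thm:pgcsp-to-tcsp}) using the construction of \Cref{thm:mif-ball} with all forest balls (and the auxiliary interior balls building the trees $T_{\ve v}$) removed. What remains are clique balls at every gadget and connector position, together with one private ball $h_j$ per clique position. As in the \MIF{} proof, the private balls must be repositioned so that each $h_j$ intersects exactly the clique balls at its own position and nothing else. The only private balls that touched other balls were those used to attach corner forest balls and connector chains, and these contacts disappear once the forest balls are gone. In the satisfiable case, the natural solution takes, at every clique position $p$, the matching edge $\{B_p(\text{representative}), h_p\}$. We set $\Gamma' := \tau|V| + 4|E|$ for the number of clique positions, which is the target number of matching edges.

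\textbf{Structural step.} Let $M$ be any induced matching. Every matched clique ball is adjacent to $h_p$ and to all other clique balls at $p$. Consequently, at most one clique ball per position is an endpoint of $M$: two such balls would be adjacent, so they would have to be matched to each other, and then $h_p$ cannot be used. Matched balls at consecutive positions must also be disjoint, since $M$ is induced and the two balls lie in different edges. We will bound $|M|$ by the number of clique positions whose private ball or some clique ball is used. An edge consisting of two clique balls at the same position, or at consecutive positions, still uses up at most the positions involved. Each edge covers at least one position, and each position is covered by at most one edge, so $|M| \le \Gamma'$.

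\textbf{Analysis.} Call a position \emph{good} if $M$ contains an edge $\{B_p(\cdot), h_p\}$. Then $|M| \le \#\text{good} + (\text{non-good positions})/1$, and we refine this so that each non-good position costs at least $1/2$ relative to $\Gamma'$. An edge covering two positions, or a position left uncovered, loses at least one edge per two positions. If all positions of $G_{\ve v}$ and its incident connectors are good, the chosen clique balls are pairwise disjoint at consecutive positions. \Cref{clm:mif-cyclic} then forces a single value $\ve s\in C_{\ve v}$, and \Cref{clm:mif-connector} forces agreement on the shared coordinate. This mirrors the notion of a \emph{full} gadget and \Cref{clm:mif-nonfull}. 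If $|M| = \Gamma' - \ell$, at most $O(\ell)$ positions are non-good, and each non-good position makes at most two gadgets partial. We assign values from full gadgets and assign arbitrary values elsewhere. As in the \MIF{} error analysis, only the $\le 2d+1$ constraints incident to partial gadgets can be violated, giving $|M| \le (1-c'_d\delta)\Gamma'$ with $\Gamma' = \Theta(|V|) = \Theta(\totalconstraints)$. The running-time argument is then copied verbatim, with $\eps = c'_d\delta/2$.

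\textbf{Main obstacle.} The delicate part is the geometric verification that, once the forest balls are removed, each private ball still intersects only its own clique. This is especially tight at the inward-shifted ports for $d=2$ and at corners. A second concern is that no edge between balls at distinct positions can yield more than one matching edge per position. I would first check the coordinates of \Cref{fig:mif3} with the $t$-balls deleted. If the corner private ball still touches a neighbor, I would move it outward by $\Theta(1)$ along the outward normal, where space is free.
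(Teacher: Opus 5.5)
Your plan follows the paper's proof. You use the \MIF{} instance with all forest balls deleted, the bound $|M|\le\tau|V|+4|E|$, full versus partial gadgets, and the $t/2$ loss count. Your worry about private balls is unnecessary. In the \MIF{} construction a private ball's only contacts outside its own clique are forest balls, so after the deletion each $h_p$ meets exactly its own clique, which is what the paper uses.

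The step that fails as written is ``each non-good position costs at least $1/2$''. As you note in your structural step, two clique balls $B_p(\ve s),B_p(\ve s')$ at the same position can be matched to each other. Such an edge covers only $p$, so $p$ is non-good, yet $|M|$ loses nothing there. Hence ``$|M|=\Gamma'-\ell$ implies $O(\ell)$ non-good positions'' does not follow from your accounting. The same case also undermines the exact direction $|M|=\Gamma'$. The paper's proof skips this case too: it reads off the form $\{h_j,B_j\}$ for every edge from its one-edge-per-position claim, and that claim does not provide it.

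A short exchange argument repairs this. Suppose $\{B_p(\ve s),B_p(\ve s')\}\in M$. Then $h_p$ is not an endpoint of $M$, since only one edge touches $p$. Its only neighbours are clique balls at $p$, and the only endpoints at $p$ are $B_p(\ve s)$ and $B_p(\ve s')$. Replacing this edge by $\{B_p(\ve s),h_p\}$ therefore gives an induced matching of the same size, because deleting $B_p(\ve s')$ from the endpoint set cannot create a forbidden adjacency.

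After doing this everywhere, every position covered by a single-position edge is good. Then $\Gamma'-|M|$ equals the number of uncovered positions plus the number of edges spanning two positions, so at most $2\ell$ positions are non-good. The rest of your argument, including the exact case, then goes through.
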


\begin{proof}
    Let $d \ge 2$ be a fixed integer. We reduce from \tcsp{} (\Cref{thm:pgcsp-to-tcsp}). Let $I=(V,\domain,\constraints)$ be an instance of $d$-dimensional \tcsp{} with $D=[\Lambda]^d$ and primal graph $G=(V,E)$ isomorphic to an induced subgraph of $G_k^d$. Let $\totalconstraints$ be the total number of constraints. We construct a set $\mathcal{B}$ of unit balls in $\R^d$ such that the maximum induced matching in the intersection graph of~$\mathcal{B}$ encodes the optimal value of~$I$. We denote by $\OPT(\mathcal B)$ the size of a maximum induced matching in the intersection graph of~$\mathcal B$.
 
    \subparagraph{Construction.}
    The construction is obtained from the \MIF{} construction of \Cref{thm:mif-ball} by \emph{removing all forest balls} (i.e., the entire global forest~$\mathcal{F}$). We retain the \emph{clique balls} at each clique position (in gadgets and connectors) with one private ball per clique position
    The geometric placement of all retained balls is identical to the \MIF{} construction. Each variable $\ve{v} \in V$ has a gadget $G_{\ve{v}}$ organized around a box $Q_{\ve{v}}$ with $\tau = O(d)$ clique positions arranged in a Hamiltonian cycle on the boundary, and adjacent gadgets are connected via entry and exit chains with $2$ intermediate clique positions each. At each clique position~$p_j$, the private ball $h_j(\ve{v})$ intersects every clique ball at $p_j$ and no other ball in~$\mathcal{B}$. The total number of balls is polynomial in the input size~$|\mathcal{B}| = \poly(|\constraints|,|\domain|)$.

    We list the key structural observations. The proofs are almost identical to those in \MIF{}.
    
    \begin{clm}\label{clm:mim-structure}
        Any induced matching $M$ in the intersection graph of~$\mathcal{B}$ uses at most one edge incident to any clique position.
    \end{clm}
 
    \begin{proof}
        Suppose $M$ contains two edges involving balls at position~$p_j$. Since the private ball $h_j(\ve{v})$ has degree~$1$ in~$M$ (it can be matched to at most one ball), at least one edge of~$M$ at~$p_j$ must be between two clique balls $B_j(\ve{v},\ve{s})$ and $B_j(\ve{v},\ve{s}')$. But both of these intersect $h_j(\ve{v})$, and if $h_j(\ve{v})$ is also an endpoint of another edge in~$M$, then $h_j(\ve{v})$ is adjacent to an endpoint of a different edge in~$M$, violating the induced matching property. If $h_j(\ve{v})$ is not in~$M$ at all, then the edge $\{B_j(\ve{v},\ve{s}), B_j(\ve{v},\ve{s}')\}$ is in $M$, and any second edge at~$p_j$ would share a vertex or be adjacent, again a contradiction. Hence at most one matching edge can use any clique position.
    \end{proof}

    \begin{clm}[Cyclic ordering]\label{clm:mim-cyclic}
        Let $M$ be an induced matching in which positions $p_j$ and $p_{j+1}$ (consecutive along the Hamiltonian cycle of a gadget $G_{\ve{v}}$) each contribute an edge: one incident to $B_j(\ve{v},\ve{s})$ and the other incident to $B_{j+1}(\ve{v},\ve{s}')$. Then $B_j(\ve{v},\ve{s})$ and $B_{j+1}(\ve{v},\ve{s}')$ are non-adjacent, and $\iota(\ve{s}) \le \iota(\ve{s}')$.
    \end{clm}

    \begin{proof}
        Since $M$ is an induced matching, no edge of the intersection graph connects an endpoint of one matching edge to an endpoint of another. In particular, $B_j(\ve{v},\ve{s})$ and $B_{j+1}(\ve{v},\ve{s}')$ must be non-adjacent. By \Cref{clm:mif-cyclic}, this is equivalent to $\iota(\ve{s}) \le \iota(\ve{s}')$.
    \end{proof}

    Applying \Cref{clm:mim-cyclic} around the entire cycle of $G_{\ve{v}}$ yields $\iota(\ve{s}) \le \iota(\ve{s}') \le \cdots \le \iota(\ve{s})$, forcing all matched clique balls within a single gadget to encode the same domain value. The same argument applied to the connector chains (using \Cref{clm:mif-connector}) enforces equality of the relevant coordinate across adjacent gadgets, exactly as in the \MIF{} proof.

    \begin{clm}[Connector consistency]\label{clm:mim-connector}
        Let $M$ be an induced matching in which every clique position in the entry and exit chains between adjacent gadgets $G_{\ve{a}}$ and $G_{\ve{b}}$ ($\loc(\ve{b}) = \loc(\ve{a})+\dir{e}_i$) contributes an edge, where $G_{\ve a}$ encodes $\ve s$ and $G_{\ve b}$ encodes $\ve t$. Then $\ve{s}(i) = \ve{t}(i)$.
    \end{clm}
 
    \begin{proof}
        Along the entry chain, consecutive matched clique balls must be non-adjacent (by the induced matching property). By \Cref{clm:mif-connector}, this enforces $\ve{s}(i) \ge q_1 \ge q_2 \ge \ve{t}(i)$. Along the exit chain, the same argument gives $\ve{s}(i) \le \ve{t}(i)$. Together, $\ve{s}(i) = \ve{t}(i)$.
    \end{proof}

    \subparagraph{Exact Equivalence.}
    We show that $\OPT(\mathcal{B}) = \tau|V| + 4|E|$ if and only if $I$ is satisfiable. First, note that by \Cref{clm:mim-structure}, any induced matching contains at most one edge per clique position. The total number of clique positions is $\tau|V| + 4|E|$ ($\tau$ per gadget and $4$ per edge of~$G$). Hence $\OPT(\mathcal{B}) \le \tau|V| + 4|E|$. Assume $I$ has a solution $f : V \to [\Lambda]^d$ with $f(\ve{v}) \in C_{\ve{v}}$ for all~$\ve{v}$. We construct an induced matching $M$ as follows: for each clique position~$p_j$ (in gadgets and connectors), include the edge $\{h_j(\ve{v}), B_j(\ve{v}, f(\ve{v}))\}$, where $f(\ve{v})$ determines the selected clique ball.
 
    Each ball appears in at most one edge by \Cref{clm:mim-structure}. It remains to check that no edge of the intersection graph connects endpoints of distinct matching edges. Since $f$ is satisfying, the cyclic ordering is respected: within each gadget, consecutive matched clique balls correspond to the same value $\iota(f(\ve{v}))$ and are non-adjacent by \Cref{clm:mim-cyclic}. Across each connector in direction~$\dir{e}_i$, consecutive matched clique balls encode the same $i$-th coordinate value and are non-adjacent by \Cref{clm:mim-connector}. Private balls at different positions do not intersect each other and do not intersect clique balls at other positions. Hence $M$ is an induced matching of size $|M| = \tau|V| + 4|E|$.

    Assume $\OPT(\mathcal{B}) = \tau|V| + 4|E|$ and let $M$ be an induced matching of this size. Since the upper bound is $\tau|V| + 4|E|$ and $|M|$ achieves it, every clique position contributes exactly one edge to~$M$. By \Cref{clm:mim-structure}, each such edge is of the form $\{h_j(\ve{v}), B_j(\ve{v},\ve{s}_j)\}$ for some $\ve{s}_j \in C_{\ve{v}}$. Within each gadget $G_{\ve{v}}$, every pair of consecutive clique positions along the Hamiltonian cycle contributes an edge. By \Cref{clm:mim-cyclic}, all matched clique balls within $G_{\ve{v}}$ correspond to the same domain value $\ve{s}_{\ve{v}} \in C_{\ve{v}}$. Define $\bar f(\ve{v}) := \ve{s}_{\ve{v}}$. 

    For each edge $\{\ve{a},\ve{b}\} \in E$ with $\loc(\ve{b}) = \loc(\ve{a})+\dir{e}_i$, every position in the entry and exit chains contributes an edge. Therefore, \Cref{clm:mim-connector} implies that the binary constraints between adjacent variables are satisfied. The unary constraints are satisfied since $\ve{s}_{\ve{v}} \in C_{\ve{v}}$ by construction. Hence $\bar f$ is a solution for~$I$.

    \subparagraph{Error Analysis.}
    Assume every assignment to $I$ violates at least $\delta \totalconstraints$ constraints. We show that every induced matching in the constructed graph has size at most $(1 - c'_d\,\delta)(\tau|V| + 4|E|)$ for a constant $c'_d > 0$.
 
    Let $M$ be a maximum induced matching. We call a gadget $G_{\ve{v}}$ \emph{full} if $M$ contains an edge at every clique position of~$G_{\ve{v}}$ and all incident connectors, and all matched clique balls within the gadget correspond to a single domain value $\ve{s} \in C_{\ve{v}}$, with connector intermediates encoding $\ve{s}(i)$ consistently. Each full gadget contributes $\tau + 4|E_{\ve{v}}|$ edges to $M$ (where $E_{\ve{v}}$ is the set of edges
    incident to $\ve{v}$), and each partial gadget contributes strictly fewer.
 
    By \Cref{clm:mim-connector}, adjacent full gadgets satisfy their binary constraint. Only constraints incident to partial gadgets may be violated. Each partial gadget is incident to at most $2d+1$ constraints. If $t$ is the number of partial gadgets, then $(2d+1)t \ge \delta \totalconstraints$, giving $t \ge \delta \totalconstraints/(2d+1)$.
 
    Each partial gadget causes a loss of at least one edge compared to a full gadget (a lost edge on a connector may be shared between two partial gadgets, so the loss is at least $t/2$). Hence
    \[
        |M| \le \tau|V| + 4|E| - \frac{\delta \totalconstraints}{2(2d+1)}.
    \]
    Since $\tau|V| + 4|E| = \Theta(|V|)$ and $\totalconstraints = \Theta(|V|)$, there exists $c'_d > 0$ such that $|M| \le (1 - c'_d\,\delta)(\tau|V| + 4|E|)$.

    \subparagraph{Running Time.}
    The running-time argument is identical to that in the proof of \Cref{thm:mif-ball}. Set $\eps = c'_d\delta/2$. A PTAS for \MIM{} running in time $n^{\gamma_0/\eps^{d-1}}$ would distinguish the two cases with running time $\poly(|\constraints|,|\domain|)^{O(1)/\delta^{d-1}}$, contradicting \Cref{thm:pgcsp-to-tcsp} for sufficiently small~$\gamma_0$.
\end{proof}

We extent our construction to unit cubes. Note that unlike in the \MIF{} reduction, forest balls are not present. For the reduction using unit cubes, one can alternate the placement of \emph{private cubes} in the exterior and interior of $Q_{\ve{v}}$ for the adjacent clique positions so that the private cubes remain pairwise disjoint. Therefore, the construction also works for unit squares in $\R^2$. We state this in the following corollary.
 
\begin{cor} \label{thm:mim-cube}
    Let $d \ge 2$ be an integer. There is a real $\gamma > 0$ such that if the \MIM{} on intersection graphs of $n$ unit cubes in $\R^d$ admits a PTAS with running time $n^{\gamma/\eps^{d-1}}$ then Gap-ETH fails.
\end{cor}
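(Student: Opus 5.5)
The plan is to reuse the \MIM{} construction of \Cref{thm:mim-ball} verbatim, replacing every unit ball by an axis-parallel unit cube. The reduction is again from \textsc{Max-}\tcsp{} (\Cref{thm:pgcsp-to-tcsp}). For $d\ge 3$ the geometry is handled as in \Cref{thm:mif-cube}, since there is ample room around every clique position. For all $d\ge 2$, including $d=2$, the private cubes are placed alternately outside and inside $Q_{\ve v}$ along the Hamiltonian cycle $H_{\ve v}$ and along each connector chain. Consecutive private cubes then lie on opposite sides of the clique line and are pairwise disjoint. There are no forest objects to compete for the interior space, which is what made the \MIF{} construction fail for squares.

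The first step is to re-verify the two metric claims for cubes, where two unit cubes intersect iff their centers differ by less than $1$ in every coordinate ($\ell_\infty$-distance $<1$).

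For \Cref{clm:mif-cyclic}, consecutive positions $p_j,p_{j+1}$ differ by exactly $1$ along $\vec d$. The $\vec d$-component of the center difference is $1+(r'-r)\rho$, and every other component is $O(\Lambda^d\rho)<1$. So the cubes intersect iff $1+(r'-r)\rho<1$, i.e.\ iff $r>r'$. This is exactly the ball statement, and the argument is even simpler.

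At corners the perturbation is $\iota(\ve s)\rho(\vec d_j+\vec d'_j)$. Against the successor, the relevant component is again $1+(r'-r)\rho$ in direction $\vec d_j$. Against the predecessor, the same holds in direction $\vec d'_j$. Hence the corner case works too.

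\Cref{clm:mif-connector} transfers identically, since along a chain only the $\dir e_i$-component matters. Port positions that are perturbed by $\pm\ve s(i)\rho\vec n$ need care when the next position is at a corner: the normal component is $O(\Lambda^d\rho)$ and irrelevant for the $\ell_\infty$ test. I would also double-check that non-consecutive positions have some coordinate gap at least $2-O(\Lambda^d\rho)$. This follows from the $\ell_1$-distance-$2$ spacing in the choice of $H_{\ve v}$ together with all positions lying on the integer lattice shifted by $c_{\ve v}$.

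The second step is to place each private cube $h_j$ so that it intersects all clique cubes at $p_j$ and nothing else. I put its center at $p_j\pm(1-\rho')\vec n$ with $\rho'$ slightly larger than the maximal normal perturbation. It then meets every clique cube at $p_j$ because all tangential offsets are below $1$. It is at $\ell_\infty$-distance at least $1$ from the clique cubes at $p_{j\pm1}$ in the tangential coordinate, up to perturbations; this is the delicate point. I would therefore offset $h_j$ tangentially by a small amount toward the lower-index side, or shrink the clique range, so that the tangential separation from neighboring clique cubes is at least $1$.

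Alternation guarantees that $h_j$ and $h_{j+1}$ are separated by about $2$ in the normal coordinate. At corners of $H_{\ve v}$, and where a chain leaves a port, I choose the side of $h_j$ so that the two private cubes sharing the corner neighborhood still differ by at least $1$ in some coordinate. For $d\ge3$ a spare coordinate is always available. For $d=2$ this requires a small case check at the four corners and at the ports, possibly by adjusting parity via the position count per side.

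Once the structure claims hold, \Cref{clm:mim-structure}, \Cref{clm:mim-cyclic}, \Cref{clm:mim-connector}, the exact equivalence, the full/partial gadget counting and the running-time argument of \Cref{thm:mim-ball} apply word for word. They use only the intersection graph, not the shape of the objects. I expect the main obstacle to be the $d=2$ private-square placement at corners and ports, where tangential $\ell_\infty$-separation from neighboring cliques is tight.
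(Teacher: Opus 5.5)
Your plan follows the paper's route (swap balls for cubes, alternate private cubes outside and inside $Q_{\ve v}$), and your $\ell_\infty$ re-check of \Cref{clm:mif-cyclic} (corners included) and of \Cref{clm:mif-connector} is correct. However, the step you flag as delicate cannot be carried out the way you propose, and this is a genuine gap.

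Take $p_j$ in the interior of a straight piece of $H_{\ve v}$ with tangent $\vec d$, and let $t$ be the $\vec d$-coordinate of $h_j$ relative to $p_j$. The clique cubes at $p_{j-1},p_j,p_{j+1}$ agree in every coordinate except along $\vec d$. There the cube for value $r=\iota(\ve s)$ sits at $-1+r\rho$, $r\rho$ and $1+r\rho$ respectively. Cube intersection is a coordinatewise test, so once $h_j$ meets a cube at $p_j$, its non-tangential differences to the cubes at $p_{j\pm1}$ are also $<1$. Hence the normal offset, the inside/outside alternation, and spare coordinates for $d\ge3$ are all irrelevant: the test is one-dimensional.

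Completeness requires $h_j$ to meet $B_j(\ve v,\ve s)$ but miss $B_{j+1}(\ve v,\ve s)$ and $B_{j-1}(\ve v,\ve s)$, all with the same value. This gives $t\le r\rho$ and $t\ge r\rho$, i.e.\ $t=r\rho$, so no placement fixed in advance works for all values in $C_{\ve v}$. With your placement ($t=0$), $h_j$ meets $B_{j-1}(\ve v,\ve s)$ whenever $\iota(\ve s)\ge1$. Shifting toward the lower-index side makes it meet all of clique $p_{j-1}$, and shifting the other way hits cubes of clique $p_{j+1}$. Shrinking the range helps only when the range is $0$, which destroys the encoding. So $\{h_j,B_j(\ve v,f(\ve v))\}_j$ is not an induced matching, and the completeness direction breaks.

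Essentially the same obstruction arises along the collinear connector chains. It is also why appealing to \Cref{thm:mif-cube} does not help: its private and forest cubes face the same problem, and room is not the issue. The paper's one-line alternation remark only secures private--private disjointness, so it does not close this step either.

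There is a second error. Your argument that non-consecutive positions have a coordinate gap near $2$ fails for diagonal lattice pairs, which are at $\ell_\infty$-distance exactly $1$ before perturbation. You also never compare connector cubes with gadget cubes. Concretely, take the first intermediate of an entry chain and the clique cubes of the position preceding the entry port (which are perturbed toward the port). They differ by $1-O(\Lambda\rho)$ along $\dir e_i$ and by $1-\iota(\ve s)\rho$ along $\vec d$, so they intersect whenever $\iota(\ve s)\ge1$. For balls this distance is about $\sqrt2$, which is why the original construction was safe.

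The missing idea is that the clique geometry itself must change. Two possible repairs:
\begin{itemize}
\item Stagger consecutive clique positions alternately by a constant $\sigma<1/2$ in a transverse coordinate. The spacing along the chain stays exactly $1$, so both claims survive. Then put each private cube beyond its own clique, on the side away from its neighbours.
\item Drop the private cubes and use two identical copies of each clique cube, letting $\{B_j(\ve v,\ve s),B'_j(\ve v,\ve s)\}$ play the role of $\{h_j,B_j(\ve v,\ve s)\}$. Around a cycle, ``max at $p_j\le$ min at $p_{j+1}$'' forces both endpoints to carry one value.
\end{itemize}
Either way, corners, ports and the diagonal contacts near entry ports must be re-verified. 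For the latter, one option is to recess the position before each entry port by a small constant.
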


\subsection{\MDS{} Problem}

We prove a PTAS lower bound for the \MDS{} problem on unit balls and unit cubes in~$\R^d$, adapting the gadget construction of Marx~\cite{Marx06MDS} to arbitrary dimension~$d \ge 2$. In dimension $d=2$, Marx's construction uses a gadget organized around an axis-aligned square with $16$ blocks $X_1, Y_1, X_2, Y_2, \dots, X_8, Y_8$ alternating in cyclic order on its boundary; the cyclic domination ordering of consecutive $X$-blocks through the intermediate $Y$-blocks forces each gadget to represent a single value from the domain. We generalize this to $d \ge 2$ by replacing the square with a $d$-dimensional box and using the Hamiltonian cycle on the cross-polytope from the proof of the \textsc{Maximum Induced Forest} lower bound (\Cref{thm:mif-ball}) to arrange the blocks cyclically on the boundary of the box. In the following proof we assume that the reader is familiar with the construction of~\cite{Marx06MDS}.
 
\begin{thm} \label{thm:mds-ball}
    Let $d \ge 2$ be an integer. There is a real $\gamma > 0$ such that if \MDS{} on intersection graphs of $n$ unit balls or unit cubes in $\R^d$ admits a PTAS with running time $n^{\gamma/\varepsilon^{d-1}}$ then Gap-ETH~fails.
\end{thm}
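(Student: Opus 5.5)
We reduce from \textsc{Max-}\tcsp{} (\Cref{thm:pgcsp-to-tcsp}), following the template of \Cref{thm:mif-ball} and \Cref{thm:mim-ball}. Given an instance $I=(V,\domain,\constraints)$ with $D=[\Lambda]^d$ and primal graph $G=(V,E)$, we build for each $\ve v\in V$ a gadget around a box $Q_{\ve v}$ centered at a scaled copy of $\loc(\ve v)$. On its boundary we place $2\kappa$ blocks $X_1,Y_1,\dots,X_\kappa,Y_\kappa$, ordered cyclically along the Hamiltonian cycle $H_{\ve v}$ from the \MIF{} construction. Here $\kappa=O(d)$, and we put two port blocks on each facet $F_i$. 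Each $X_j$ contains one ball per value $\ve s\in C_{\ve v}$, slightly perturbed by $\iota(\ve s)\rho$ along the cycle direction. Each $Y_j$ consists of the balls Marx uses~\cite{Marx06MDS} so that $X_j$-ball $\ve s$ and $X_{j+1}$-ball $\ve s'$ together dominate $Y_j$ exactly when $\iota(\ve s)\le\iota(\ve s')$ (or under the cyclic wrap, equality forcing). The $X$-blocks at the ports carry the extra perturbation $\pm\ve s(i)\rho\boldsymbolvec\sigma_i$. Between facing ports of adjacent gadgets we add Marx's connector blocks. Each connector is dominated by the two port selections exactly when $\ve s(i)\ge\ve t(i)$ on the entry side and $\ve s(i)\le\ve t(i)$ on the exit side. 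To each $X_j$ we attach private ``pendant'' balls that only $X_j$-balls dominate, which forces at least one selected ball per $X$-block. The unit-cube version uses the same placement with $\ell_\infty$ geometry, as in~\cite{Marx06MDS}.

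The first step is exact equivalence: $\OPT(\cB)=\kappa|V|$ if and only if $I$ is satisfiable. For the upper bound direction, the pendants force one ball per $X$-block. With exactly one ball per block, domination of every $Y$-block around the cycle gives $\iota(\ve s_1)\le\cdots\le\iota(\ve s_1)$, so all selections in a gadget agree. The connectors then force coordinate equality across each edge. We must check geometrically, as Marx does in the plane, that balls never dominate anything outside their own or neighbouring blocks. The cross-polytope ordering and a side length like $30$ give enough room in $\R^d$ for this.

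The second step is the gap. Take any dominating set $S$ with $|S|=\kappa|V|+\ell$. Call a gadget \emph{full} if $S$ uses exactly one ball per $X$-block in it, all encoding one value $\ve s\in C_{\ve v}$. Every non-full gadget needs at least one extra ball inside its own constant-radius neighbourhood: either an $X$-block has two balls, or an inconsistency leaves some $Y$ or connector ball undominated unless an extra ball covers it. Since each extra ball lies near at most $O(1)$ gadgets (by locality), the number $t$ of non-full gadgets satisfies $t=O(\ell)$. Adjacent full gadgets satisfy their binary constraint, because connector domination forces equality as in \Cref{clm:mif-connector}. So the assignment read off from full gadgets violates only constraints incident to non-full gadgets, which gives $(2d+1)t\ge\delta\totalconstraints$. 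Hence $\OPT(\cB)\ge(1+c'_d\delta)\kappa|V|$.

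The running-time argument is then identical to \Cref{thm:mis}. Choose $\eps=c'_d\delta/2$; a $(1+\eps)$-approximation distinguishes the two cases in time $|\cB|^{\gamma_0/\eps^{d-1}}=\poly(|\constraints|,|\domain|)^{O(\gamma_0)/\delta^{d-1}}$, which contradicts \Cref{thm:pgcsp-to-tcsp}. The main obstacle is the charging claim $t=O(\ell)$, specifically showing that a single extra ball cannot repair inconsistencies in many gadgets. Here we would use that every ball has $O(1)$-diameter influence and that distinct gadgets are at distance $\Omega(1)$. We would also verify that Marx's $Y$-block and connector gadgetry survives transplanting to the turns of $H_{\ve v}$ in dimension $d$.
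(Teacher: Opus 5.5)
Your plan follows the paper's proof. You reduce from \textsc{Max-}\tcsp{}, place Marx's alternating $X$/$Y$-blocks along the cross-polytope Hamiltonian cycle $H_{\ve v}$, join facing ports by Marx's entry and exit connector blocks, and charge each ball beyond the $\kappa|V|$ baseline to at most two non-canonical gadgets. The pendant balls you attach to each $X$-block are a harmless and useful addition. They turn ``at least one ball per $X$-block'' into a local count over pairwise disjoint sets. The paper instead derives it from each $Y$-block needing both neighbouring $X$-blocks, which implicitly relies on Marx's geometry to rule out selecting $Y$-balls.

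One step fails as you stated it. Your definition of \emph{full} only asks for one ball per $X$-block with a common value. So two adjacent full gadgets can encode $\ve s,\ve t$ with $\ve s(i)\neq\ve t(i)$: this happens when an extra ball placed in the connector between them dominates the connector balls that the two port selections leave uncovered. In that case ``adjacent full gadgets satisfy their binary constraint'' is false, and $(2d+1)t\ge\delta\totalconstraints$ does not follow.

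There are two easy repairs:
\begin{itemize}
\item Do what the paper's notion of \emph{canonical} does, and also require that no extra ball lies in the gadget's incident connectors. Then such a ball spoils both endpoint gadgets, and you still get $t\le 2\ell$.
\item Alternatively, charge each such violated constraint directly to that connector ball, which gives $\delta\totalconstraints\le(2d+1)t+\ell=O(\ell)$.
\end{itemize}
With either fix, the bound $\OPT(\cB)\ge(1+c'_d\delta)\kappa|V|$ and the running-time contradiction go through as you describe.
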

 
\begin{proof}
The proof is written for unit balls. The same argument applies to unit cubes after replacing the local ball gadgets by the corresponding cube gadgets. Let $d \ge 2$ be a fixed integer. We reduce from $d$-dimensional \tcsp{} (\Cref{thm:pgcsp-to-tcsp}). Let $I = (V, D, \constraints)$ be an instance of $d$-dimensional \tcsp{} with $D = [\Lambda]^d$. The primal graph $G = (V,E)$ of $I$ is isomorphic to an induced subgraph of $G_k^d$ via a subgraph isomorphism $\loc : V\rightarrow [k]^d$. Let $\totalconstraints$ be the total number of constraints. We construct a set $\mathcal{B}$ of unit balls in $\R^d$ such that the minimum dominating set of the intersection graph of $\mathcal{B}$ encodes the optimal value of~$I$. We denote the size of the minimum dominating set of the intersection graph of~$\mathcal{B}$ by~$\OPT(\mathcal{B})$. 
 
\subparagraph{Construction.}
Let $\rho = 1/(400\,\Lambda^{d+2})$ and let $\iota : [\Lambda]^d \to \{0,1,\dots,\Lambda^d - 1\}$ be an arbitrary bijection. For each variable $\ve{v} \in V$, we construct a \emph{variable gadget} $G_{\ve{v}}$ organized around an axis-aligned $d$-dimensional box $Q_{\ve{v}}$ of side length $8$, centered at $c_{\ve{v}} := 10 \, \loc(\ve{v})$.

\noindent\emph{Hamiltonian cycle on the boundary.}
As in the proof of \Cref{thm:mif-ball}, we arrange block positions on the boundary of $Q_{\ve{v}}$ using a Hamiltonian cycle of the cross-polytope. On each facet $F_i$, we place five consecutively adjacent block positions (similar to the $2$-dimensional case), including designated entry and exit port blocks (the second block and the fourth block, respectively). Transition subpaths between consecutive facets are routed along the boundary of $Q_{\ve{v}}$, yielding a directed Hamiltonian cycle $H_{\ve{v}}$ through $2\tau = O(d)$ block positions.
 
\noindent\emph{Alternating $X$- and $Y$-blocks.}
Following Marx~\cite{Marx06MDS}, we place alternating $X$- and $Y$-blocks around the cycle in the order $X_1, Y_1, X_2, Y_2, \dots, X_\tau, Y_\tau$. Each $Y$-block $Y_j$ sits between $X$-blocks $X_j$ and $X_{j+1}$ (indices modulo~$\tau$). Note that all the port blocks are of type $X$. 
Each $X$-block $X_j$ contains $|C_\ve{v}|$ balls (one per domain element $\ve{s} \in C_{\ve{v}}$), and each $Y$-block $Y_j$ contains $\Lambda^d + 1$ balls.
 
The balls within each block are placed with small perturbations of magnitude $O(\rho)$ along the tangent direction of $H_{\ve{v}}$ at the block position. At port positions, additional perturbations of magnitude $O(\rho)$ in the facet normal direction encode the relevant coordinate value, similar to the placement introduced in Theorem~1 of~\cite{Marx06MDS} and the clique positions of the \MIF{} construction (\Cref{thm:mif-ball}). Two balls can intersect only if they belong to the same block or to adjacent blocks along the cycle. The crucial property, established in~\cite[Lemma~1]{Marx06MDS} for $d=2$ and extending to $d \ge 2$ via our Hamiltonian cycle arrangement, is the following.
 
\begin{clm}[Cyclic domination; cf.\ {\cite[Lemma~1]{Marx06MDS}}]\label{clm:mds-cyclic}
Let $X_{j, r}$ be a unit ball in the block $X_j$ that encodes the value $\ve{s} \in C_\ve{v}$, where $r = \iota(\ve{s})$. Two balls $X_{j, r} \in X_j$ and $X_{j+1, r'} \in X_{j+1}$ together dominate every ball of $Y_{j+1}$ if and only if $r \ge r'$.
\end{clm}

The ball $X_{j, r_1}$ dominates exactly the prefix $Y_{j+1, 0}, \dots, Y_{j+1, r_1 - 1}$ from block~$Y_{j+1}$, and the ball $X_{j+1, r_2}$ dominates exactly the suffix $Y_{j+1, r_2}, \dots, Y_{j+1, \Lambda^d}$. These two ranges cover all of $Y_{j+1}$ if and only if $r_1 \ge r_2$. As in~\cite[Lemma~1]{Marx06MDS}, suppose a dominating set contains exactly one ball from each $X$-block of gadget $G_{\ve{v}}$ and no ball from any $Y$-block. The $Y$-blocks cannot be entirely dominated by balls from a single neighboring $X$-block, so both adjacent $X$-blocks must contribute. The cyclic domination ordering around $H_{\ve{v}}$ then yields $r_1 \ge r_2 \ge \cdots \ge r_\tau \ge r_1$, forcing $r_1 = r_2 = \cdots = r_\tau$. Thus each gadget represents a single domain value $\ve{s}_{\ve{v}} \in C_{\ve{v}}$ in any optimal dominating set.

\noindent\emph{Connectors.}
For each edge $\{\ve{a}, \ve{b}\} \in E$ with $\loc(\ve{b}) = \loc(\ve{a}) + \dir{e}_i$, we connect gadgets $G_{\ve{a}}$ and $G_{\ve{b}}$ through the aligned port positions on facets $F_i$ and $F_{i+d}$. Following~\cite{Marx06MDS}, we add entry and exit connector blocks between the port positions (each containing $\Lambda + 1$ balls). The connector enforces that if gadget $G_{\ve{a}}$ represents $\ve{s}$ and gadget $G_{\ve{b}}$ represents $\ve{t}$, then the dominating set covers all connector balls only if $\ve{s}(i) = \ve{t}(i)$. The entry block enforces $\ve{s}(i) \ge \ve{t}(i)$ and the exit block enforces $\ve{s}(i) \le \ve{t}(i)$.

\subparagraph{Exact Equivalence.}
At least $\tau$ balls are required to dominate the $Y$-blocks of each gadget. The $Y$-blocks cannot all be dominated by balls from a single $X$-block. Each of the $\tau$ $Y$-blocks requires coverage from both of its neighboring $X$-blocks, forcing at least one selected ball per $X$-block. Hence every dominating set of size $\tau |V|$ contains exactly one ball from each $X$-block and no ball from any $Y$-block.
 
If $I$ has a solution $f : V \to [\Lambda]^d$ with $f(\ve{v}) \in C_{\ve{v}}$ for all $\ve{v}$, we select from each gadget $G_{\ve{v}}$ the $\tau$ balls $X_{1, \iota(f(\ve{v}))}, \dots, X_{\tau, \iota(f(\ve{v}))}$. By the cyclic domination property \Cref{clm:mds-cyclic}, these dominate all $Y$-blocks within the gadget. Since $f$ satisfies every binary equality constraint all connector blocks are also dominated. Hence $\OPT(\mathcal{B}) = \tau |V|$.
 
Conversely, if $\OPT(\mathcal{B}) = \tau|V|$, the cyclic ordering forces each gadget to encode a single value $\ve{s}_{\ve{v}} \in C_{\ve{v}}$, and the connectors force $\ve{s}_{\ve{a}}(i) = \ve{s}_{\ve{b}}(i)$ for all adjacent pairs. Hence the assignment $f(\ve{v}) = \ve{s}_{\ve{v}}$ satisfies $I$. The detailed verification follows from~\cite[Theorem~1]{Marx06MDS} applied to our generalized gadget arrangement.

\subparagraph{Error Analysis.}
Assume that every assignment to $I$ violates at least $\delta\,\totalconstraints$ constraints. Let $\mathfrak{D}^\star$ be a minimum dominating set of the intersection graph of~$\mathcal{B}$. We call a gadget $G_{\ve{v}}$ \emph{canonical} if $\mathfrak{D}^\star$ contains exactly one ball from each $X$-block of $G_{\ve{v}}$, all corresponding to the same domain value $\ve{s} \in C_{\ve{v}}$, with the assigned connector blocks encoding $\ve s(i)$ consistently. Each canonical gadget contributes exactly $\tau$ balls. Each non-canonical gadget requires at least $\tau + 1$ balls to dominate its $Y$-blocks: if the selected $X$-block representatives are not all equal, some $Y$-block between two consecutive $X$-blocks with different values remains partially undominated and requires an additional ball.
 
From $\mathfrak{D}^\star$, define an assignment $f$ to $I$. For each canonical gadget encoding $\ve{s}$, set $f(\ve{v}) = \ve{s}$; for non-canonical gadgets, assign $f(\ve{v})$ arbitrarily from $D$. By the connector structure, adjacent canonical gadgets satisfy their binary constraint. The only constraints that may be violated are those incident to non-canonical gadgets. Each variable is incident to at most $2d + 1$ constraints (one unary and $2d$ binary). Let $t$ denote the number of non-canonical gadgets. Since every assignment violates at least $\delta \totalconstraints$ constraints, we have $(2d+1)t \ge \delta \totalconstraints$, giving $t \ge \delta \totalconstraints / (2d+1)$.
 
Each non-canonical gadget causes at least one extra ball beyond the $\tau$-ball budget. An extra ball on a shared connector may be counted by two non-canonical gadgets, so
\[
|\mathfrak{D}^\star| \ge \tau |V| + \frac{t}{2} \ge \tau|V| + \frac{\delta \totalconstraints}{2(2d+1)}.
\]
Since $\tau|V| = \Theta(|V|)$ and $\totalconstraints = \Theta(|V|)$, there exists a constant $c'_d > 0$ such that
\[
\OPT(\mathcal{B}) \ge (1 + c'_d \,\delta)\,\tau|V|.
\]
 
\subparagraph{Running Time.}
Set $\varepsilon = c'_d \delta / 2$. Assume toward a contradiction that there is a PTAS for \MDS{} of $n$ unit balls in $\R^d$ running in time $n^{\gamma_0 / \varepsilon^{d-1}}$ for some constant $\gamma_0 > 0$. Run this PTAS on $\mathcal{B}$ with parameter~$\varepsilon$.
 
If $I$ is satisfiable, $\OPT(\mathcal{B}) = \tau|V|$, and the PTAS returns a dominating set of size at most $(1 + \varepsilon)\tau|V|$. If every assignment to $I$ violates at least $\delta \totalconstraints$ constraints, then $\OPT(\mathcal{B}) \ge (1 + 2\varepsilon)\tau|V| > (1+\varepsilon)\tau|V|$, so any feasible dominating set has size exceeding $(1+\varepsilon)\tau|V|$. Hence the PTAS distinguishes the two cases. The running time is
\[
|\mathcal{B}|^{\gamma_0/\varepsilon^{d-1}} = \mathrm{poly}(|\constraints|, |D|)^{O(1)/\delta^{d-1}}.
\]
Choosing $\gamma_0$ small enough contradicts \Cref{thm:pgcsp-to-tcsp}, completing the proof.
\end{proof}

\subsection{\MPS{} Problem}

In this section, we prove a lower bound for the running time of a PTAS for \MPS{} on unit balls in $\R^d$. We use the reduction described by Blank~\etal~\cite{geert} as a black box, which proves a running-time lower bound for exact algorithms to solve the $k$-center clustering problem. They reduce a variant of the constraint satisfaction problem called \bsumset{} to the $k$-center problem. For a positive integer $\kappa$, an instance $J = (V, D, \bar \constraints)$ of \bsumset{} consists of a primal graph $G = (V,E)$ isomorphic to an induced subgraph of a $d$-dimensional grid graph $G_\kappa^d$, domain $D = [\mathfrak{n}^2]_\circ$, and a family $\bar \constraints$ of subsets of $D$. The constraints in $\bar \constraints$ consist of a unary constraint $\bar C_\ve{v} \subseteq D$ for each vertex $\ve v \in V$ and a binary constraint $\bar C_e \subseteq D$ for each edge $e \in E$. The graph $G$ is properly $2$-colored so that $\bar C_\ve{v} \subseteq [\mathfrak{n}]_\circ \cdot \mathfrak{n}^{\chi(\ve{v})}$ where $\chi(\ve v) \in \{0,1\}$ is the color of $\ve v$, and $\bar C_e \subseteq \{a+b : (a,b) \in \bar C_\ve{v} \times \bar C_\ve{u}\}$ for $e = \{\ve u,\ve v\}$. An assignment $f: V \to D$ of $J$ is a solution if $f(\ve v) \in \bar C_\ve{v}$ for all $\ve{v} \in V$ and $f(\ve u)+f(\ve v) \in \bar C_e$ for every edge $e = \{\ve u,\ve v\} \in E$.

Recall that a collection of unit balls $\cB$ is a \emph{cover} of a point set $P$ if each point of $P$ is contained in at least one ball of $\cB$. We use the following consequence of the construction of Blank~\etal~\cite[Sections~5.1--5.3]{geert}.\footnote{The reduction in~\cite{geert} constructs a $k$-center instance with balls of radius $r_0 = 1-\varepsilon_0+\varepsilon_0^{1.7}$ for a small constant $\varepsilon_0 = (10\mathfrak{n})^{-100}$. After uniformly scaling the point set by a factor $1/2r_0$, we may assume all balls have unit diameter.}

\begin{thm}
\label{thm:bsumset-to-cover}
For every fixed integer $d\ge 2$, there exist constants $h_d,q_d>0$ and a polynomial-time construction with the following properties. Given a $d$-dimensional \bsumset{} instance $J=(V,[\mathfrak n^2]_\circ,\bar\constraints)$ with primal graph $G=(V,E)$, the construction outputs a point set $P_J\subseteq \R^d$ and an integer $K \le h_d |V|$ such that $|P_J|=\poly(|V|,\mathfrak n)$ and 
\begin{enumerate}
    \item[(i)] every cover of $P_J$ by unit balls has size at least $K$;

    \item[(ii)] if $J$ is satisfiable, then $P_J$ can be covered by exactly $K$ unit balls;

    \item[(iii)] for every cover of $P_J$ by $K+t$ unit balls, one can extract an assignment $f:V\to[\mathfrak n^2]_\circ$ for $J$ that violates at most $q_d t$ constraints of $J$.
\end{enumerate}
\end{thm}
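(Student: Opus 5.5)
The plan is to open the Blank~\etal{} construction~\cite[Sections~5.1--5.3]{geert} and read off the three properties from its locality structure. This mirrors how we treated the \MIS{}, \MIF{} and \MDS{} gadgets above.

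Recall the structure of the construction. For each variable $\ve v\in V$ there is a variable gadget, a constant-size (depending on $d$) arrangement of point groups placed around $\loc(\ve v)$ scaled by a constant. For each edge $e\in E$ there is a constant number of edge gadgets in the region between the two incident variable gadgets. The integer $K$ is the sum over gadgets of the lower bounds certified in~\cite{geert}. Each gadget contains a set of pairwise far points (a packing certificate) such that no unit ball covers two of them. Summing these certificates over gadgets gives (i), and also $K\le h_d|V|$ with $h_d$ the constant gadget size times $2d+1$, since $|E|\le d|V|$. Property (ii) is exactly the completeness direction of~\cite{geert}: a satisfying assignment $f$ determines a ball placement in every gadget using exactly its budget. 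The perturbations encoding $f(\ve v)\in[\mathfrak n]_\circ\cdot\mathfrak n^{\chi(\ve v)}$ make the edge points covered precisely when $f(\ve u)+f(\ve v)\in\bar C_e$. The bound $|P_J|=\poly(|V|,\mathfrak n)$ follows from the construction as well.

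For (iii), I would strengthen the soundness direction into a local, robust form. Take a cover $\cB$ of size $K+t$. Each ball meets points of only $O_d(1)$ gadgets. Charge every ball to one of the certificate points it covers, and call a gadget \emph{tight} if it receives exactly its budget of balls, with each of those balls covering only points of that gadget. Each gadget receives at least its certificate count, so a counting argument shows that at most $c_d t$ gadgets are non-tight. Here $c_d$ accounts for the fact that a single ball can spoil $O_d(1)$ neighbouring gadgets. In a tight variable gadget the local argument of~\cite{geert} (the ``canonical'' ball positions) forces the balls into a configuration encoding a single value $a_{\ve v}\in\bar C_{\ve v}$, and we set $f(\ve v)=a_{\ve v}$. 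Variables with non-tight gadgets get an arbitrary value. If both endpoint gadgets and the edge gadgets of $e=\{\ve u,\ve v\}$ are tight, the edge points being covered forces $f(\ve u)+f(\ve v)\in\bar C_e$. Hence every violated constraint is incident to a non-tight gadget. Each gadget is incident to at most $2d+1$ constraints, so at most $(2d+1)c_d t=:q_d t$ constraints are violated.

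The main obstacle is the locality claim underlying (iii). The soundness proof in~\cite{geert} is global: it assumes exactly $K$ balls and then argues gadget by gadget. I need that the canonical-position argument inside a gadget uses only that gadget's budget being met and the absence of balls shared with neighbours. I would first verify this by checking that the arguments in~\cite[Section~5.3]{geert} only ever refer to points within distance $O(1)$ of the gadget. If some argument does propagate, for instance consistency transferred along a chain of gadgets, I would instead define tightness on $O(1)$-neighbourhoods. That changes only the constant $c_d$.
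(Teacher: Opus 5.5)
Your overall route is the paper's. The paper also treats Blank~\etal's construction as a black box. It takes $K$ to be the number of canonical balls forced by the anchor points (your packing certificate, which gives (i) and $K\le h_d|V|$), takes (ii) from the completeness direction, and justifies (iii) only by locality: a non-canonical ball interacts with $O_d(1)$ unary or binary constraint gadgets.

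The step that does not go through as written is your counting for (iii). Your tight gadgets satisfy two conditions: exact budget, and ``each of those balls covers only points of that gadget''. Counting controls only the first. In a cover of size $K+t$ the surplus is exactly $t$: the balls covering no anchor, plus all but one ball on each anchor. So at most $O_d(t)$ gadgets are over budget or touched by a surplus ball. Nothing bounds the number of gadgets in which a non-surplus ball, placed with some slack, also covers points of a neighbour. Each such ball makes two gadgets non-tight at no cost in $t$. Worse, neighbouring gadgets interact through points that balls from either side may cover; this is what splitting indices along basic curves record. Unless you let gadget point sets overlap on these interfaces, your second condition can fail even for the cover from (ii), and then the bound $c_d t$ already fails at $t=0$.

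The repair is to define tightness by surplus alone, which is what the paper's canonical/non-canonical accounting implicitly does. Call a gadget tight if no surplus ball lies within distance $O(1)$ of it; counting then really gives at most $c_d t$ non-tight gadgets. The price is that the local lemma you must extract from~\cite[Section~5.3]{geert} is stronger than the one you planned to check. It must show that, with no surplus ball nearby, the anchor balls are forced into positions where the splitting indices, and hence the value, are well defined and satisfy the incident unary and binary constraints. It cannot assume that the gadget's balls avoid neighbouring points, since that is no longer a hypothesis. With this lemma the rest of your argument goes through unchanged: arbitrary values on non-tight gadgets, and at most $2d+1$ constraints charged per gadget.
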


\noindent\textit{Remark.} In the terminology of~\cite{geert}, the value $K$ is the number of canonical balls forced by the anchor points. The assignment is read from the splitting indices along the basic curves. The final property follows from the locality of the construction: a non-canonical ball can interact with only constantly many unary or binary constraint gadgets, where the constant depends only on $d$. 

\medskip
We now state the main theorem of this subsection.

\begin{thm} \label{thm:piercing}
    Let $d \ge 2$ be an integer. There is a real $\gamma > 0$ such that if \MPS{} on $n$ unit balls in $\R^d$ admits a PTAS with running time $n^{\gamma/\eps^{d-1}}$ then Gap-ETH fails.
\end{thm}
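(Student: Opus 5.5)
The plan is to chain two reductions: first from \textsc{Max-}\tcsp{} (\Cref{thm:pgcsp-to-tcsp}) to a gap version of \bsumset{}, and then from \bsumset{} to unit-ball covering via \Cref{thm:bsumset-to-cover}. Covering a point set $P$ by unit balls is equivalent to \MPS{} on the unit balls centered at the points of $P$, since a point $p$ lies in a unit ball centered at $c$ if and only if $c$ lies in the unit ball centered at $p$. So it suffices to rule out an $n^{\gamma/\eps^{d-1}}$-time PTAS for the covering formulation.

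\textbf{Step 1: \tcsp{} to \bsumset{}.} Take a \tcsp{} instance $I=(V,[\Lambda]^d,\constraints)$ whose primal graph lies in $G_k^d$.
\begin{itemize}
\item Subdivide the grid by a factor $2$. Each original variable $\ve v$ sits at $2\loc(\ve v)$ and gets color $0$. Each binary constraint gets a midpoint vertex of color $1$.
\item Encode values so that sums determine pairs. Let $\mathfrak n=\Lambda^d+1$. Give $\ve v$ the unary set $\{\iota(\ve x):\ve x\in C_{\ve v}\}\subseteq[\mathfrak n]_\circ$.
\item Let the midpoint vertex $\ve w$ for an edge $\{\ve a,\ve b\}$ range over $\mathfrak n\cdot[\mathfrak n]_\circ$, encoding a value $\ve y$ that it copies from $\ve a$.
\item Turn the two edge constraints into sumset constraints. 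On $\{\ve a,\ve w\}$, allow $\iota(\ve x)+\mathfrak n\iota(\ve x)$, which forces $\ve w$ to copy $\ve a$. On $\{\ve w,\ve b\}$, allow $\mathfrak n\iota(\ve y)+\iota(\ve z)$ whenever $\ve y(i)=\ve z(i)$. Since $\iota<\mathfrak n$, the sums decode uniquely.
\item Non-edges of the subdivided grid carry no constraints, and the primal graph stays an induced subgraph of $G_{2k}^d$. If it is not, I will first spread the variables out by a constant factor, which preserves the gap in the same way as \Cref{prop:gcsp-to-pgcsp}.
\end{itemize}
The reduction is local and has constant blow-up. A violated constraint of $I$ under the decoded assignment forces a violated constraint among the $O(1)$ constraints charged to it, so a $\delta$-gap becomes a $\delta/c_d$-gap. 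The domain size becomes $\mathfrak n^2=\poly(|D|)$.

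\textbf{Step 2: \bsumset{} to covering.} Apply \Cref{thm:bsumset-to-cover} to get $P_J$ and $K\le h_d|V|$.
\begin{itemize}
\item Completeness: if $J$ is satisfiable, the optimum is exactly $K$.
\item Soundness: suppose every assignment violates at least $\delta'|\bar\constraints|$ constraints. If there were a cover of size $K+t$, item (iii) would give an assignment violating at most $q_dt$ constraints. Hence $t\ge \delta'|\bar\constraints|/q_d$.
\item Relating the gap to $K$: $|\bar\constraints|\ge|V|\ge K/h_d$, so $t\ge \delta' K/(q_dh_d)$ and every cover has size at least $(1+c_d\delta')K$.
\end{itemize}
Now run the assumed PTAS with $\eps=c_d\delta'/2$ on the $n=\poly(|\constraints|,|D|)$ balls. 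It distinguishes the two cases in time $n^{\gamma_0/\eps^{d-1}}=N^{O(\gamma_0)/\delta^{d-1}}$, which contradicts \Cref{thm:pgcsp-to-tcsp} for small $\gamma_0$.

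\textbf{Main obstacle.} I expect the hardest part to be checking that Step 1 produces an instance in exactly the format \Cref{thm:bsumset-to-cover} requires. Two points need care:
\begin{itemize}
\item the bipartite color constraint $\bar C_{\ve v}\subseteq[\mathfrak n]_\circ\mathfrak n^{\chi(\ve v)}$ must hold for every vertex;
\item the requirement that the primal graph is induced in $G_\kappa^d$ must hold after subdivision.
\end{itemize}
For both I would first try the subdivision and encoding described above. Since item (iii) is already given, the gap bookkeeping in Step 2 should then be routine.
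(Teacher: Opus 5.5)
Your proposal is correct. Step~2, including the soundness bound via $K\le h_d|V|$, is essentially the paper's argument. Step~1 takes a different and somewhat roundabout route.

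The paper reduces from \lecsp{} (\Cref{thm:pgcsp-to-lecsp}) and keeps the primal graph unchanged. It properly $2$-colours the grid by coordinate-sum parity, $\chi(\ve v)=(\sum_j \loc(\ve v)(j)) \bmod 2$, and places each variable's value in the digit block $\mathfrak n^{\chi(\ve v)}$. It then writes each binary constraint directly as $\{\iota(\ve x)\mathfrak n^{\chi(\ve u)}+\iota(\ve y)\mathfrak n^{\chi(\ve v)} : x_i\le y_i\}$. Because the sum is carry-free, it decodes uniquely to the pair, so satisfaction is preserved constraint by constraint. The gap $\delta$ therefore transfers with no loss and without any charging argument.

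Your subdivision with a copying midpoint is unnecessary: a sumset constraint between oppositely coloured vertices can already encode an arbitrary binary relation on the two digit blocks. It is nonetheless valid. It costs only a constant factor in the gap, via the charging you describe: a violated constraint of $I$ forces a violation among the unary constraints at $\ve a,\ve b,\ve w$ or the two edge constraints, and each of these is charged $O(d)$ times.

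Your concern about inducedness is also unfounded. Original vertices at $2\loc(\ve v)$ have all coordinates even, and midpoints $2\loc(\ve a)+\dir{e}_i$ have exactly one odd coordinate. Hence the only grid adjacencies among them are the subdivision edges, and no spreading-out step is needed.

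Reducing from \tcsp{} instead of \lecsp{} makes no difference, since both are available from \Cref{thm:main-gcsp}. The paper's direct encoding buys simplicity and exact gap preservation. Yours is equally sound, but introduces an extra charging step the construction does not require.
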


\begin{proof}
    For convenience we view \MPS{} as the equivalent \emph{covering} formulation: given a finite point set $P\subseteq \R^d$, find a minimum-cardinality set of unit balls whose union covers $P$. For a given point set $P$ we denote by $\OPT(P)$ the minimum number of unit balls that cover $P$.

    \subparagraph{Construction.} Let $I=(V,D,\constraints)$ be a $d$-dimensional \lecsp{}. Let $G=(V,E)$ be the primal graph of $I$, and let $D=[\Lambda]^d$. The primal graph of $I$ is isomorphic via $\loc : V \rightarrow V(G_k^d)$ to an induced subgraph of the $d$-dimensional grid graph of side length $k$. Let $\totalconstraints$ be the total number of constraints of~$I$. Recall that for each $i \in [d]$, in a \lecsp{} instance each binary constraint on a pair of adjacent variables $\ve{u}$ and $\ve{v}$ with $\loc(v) = \loc(u) + \dir{e}_i$ enforces the $\le$-inequality on coordinate~$i$. We define an equivalent \bsumset{} instance $J = (V, D = [\mathfrak{n}^2]_\circ,\bar\constraints)$ with the same primal graph as $I$ and the same subgraph isomorphism into~$G_k^d$. Let $\mathfrak{n} = (2\Lambda+1)^d$. For each $\ve{v} \in V$, define the coloring $\chi(\ve{v}) := (\sum_{j=1}^d \tilde{\ve{v}}(j))\bmod 2$, where $\tilde{\ve{v}} = \loc(\ve{v})$. For a tuple $\ve{a}=(a_1,\dots,a_d)\in[\Lambda]^d$, define
    \[
    \iota_{\ve{v}}(\ve{a}) := \sum_{j=1}^d a_j (2\Lambda+1)^{j-1}.
    \]
    For each variable $\ve{u} \in V$ define the allowed integer set
    \[
    \bar C_{\ve{u}} := \{\iota_{\ve{u}}(\ve{a}) \cdot \mathfrak{n}^{\chi(\ve{u})} : \ve{a}\in C_{\ve{u}}\}.
    \]
    For each $i\in[d]$, if variables $\ve u$ and $\ve v$ are adjacent with $\loc(v) = \loc(u) + \dir{e}_i$, define
    \[
    \bar C_{\{\ve u, \ve v\}} := \Bigl\{\iota(x) \cdot \mathfrak{n}^{\chi(u)} + \iota(y) \cdot \mathfrak{n}^{\chi(v)}:(x,y) \in [\Lambda]^d \times [\Lambda]^d,\, x_i \le y_i\Bigr\}.
    \]
    For each vertex $\ve u$, the set $\bar C_{\ve u}$ represents the unary constraint, and each edge $\{\ve{u},\ve{v}\} \in E$ with $\loc(\ve{v}) = \loc(\ve{u}) + \dir{e}_i$ has a binary constraint $\bar C_{\{\ve u, \ve v\}}$.

    \begin{lem}\label{lem:lecsp-to-bsumset}
    Let $\ve{u}$ and $\ve{v}$ be two variables with $\loc(\ve{v}) = \loc(\ve{u}) + \dir{e}_i$. For any $\ve{x} =(x_1,\dots,x_d)\in[\Lambda]^d$ and $\ve{y}=(y_1,\dots,y_d)\in[\Lambda]^d$, $\ve x_i\le \ve y_i$ if and only if $\iota_{\ve{u}}(\ve{x})+\iota_{\ve{v}}(\ve{y})\in \bar C_{\{\ve u, \ve v\}}$.
    \end{lem}

    \begin{proof}
        Since $\chi(\ve u)\neq \chi(\ve v)$, the two terms
        \[
        \iota(\ve x)\mathfrak n^{\chi(\ve u)}
        \quad\text{and}\quad
        \iota(\ve y)\mathfrak n^{\chi(\ve v)}
        \]
        occupy disjoint base-$\mathfrak{n}$ digit blocks. As $\iota(\ve z)<\mathfrak{n}$ for all $\ve z\in[\Lambda]^d$, their sum is carry-free and uniquely determines the pair $(\ve x,\ve y)$. By definition,
        \[
        \bar C_{\{\ve u,\ve v\}} = \Bigl\{ \iota(\ve x')\mathfrak n^{\chi(\ve u)} + \iota(\ve y')\mathfrak n^{\chi(\ve v)} : x'_i \le y'_i \Bigr\}.
        \]
        Thus,
        \[
        \iota(\ve x)\mathfrak n^{\chi(\ve u)} + \iota(\ve y)\mathfrak n^{\chi(\ve v)}\in \bar C_{\{\ve u,\ve v\}}
        \]
        holds if and only if $(\ve x,\ve y)$ satisfies $x_i\le y_i$, since the representation is unique.
    \end{proof}

    Note that \Cref{lem:lecsp-to-bsumset} implies that $I$ is satisfiable if and only if $J$ is satisfiable. Moreover, the reduction between $I$ and $J$ preserves constraint satisfaction constraint-by-constraint, so if every assignment violates at least $\delta \totalconstraints$ constraints in $I$, then every assignment violates at least $\delta \totalconstraints$ constraints in $J$.

    Finally, apply the construction of~\cite{geert} to $J$ and obtain a point set $P_J \subseteq \R^d$ and the value $K$ as stated in \Cref{thm:bsumset-to-cover}.

    \subparagraph{Exact Equivalence.} Assume that $I$ is satisfiable. By \Cref{lem:lecsp-to-bsumset}, the instance $J$ is satisfiable. Then by \Cref{thm:bsumset-to-cover}, the point set $P_J$ can be covered by $K$ unit balls. Since every cover of $P_J$ has size at least $K$, we get $\OPT(P_J)=K$.

    Conversely, suppose that $P_J$ can be covered by exactly $K$ unit balls. By \Cref{thm:bsumset-to-cover}, this cover induces an assignment to $J$ violating at most $q_d\cdot 0=0$ constraints. Hence $J$ is satisfiable, and by \Cref{lem:lecsp-to-bsumset}, the original instance $I$ is satisfiable.

    \subparagraph{Error Analysis.}
    Now assume that every assignment to $I$ violates at least $\delta\totalconstraints$ constraints. By \Cref{lem:lecsp-to-bsumset}, every assignment of $J$ also violates at least $\delta\totalconstraints$ constraints. Let $\mathcal U$ be an arbitrary cover of $P_J$ by unit balls, and write $|\mathcal U|=K+t$. By \Cref{thm:bsumset-to-cover}, the cover $\mathcal U$ induces an assignment to $J$ that violates at most $q_d t$ constraints. Since every assignment of $J$ violates at least $\delta\totalconstraints$ constraints, we have $q_d t\ge \delta\totalconstraints$. Therefore every cover of $P_J$ has size at least $K+\frac{\delta\totalconstraints}{q_d}$. Using $\totalconstraints\ge |V|$ and $K\le h_d |V|$, we obtain
    \[
        \OPT(P_J)
        \ge
        K+\frac{\delta |V|}{q_d}
        \ge
        \left(1+\frac{\delta}{h_dq_d}\right)K.
    \]
     Set $c_d := 1/h_dq_d$. Run the assumed PTAS on $P$ with approximation parameter $\eps = c_d\delta/3$. If $I$ is satisfiable then $\OPT(P)=K$, so the PTAS returns a cover of size at most $(1+\eps)K$. If every assignment violates at least a $\delta$-fraction of constraints, then $\OPT(P)\ge (1+c_d\delta)K>(1+\eps)K$, and since any output cover has size at least $\OPT(P)$, the PTAS must return strictly more than $(1+\eps)K$ balls. Hence the PTAS distinguishes the two cases.

    \subparagraph{Running Time.} Finally, the construction in~\cite{geert} produces $n=|P|=\poly(|\constraints|,|D|)$ points, i.e. polynomial in the input size of $I$. Assume for contradiction that there is a PTAS for this problem running in time $n^{\gamma_0/\eps^{d-1}}$ on instances with $n$ points, for some constant $\gamma_0>0$. Therefore the running time of this PTAS on $P$ is
    \[
        n^{\gamma_0/\eps^{d-1}}
        = \poly(| \constraints|, |\domain|)^{\gamma_0/\delta^{d-1}},
    \]
    which contradicts \Cref{thm:pgcsp-to-lecsp} for a suitable constant $\gamma_0>0$.
\end{proof}

\section{Shifting Based Polynomial Time Approximation Schemes} \label{sec:algorithms}

A matching upper bounds for \MIS{} and \MDS{} on intersection graphs of~$n$ unit balls (or unit cubes) in~$\R^d$, with running time $n^{O(1/\eps^{d-1})}$, was established by~\cite{Chan03} and~\cite{DeDCN13,JalluPD13}, respectively. In this section, we apply the shifting technique to obtain the same running time for \MIF{}, \MIM{}, and \MPS{} on $n$ unit balls (or unit cubes) in~$\mathbb{R}^d$.

\subsection{PTAS for \MIF{}}

In this subsection, we match our lower bounds for \MIF{} and \MIM{} on intersection graph of unit balls (unit cubes) in $\IR^d$ by presenting a PTAS with a running time $n^{O(1/\eps^{d-1})}$. Our PTAS uses the shifting technique of Hochbaum and Maass~\cite{HochbaumM85}.

\begin{thm}\label{thm:mif-algo}
For every fixed $d \geq 2$ and $\eps > 0$, there is a $(1-\eps)$-approximation algorithm for \MIF{} on unit ball (unit cube) intersection graphs in $\R^d$ running in time $n^{O(1/\eps^{d-1})}$.
\end{thm}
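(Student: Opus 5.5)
We follow the shifting paradigm of Hochbaum and Maass~\cite{HochbaumM85}. Set $L=\lceil c_d/\eps\rceil$ for a suitable constant $c_d$. For each shift $\vec{\beta}\in\{0,1,\dots,L-1\}^d$ we consider the grid of hypercube cells of side length $L$ whose origin is translated by $\vec{\beta}$. We delete every unit ball (or cube) whose center lies within distance~$1$ of a cell boundary, solve \MIF{} exactly on the surviving objects of each cell, and take the union of the cell solutions. Surviving objects in different cells are disjoint, since their centers are at distance at least~$2$. Hence the union of induced forests from different cells is again an induced forest: there are no edges between cells, so no cycle can be created. Among all $L^d$ shifts we output the best union, which makes the algorithm deterministic.

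\emph{Approximation.} Fix an optimal induced forest $F^\star$. For a given center, the set of shifts that place it within distance~$1$ of some boundary hyperplane in coordinate $i$ has size at most $3$ out of $L$ for that coordinate. A union bound over the $d$ coordinates, averaged over all shifts, shows that some shift deletes at most $3d|F^\star|/L\le\eps|F^\star|$ balls of $F^\star$. A subset of an induced forest is an induced forest, so for that shift $F^\star$ restricted to the survivors is feasible in the cells. It follows that the exact cell solutions together have size at least $(1-\eps)|F^\star|$.

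\emph{Exact solution in a cell.} This is the main step. A cell has volume $L^d$, so we first reduce the cell instance to bounded size. A cluster of pairwise intersecting balls contains no induced forest with more than $2$ elements (three would form a triangle). Partition the cell into subcells of diameter less than~$1$; all balls centered in one subcell form a clique, so any induced forest uses at most $2$ balls per subcell. Consequently every induced forest in a cell has $O(L^d)$ balls, and a feasible solution restricted to any region of volume $O(L^{d-1})$ has size $O(L^{d-1})$.

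We then solve the cell by separator-based divide and conquer. Take an axis-parallel hyperplane slab of width~$1$ that bisects the cell, and guess the intersection $S$ of the solution with the balls centered in the slab. There are $n^{O(L^{d-1})}$ such guesses, since at most $2$ balls are chosen per subcell and the slab contains $O(L^{d-1})$ subcells. Acyclicity is not decomposable like independence, so for each guess we also guess a partition of $S$ into connectivity classes realized by each side. We recurse on both halves with this boundary information, in the style of the standard dynamic programming over separators (cf.~\cite{CyganFKLMPPS15}). The number of partitions is $|S|^{O(|S|)}=n^{O(L^{d-1})}$ as well. The recursion on boxes with side lengths halving over a constant number of dimensions has depth $O(d\log L)$, so the overall separator sizes sum to $O(L^{d-1})$ along any branch; the total running time is $n^{O(L^{d-1})}=n^{O(1/\eps^{d-1})}$.

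The connectivity bookkeeping at the separators is the main obstacle. Where the connectivity-partition approach becomes awkward, we would instead apply a treewidth-style dynamic program directly: the intersection graph of the cell, with cliques of each subcell grouped, has a tree decomposition whose bags consist of $O(L^{d-1})$ subcell cliques. Bags of that size allow choosing at most $2$ balls per subcell, giving $n^{O(L^{d-1})}$ states, and the standard connectivity DP for induced forests then costs only an extra factor of $(L^{d-1})^{O(L^{d-1})}$, which is absorbed. Unit cubes are handled identically, since the same packing and clique arguments apply with subcells of $\ell_\infty$-diameter less than~$1$.
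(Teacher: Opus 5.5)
Your proof is correct. The shifting part is the same as the paper's, but the exact per-cell subroutine takes a genuinely different route.

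On the shifting step, your count of at most $3$ bad offsets per coordinate is the accurate one. The paper's ``at most $2$'' misses centers with an integer coordinate, which only changes the constant in $\ell$.

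For the exact cell solution, the paper proves a stand-alone parameterized algorithm (\Cref{thm:mif-exact-k}) that runs in $n^{O(k^{1-1/d})}$ on arbitrary inputs. It works as follows:
\begin{itemize}
\item It notes that an induced forest of unit balls has ply at most $2$.
\item It applies the sphere separator of Miller~\etal{} (\Cref{thm:mif-sep}) to the \emph{unknown optimal solution}.
\item It enumerates the $n^{O(1)}$ candidate spheres, the crossing subset $X$, and a component partition of $X$, then recurses with budget $\alpha k$.
\item Finally it plugs in $k=O(\ell^d)$, obtained from the same two-balls-per-subcube packing bound you use, to get $n^{O(\ell^{d-1})}$ per cell.
\end{itemize}

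You never separate the solution. Instead, the packing bound shows that any fixed width-$1$ axis-parallel slab, or any bag of a slice-by-slice path decomposition of the cell, meets only $O(L^{d-1})$ solution balls. You then recurse or sweep over the cell's own geometry, carrying connectivity partitions as boundary data. Your recursion bookkeeping is also sound: boundary and separator sizes shrink geometrically, and the partition count $|S|^{O(|S|)}\le n^{O(|S|)}$ is correct and absorbed.

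What each approach buys:
\begin{itemize}
\item Yours is more elementary. It needs no separator theorem and no guessing of spheres or balanced splits. Your path-decomposition fallback is essentially the flattened-treewidth alternative the paper mentions only in passing.
\item The paper's gives an exact algorithm with no bounded-volume assumption. That result is of independent interest and ETH-tight, but it is more than the PTAS requires.
\end{itemize}
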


\begin{proof}
We write the proof for the intersection graph of unit balls. The algorithm and its analysis for the unit cubes are similar. Fix $\eps > 0$. Set the shifting parameter $\ell = \lceil 2d / \eps \rceil$. For each $d$-tuple of offsets $i = (i_1, \ldots, i_d) \in \{0, \ldots, \ell - 1\}^d$, partition $\R^d$ into axis-aligned cubes of side length $\ell$:
\[
\text{Cell}_{\ve{j}}^{(\ve{i})} = \prod_{r=1}^{d} \big[i_r + j_r \ell,\; i_r + (j_r+1)\ell\big), \qquad \ve{j} \in \mathbb{Z}^d.
\]
A ball $B$ is a \emph{boundary ball} for offset $i$ if, for some coordinate $r$ and some integer $j_r$, the $r$-th coordinate of its center is within distance $1$ of the grid hyperplane $x_r = i_r + j_r \ell$.

The following is the algorithm to obtain a $(1-\eps)$-approximation for the \MIF{} problem with the set $\cB$ of unit balls as an input.

\begin{enumerate}
\item For each offset $i \in \{0, \ldots, \ell-1\}^d$:
  \begin{enumerate}
  \item Remove all boundary balls for offset $i$.
  \item The remaining balls partition into independent subproblems, one per grid cell. (After removing boundary balls, balls in different cells are non-adjacent, since any edge between cells would require at least one endpoint within distance $1$ of a cell boundary.)
  \item In each cell $C$, let $G_C$ denote the unit ball intersection graph on the (non-boundary) balls in $C$. Solve \MIF{} on $G_C$ exactly using \Cref{thm:mif-exact-k}.
  \item Let $F_{i}$ be the union of solutions across all cells.
  \end{enumerate}
\item Return the largest $F_{i}$ over all offsets.
\end{enumerate}

We show that this algorithm returns a $(1-\eps)$-approximate solution of $\OPT(\mathcal{B})$.

\begin{lem}\label{lem:shift}
    There exists an offset $i^\star$ such that the number of balls in an optimal induced forest $F^\star$ that are boundary balls for $i^\star$ is at most $\eps |F^\star|$.
\end{lem}
 
\begin{proof}
Fix a coordinate $r \in \{1, \ldots, d\}$.
For each offset $i_r \in \{0, \ldots, \ell-1\}$, a ball in $S^\star$ is a boundary ball in coordinate~$r$ for offset $i_r$ if and only if its center's $r$-th coordinate is within distance $1$ of some grid line $i_r + j_r \ell$.
Since the grid lines are spaced $\ell \geq 2$ apart, the interval of width~$2$ around any fixed center value meets at most $2$ grid lines, corresponding to at most $2$ offsets. Let $S^\star_{r,i_r}$ be the set of boundary balls in the $r$-th coordinate for the offset $i_r$.
Hence $
\sum_{i_r = 0}^{\ell - 1} |S^\star_{r, i_r}| \leq 2|S^\star|$. so $\min_{i_r} |S^\star_{r, i_r}| \leq 2|S^\star|/\ell$.
Taking a union bound over all $d$ coordinates, the best joint offset $i^\star$ loses at most
\[
d \cdot \frac{2|S^\star|}{\ell} \leq \eps \cdot |S^\star|
\]
balls from the optimum (by our choice $\ell = \lceil 2d/\eps \rceil$).
\end{proof}

Since subsets of induced forests are induced forests, $S^\star \setminus \{\text{boundary balls for } i^\star\}$ is a valid induced forest, and it decomposes independently across cells. Thus the algorithm returns a forest of size at least $(1-\eps)|\OPT|$.

Consider a fixed offset $i$ and a cell $C$ of side length $\ell$. Let $n_C$ be the number of non-boundary balls in $C$. Let $k_C^* := \OPT(G_C)$. We partition the cell (a cube of side $\ell$) into subcubes of side length $1/\sqrt{d}$. Each sub-cube has diameter~$1$, so all unit balls with centers in the same sub-cube are pairwise intersecting. In any induced forest, a clique contributes at most $2$ vertices. Hence the induced forest contains at most $2$ vertices per sub-cube. The number of sub-cubes is $(\ell\sqrt{d})^d = d^{d/2} \ell^d = O_d(\ell^d)$. Therefore $k_C^* \le d^{d/2}\,\ell^d = O(\ell^d)$.

Therefore by \Cref{thm:mif-exact-k}, solving \MIF{} on the non-boundary balls of cell $C$ of side length $\ell$ takes $n_C^{O(\ell^{d-1})}$ time. Summing over all cells, we have
\[
\sum_C n_C^{O(\ell^{d-1})} \le \Bigl(\max_C n_C\Bigr)^{O(\ell^{d-1})} \cdot |\{C : n_C > 0\}| \le n^{O(\ell^{d-1})} \cdot n = n^{O(\ell^{d-1})}.
\]
Over all $\ell^d = O(1/\eps^d)$ offsets, the total running time is $\ell^d \cdot n^{O(\ell^{d-1})} = n^{O(\ell^{d-1})} = n^{O(1/\eps^{d-1})}$.
\end{proof}

\subsubsection*{Solving \MIF{} via Geometric Balanced Separator} 

The \emph{ply} of a family of unit balls is the maximum number of balls that cover a single point.
We use the geometric separator theorem of Miller~\etal~\cite{separator} to obtain our exact algorithm parametrized by the output size.

\begin{thm}[Geometric Separator \cite{separator}] \label{thm:mif-sep}
    Let $\lambda\ge 1$ be a constant and let $\cF$ be a set of $k$ unit balls in $\R^d$ of ply at most $\lambda$. Then there exists a $(d{-}1)$-sphere $S$ and a constant $\alpha <1$ depending only on $d$ such that
    \begin{enumerate}[(i)]
        \item at most $\alpha k$ balls of $\cF$ lie strictly inside $S$,
        \item at most $\alpha k$ balls of $\cF$ lie strictly outside $S$, and
        \item at most $O(\lambda^{1/d}k^{1-1/d})$ balls of $\cF$ intersect $S$.
    \end{enumerate}
\end{thm}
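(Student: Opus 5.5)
This is the geometric separator theorem of Miller et al.\ for $k$ unit balls of ply at most $\lambda$. The plan is the standard Miller--Teng--Thurston--Vavasis argument: a conformal normalization of the centers, followed by a random great sphere. Let $P$ be the set of centers of the balls in $\cF$ and view $\R^d$ as sitting inside $\R^{d+1}$.

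\textbf{Step 1 (normalization).} Map $P$ to the unit sphere $S^d\subseteq\R^{d+1}$ by inverse stereographic projection. Then apply a conformal map of $S^d$, namely a dilation about some point composed with a rotation, so that the origin becomes a \emph{centerpoint} of the image point set. Such a map exists by a Brouwer fixed-point argument, exactly as in Miller et al. Because the origin is a centerpoint, every hyperplane through it splits the images into two sides, each containing at most $\frac{d+1}{d+2}k$ points.

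\textbf{Step 2 (random cut).} Pick a uniformly random great sphere $C$ of $S^d$, i.e.\ the intersection of $S^d$ with a random hyperplane through the origin. Pull $C$ back through the conformal maps and the stereographic projection; the result is a $(d-1)$-sphere $S$ (or a hyperplane, which is a degenerate sphere) in $\R^d$. Properties (i) and (ii) then hold with $\alpha=\frac{d+1}{d+2}$ for every choice of $C$. This uses that all the maps are bijections preserving the relation ``strictly inside/outside'', with balls meeting $S$ counted separately.

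\textbf{Step 3 (counting crossings, the main obstacle).} We need to bound the expected number of balls that intersect $S$, which equals the expected number of image caps that intersect $C$. Conformal maps send balls to spherical caps, and a cap of angular radius $r$ meets a random great sphere with probability $O(r)$. So it suffices to show that the angular radii $r_B$ of the image caps satisfy
\[
\sum_{B\in\cF} r_B = O\bigl(\lambda^{1/d}k^{1-1/d}\bigr).
\]
I would obtain this from a volume argument. Since conformal maps preserve the ply, the caps have ply at most $\lambda$, so $\sum_B r_B^d = O(\lambda)$ by comparing with the total measure of $S^d$. Hölder's inequality then gives $\sum_B r_B\le k^{1-1/d}\bigl(\sum_B r_B^d\bigr)^{1/d}=O(\lambda^{1/d}k^{1-1/d})$.

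The delicate point is the cap that may contain the inversion pole, where the radius is no longer small. There is at most $O(\lambda)$ such large cap, so it contributes only $O(\lambda)$, which is absorbed into the bound. Taking a great sphere $C$ whose number of crossings is at most the expectation yields (iii).

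As an alternative, one could avoid conformal geometry altogether, since the balls here have unit size. Take the smallest ball containing $k/c$ centers and choose a random radius in $[r,2r]$. A packing argument bounds the crossings by $O(\lambda\, r^{d-1})$ when $r$ is large, and a direct count handles the case when $r$ is small. I would fall back on this route if the conformal details become heavy.
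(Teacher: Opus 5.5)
Your proposal is correct and is essentially the Miller--Teng--Thurston--Vavasis proof: normalize the centers conformally so the origin is a centerpoint, cut with a random great sphere, and bound the crossings via the ply/volume estimate $\sum_B r_B^d=O(\lambda)$ together with H\"older. The paper itself gives no proof and simply cites that result, and your separate remark about a large cap at the inversion pole is unnecessary, because no image cap contains the image of the pole and cap measure is $\Theta(r^d)$ for every angular radius $r\in[0,\pi]$, so the volume--H\"older estimate already covers large caps.
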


\begin{thm}\label{thm:mif-exact-k}
Let $d\ge 2$ be a fixed constant and $\cB$ a set of $n$ unit balls in $\R^d$. Given an integer $k$, one can decide in time $n^{O(k^{1-1/d})}$ whether the intersection graph of $\cB$ has an induced forest of size at least $k$, and if so find one.
\end{thm}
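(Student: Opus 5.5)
We argue by divide and conquer on a balanced geometric separator applied to an unknown optimal solution, in the style of the standard $n^{O(k^{1-1/d})}$ exact algorithms for independent set in unit ball graphs. We may assume the sought forest has size exactly $k$, since subsets of induced forests are induced forests. The first step is a ply bound for the solution. Partition space into subcubes of side $1/\sqrt{d}$. Balls centred in one subcube form a clique, so an induced forest contains at most $2$ of them. A point of $\R^d$ lies in balls whose centres fall into only $O_d(1)$ subcubes. Hence every induced forest $F$ has ply at most some constant $\lambda_d$, and \Cref{thm:mif-sep} applies to $F$ with $|F|=k$. It yields a sphere $S$ that at most $O(k^{1-1/d})$ balls of $F$ intersect, and such that each side of $S$ contains at most $\alpha k$ balls of $F$.

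The second step makes the separators enumerable. We restrict to a polynomial family $\mathcal{S}$ of candidate spheres, for example spheres determined by $d+1$ points from a canonical finite set derived from the centres, so that $|\mathcal{S}|=n^{O(1)}$. For each sphere in $\mathcal{S}$, the partition it induces on $\cB$ (inside, outside, crossing) must be realised by some separator equivalent to the one given by \Cref{thm:mif-sep}. This is the standard discretisation used by Marx and Sidiropoulos, and I would invoke it in the same way.

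The recursion is then set up as follows. A subproblem consists of a region $R$ (an intersection of $O(\log n)$ sphere sides, i.e.\ the path in the recursion), a set $Z$ of already-chosen balls on earlier separators with $|Z|=O(k^{1-1/d})$ per level, and a target size. For the current subproblem we guess a sphere $S\in\mathcal{S}$ and a set $X$ of at most $c\,k^{1-1/d}$ balls crossing $S$, with $X$ inducing a forest together with the relevant part of $Z$. We then recurse independently on the inside and on the outside, with $X$ added to the boundary context, and with the target $k$ split as $k_{in}+k_{out}+|X|$ over all splits.

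The main obstacle is that acyclicity, unlike independence, is not local. A cycle can pass through the inside, go through $X$, continue through the outside, and return. To handle this, each recursive call additionally guesses a connectivity pattern: a partition of the boundary balls $Z\cup X$ recording which of them are joined by paths inside the region. Acyclicity is then checked by combining the two sides' partitions with the edges among $X\cup Z$ and requiring no cycle in the union, as in rank-based/partition dynamic programming. The number of partitions of $O(k^{1-1/d}\log k)$ boundary balls is $2^{\tilde O(k^{1-1/d})}\le n^{O(k^{1-1/d})}$, so it is absorbed into the running time.

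The boundary context accumulates along the recursion, but its size is a geometric sum $\sum_i (\alpha^i k)^{1-1/d}=O(k^{1-1/d})$, so the state size stays $O(k^{1-1/d})$. The running time therefore satisfies
\[
T(k)\le n^{O(k^{1-1/d})}\cdot 2k\cdot T(\alpha k),
\]
which solves to $n^{O(k^{1-1/d})}$. Memoising on (region, boundary set, partition, target) and tracing back choices yields the forest itself. I expect the connectivity-pattern bookkeeping to be the step needing the most care.
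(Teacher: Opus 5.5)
Your proposal is sound in outline and follows the paper's skeleton:
\begin{itemize}
\item bounded ply of an induced forest (the paper uses the sharper fact that three unit balls through a common point form a triangle, so the ply is at most $2$; your subcube count giving $O_d(1)$ also suffices);
\item a balanced sphere separator from \Cref{thm:mif-sep} applied to the unknown solution;
\item polynomially many candidate spheres;
\item guessing the $O(k^{1-1/d})$ solution balls $X$ crossing $S$, followed by a balanced split of the target.
\end{itemize}

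Where you genuinely differ is the treatment of acyclicity. The paper guesses only a partition of $X$ into tree components and deletes from each side balls ``whose inclusion would close a cycle through $X$''. It then calls $\mathrm{MIF}(\cB_{\mathrm{in}},k_1)$ and $\mathrm{MIF}(\cB_{\mathrm{out}},k_2)$ as ordinary instances that know nothing about $X$, and checks the union at the end. This keeps the recurrence $T(n,k)\le n^{O(k^{1-1/d})}\cdot 2T(n,\alpha k)$ in the current $k$ alone. However, a child call may return a forest that joins two balls of $X$ by a longer path, or that merges blocks the other side also merges. The final check then discards that branch even though a compatible forest exists, so the paper's claim that ``some branch discovers $F^\star$'' is not really justified. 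Your approach passes the accumulated separator balls $Z\cup X$ and a connectivity pattern into every subproblem, and glues the two sides' patterns with the edges incident to $X$. This is the standard way to make the step sound, and it also catches cycles running through separators of earlier levels. The price is a larger subproblem state.

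That price shows up in your running-time accounting, which needs one correction. The patterns range over the whole accumulated boundary, of size $\Theta(K^{1-1/d})$ for the root target $K$, and this does not shrink with depth. Guessing patterns top-down therefore contributes a factor $2^{O(K^{1-1/d}\log K)}$ at each of $\Theta(\log K)$ levels, i.e.\ $2^{O(K^{1-1/d}\log^2 K)}$. That is not $n^{O(K^{1-1/d})}$ when $K$ is polynomial in $n$, so $T(k)\le n^{O(k^{1-1/d})}\cdot 2k\cdot T(\alpha k)$ does not describe your algorithm.

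The memoisation you mention fixes this. There are $n^{O(\log K)}\cdot n^{O(K^{1-1/d})}\cdot 2^{O(K^{1-1/d}\log K)}\cdot K=n^{O(K^{1-1/d})}$ states, each processed in $n^{O(K^{1-1/d})}$ time. Equivalently, let each call return the set of realisable patterns, so the pattern cost is additive per node rather than multiplicative per level.

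For the same reason, only a single logarithmic factor can be absorbed into $n^{O(k^{1-1/d})}$. You therefore need the geometric-sum bound $O(k^{1-1/d})$ on the boundary, not the $O(k^{1-1/d}\log k)$ you state first.
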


\begin{proof}

We describe a recursive procedure $\textsc{MIF}(\mathcal{B},k)$ that returns an induced forest of size~$k$ in the intersection graph of $\mathcal{B}$ if one exists. If $k \le k_{0}$ for a sufficiently large constant $k_{0}$, we enumerate all $n^{O(1)}$ subsets of $\cB$ of size $k$ and test each for being an induced forest.

If $F$ is a set of unit balls in $\R^d$ whose intersection graph is a forest, then $F$ has ply at most~$2$. By~\Cref{thm:mif-sep}, $F$ admits a balanced $(d-1)$-sphere separator $S$ that is crossed by at most $O(k^{1-1/d})$ balls of $F$. The sphere $S$ can be chosen tangent to at most $d+1$ balls, so there are $n^{O(1)}$ candidate spheres. We enumerate all $n^{O(1)}$ candidate spheres. For each candidate sphere $S$, we enumerate all subsets $X \subseteq \cB$ of size $O(k^{1-1/d})$ consisting of balls intersecting $S$ (discard $X$ if $X$ induces a cycle). There are at most $n^{O(k^{1-1/d})}$ such subsets. Next, we additionally guess the partition of $X$ into tree components of the final forest, to ensure that attaching forests from inside and outside $S$ does not create a cycle through $X$. There are at most $2^{O(k^{1-1/d})}$ such partitions.

Let $\cB_{\mathrm{in}}$ (resp.\ $\cB_{\mathrm{out}}$) be the balls of $\cB\setminus X$ strictly inside (resp.\ outside) $S$. No ball of $\cB_{\mathrm{in}}$ is adjacent to any ball of $B_{\mathrm{out}}$. We remove from each side any ball whose inclusion would close a cycle through $X$ under the guessed partition. For every split $k_1+k_2=k-|X|$ with $k_1\le\alpha k$ and $k_2\le\alpha k$, we recurse
to obtain $F_{\mathrm{in}} = \mathrm{MIF}(B_{\mathrm{in}},k_1)$ and $F_{\mathrm{out}} = \mathrm{MIF}(B_{\mathrm{out}},k_2)$. If both calls succeed we form $F:=F_{\mathrm{in}}\cup X\cup F_{\mathrm{out}}$ and verify whether $F$ is an induced forest of size $k$.

If $\cB$ contains a forest $F^\star$ of size $k$ then $F^\star$ has a separator sphere $S$ with at most $O(k^{1-1/d})$ crossing balls and a balanced split. The algorithm enumerates all candidate $(S,X,\text{partition})$ triples, so some branch discovers $F^\star$. Let $T(n,k)$ be the running time.  At each level the guessing costs $n^{O(k^{1-1/d})}$, after which both subproblems have parameter at most $\alpha k$. Hence $T(n,k) \le n^{O(k^{1-1/d})} \cdot 2\,T(n,\alpha k)$ which solves to $T(n,k)=n^{O(k^{1-1/d})}$.
\end{proof}

Alternatively, one can bound the $\mathcal{P}$-flattened treewidth by at most the separator size times $\log n$, and then solve \MIF{} in a bounded box of side length $\ell$ in time $n^{O(\ell^{d-1})}$ using the $\mathcal{P}$-flattened treewidth machinery of De~Berg~\etal~\cite{deBergBKMZ20}. Moreover, our proofs of \Cref{thm:mif-exact-k} and \Cref{thm:mif-algo} also apply to \MIM{}. By definition, it suffices to check for an induced matching rather than an induced forest. We state this in the following corollaries.

\begin{cor}\label{thm:mim-algo}
For every fixed $d \geq 2$ and $\eps > 0$, there is a $(1-\eps)$-approximation algorithm for \MIM{} on unit ball (or unit cube) intersection graphs in $\R^d$ running in time $n^{O(1/\eps^{d-1})}$.
\end{cor}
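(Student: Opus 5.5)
We follow the proof of \Cref{thm:mif-algo} verbatim, changing only the exact subroutine and the hereditary argument. Fix $\eps>0$, set $\ell=\lceil 2d/\eps\rceil$, and for each offset $i\in\{0,\dots,\ell-1\}^d$ remove the boundary balls (centers within distance $1$ of a grid hyperplane). Then solve \MIM{} exactly in each cell of side $\ell$, take the union of the cell solutions, and return the best union over all offsets.

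\emph{Correctness.} Induced matchings are not hereditary under deleting arbitrary vertices, so we delete whole edges instead. Let $M^\star$ be an optimal induced matching. Call an edge of $M^\star$ a boundary edge if at least one of its two endpoints is a boundary ball. The averaging argument of \Cref{lem:shift}, applied to the $2|M^\star|$ endpoints, gives an offset $i^\star$ for which at most $\eps\cdot 2|M^\star|/2\cdot 2$ endpoints are boundary balls. Taking $\ell=\lceil 4d/\eps\rceil$ instead, at most $\eps|M^\star|$ edges are boundary edges. Removing these edges from $M^\star$ leaves an induced matching, because a subset of the edges of an induced matching is still one. Each remaining edge joins two intersecting non-boundary balls, so both endpoints lie in the same cell. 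Edges in different cells are non-adjacent, so the remaining matching splits into induced matchings of the cell graphs $G_C$. Conversely, the union of induced matchings of distinct cells is an induced matching of $G$, since there are no edges between non-boundary balls of different cells. Hence the output has size at least $(1-\eps)\OPT$.

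\emph{Exact subroutine.} Next we need the analogue of \Cref{thm:mif-exact-k}: deciding in time $n^{O(k^{1-1/d})}$ whether an induced matching with $k$ edges exists. The set $F$ of the $2k$ endpoints induces a matching, so each point lies in at most $2$ balls of $F$: three balls through a common point would form a triangle. Thus $F$ has ply at most $2$, and \Cref{thm:mif-sep} yields a balanced sphere separator crossed by $O(k^{1-1/d})$ balls of $F$. We guess the sphere ($n^{O(1)}$ choices) and the crossing set $X$ together with its matched partners. Partners of crossing balls are also guessed and moved into $X$, at most doubling its size. We then delete from $\cB_{\mathrm{in}}$ and $\cB_{\mathrm{out}}$ every ball adjacent to $X$, and recurse on the two sides with split budgets. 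Any matching edge not touching $X$ lies entirely on one side, since its endpoints intersect and neither crosses $S$. This gives the same recurrence $T(n,k)\le n^{O(k^{1-1/d})}\cdot O(k)\,T(n,\alpha k)$, which solves to $n^{O(k^{1-1/d})}$.

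\emph{Running time.} Packing gives $\OPT(G_C)=O(\ell^d)$, since each subcube of diameter $1$ contains at most $2$ endpoints. Solving each cell exactly therefore costs $n^{O(\ell^{d-1})}$. Summing over cells and offsets, the total time is $n^{O(1/\eps^{d-1})}$. The unit cube case is identical, using the analogous separator for cubes. The main point of care is the non-hereditary nature of induced matchings, which the edge-based charging above handles.
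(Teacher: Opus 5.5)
Your proposal is correct and follows the paper's route: the shifting scheme of \Cref{thm:mif-algo} with boundary removal, combined with the separator-based exact solver of \Cref{thm:mif-exact-k}, using ply at most $2$ and the $O(\ell^d)$ packing bound per cell. The paper only asserts that these proofs carry over to induced matchings, and your two explicit adaptations fill in exactly what it leaves implicit: deleting whole boundary edges (induced matchings survive edge deletion but not vertex deletion), and closing the crossing set $X$ under matched partners before deleting its neighbours. The intermediate expression $\eps\cdot 2|M^\star|/2\cdot 2$ is garbled, but your final bound of at most $\eps|M^\star|$ boundary edges with $\ell=\lceil 4d/\eps\rceil$ is right.
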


\begin{cor}\label{thm:mim-exact-k}
Let $d\ge 2$ be a fixed constant and $\cB$ a set of $n$ unit balls in $\R^d$. Given an integer $k$, one can decide in time $n^{O(k^{1-1/d})}$ whether the intersection graph of $\cB$ has an induced matching of size at least $k$, and if so find one.
\end{cor}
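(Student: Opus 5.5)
The plan is to mirror the proof of \Cref{thm:mif-exact-k} and change only the feasibility test and the bookkeeping across the separator. We run a recursive procedure $\textsc{MIM}(\cB,k)$ that returns an induced matching with $k$ edges (equivalently, $2k$ endpoint balls) if one exists. The base case $k\le k_0$ enumerates all $n^{O(1)}$ subsets of $2k$ balls and checks whether the induced subgraph is $1$-regular.

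For the recursive step, let $M$ be an induced matching with vertex set $U$, $|U|=2k$. The intersection graph of $U$ is a perfect matching, so every point lies in at most $2$ balls of $U$: three balls covering a common point would be pairwise adjacent and induce a triangle. Hence $U$ has ply at most $2$. Applying \Cref{thm:mif-sep} to $U$ gives a sphere $S$ that at most $O(k^{1-1/d})$ balls of $U$ cross, and at most $\alpha\cdot 2k$ balls of $U$ lie strictly on either side. As before, we may take $S$ from $n^{O(1)}$ combinatorially distinct candidates. We then guess the set $X\subseteq\cB$ of crossing balls of the solution, giving $n^{O(k^{1-1/d})}$ choices.

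The one new ingredient concerns balls of $X$ whose matching partner lies strictly inside or outside $S$. We handle these by also guessing, for each ball of $X$, its partner ball. This costs another $n^{O(k^{1-1/d})}$ choices and enlarges $X$ to a set $X'$ of size $O(k^{1-1/d})$ that is closed under partners. We discard the guess unless $X'$ itself induces a perfect matching. We then delete from $\cB_{\mathrm{in}}$ and $\cB_{\mathrm{out}}$ the balls of $X'$ and every ball adjacent to a ball of $X'$. No remaining ball can join the solution anyway, since matched endpoints may not be adjacent to endpoints of other edges.

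The remaining inside and outside balls are mutually nonadjacent, because $S$ separates them. So an induced matching of the remaining inside balls, together with one of the remaining outside balls and $X'$, is again an induced matching. We recurse on $\cB_{\mathrm{in}}$ and $\cB_{\mathrm{out}}$ over all splits $k_1+k_2=k-|X'|/2$ with $k_1,k_2\le\alpha' k$ for a suitable constant $\alpha'<1$.

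Correctness follows as in \Cref{thm:mif-exact-k}: some branch guesses the true separator, the crossing balls and their partners, and the restriction of $M$ to each side is a valid solution of the recursive subproblem. The recurrence $T(n,k)\le n^{O(k^{1-1/d})}\cdot O(k)\cdot T(n,\alpha' k)$ solves to $n^{O(k^{1-1/d})}$. The expected obstacle is making sure the separator still balances the number of matching edges rather than balls. A partner pair can straddle $S$ only if one endpoint crosses $S$, since adjacent unit balls are within distance $1$. Hence every edge not in $X'$ lies entirely on one side, and the balance on balls carries over to edges up to additive $O(k^{1-1/d})$ terms, which get absorbed for $k>k_0$.
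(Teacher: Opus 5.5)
Your proposal is correct and follows the paper's route: the paper proves this corollary only by remarking that the separator-based recursion of \Cref{thm:mif-exact-k} carries over once the induced-forest test is replaced by an induced-matching test. Your extra step (guessing the partner of each crossing ball, then deleting every neighbour of $X'$ from both sides) is the matching analogue of the paper's guess of tree components, and it makes explicit what the paper leaves implicit: why the solutions returned by the two recursive calls always combine with $X'$ into an induced matching.
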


Note that, by the corresponding version of~\cite[Theorem 2.20]{MarxS14} for \tcsp{}, together with our reductions in the proofs of \Cref{thm:mif-ball} and \Cref{thm:mim-ball}, the running times in \Cref{thm:mif-exact-k} and \Cref{thm:mim-exact-k} are tight under ETH. In particular, for any fixed $d \ge 2$, if there exists an algorithm running in time $f(k)n^{o(k^{1-1/d})}$ for finding an induced forest (or induced matching) of size $k$ in the intersection graph of $n$ open unit balls (or unit cubes) in $\mathbb{R}^d$, then ETH fails.

\subsection{PTAS for \MPS{}}

In this subsection, we present a PTAS for \MPS{} on unit balls in $\R^d$ with running time $n^{O(1/\eps^{d-1})}$, using the shifting technique of Hochbaum and Maass~\cite{HochbaumM85}. The key difference from the \textsc{Maximum Induced Forest} setting is that \MPS{} is a minimization problem. Therefore, instead of removing boundary balls from the optimal solution, we account for optimal piercing points near grid boundaries being shared across neighboring cells. Due to the contribution of the boundary piercing points, the cells side length is chosen differently in this application.

\begin{thm}\label{thm:mps-algo}
For every fixed $d \geq 2$ and $\eps > 0$, there is a $(1+\eps)$-approximation algorithm for \MPS{} on unit balls in $\R^d$ running in time $n^{O(1/\eps^{d-1})}$.
\end{thm}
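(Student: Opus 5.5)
We work in the covering formulation: given the $n$ ball centers $P$, find a minimum set of points $Q$ with every center within distance $1/2$ of some point of $Q$. Equivalently, place the fewest unit-diameter balls covering $P$.

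We use shifting with cell side $\ell = \lceil c_d/\eps\rceil$ for a constant $c_d$ fixed later. For each offset $i \in \{0,\dots,\ell-1\}^d$ we partition $\R^d$ into half-open cells of side $\ell$. We then solve each nonempty cell independently and exactly, covering only the points $P\cap C$, and return the union over cells. The union is always feasible. We take the best offset overall; there are $O(1/\eps^d)$ of them.

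For the approximation analysis, fix an optimal solution $Q^\star$. Charge to each cell $C$ the points of $Q^\star$ that cover some point of $P\cap C$; these are the points of $Q^\star$ within distance $1/2$ of $C$. The charged set covers $P\cap C$, so $\OPT(P\cap C)$ is at most its size. A point of $Q^\star$ is charged to more than one cell only if, in some coordinate $r$, its $r$-th coordinate lies within $1/2$ of a grid hyperplane. In that case it is charged to at most $2^d$ cells. For a fixed coordinate, at most $O(1)$ of the $\ell$ values of $i_r$ make a given point close to a hyperplane. Averaging over offsets as in \Cref{lem:shift}, some offset has at most $O(d/\ell)\,|Q^\star|$ such points. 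The total cost is then at most
\[
|Q^\star| + (2^d-1)\cdot O(d/\ell)\,|Q^\star| \le (1+\eps)|Q^\star|,
\]
once $c_d = \Theta(d\,2^d)$. This is why the side length differs from the \MIF{} case.

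For the per-cell exact step, note that a cell of side $\ell$ is covered by $O_d(\ell^d)$ axis-parallel cubes of diameter $1$. Hence $\OPT(P\cap C) = k_C = O(\ell^d)$. It remains to find a minimum cover of $P\cap C$ in time $n^{O(k_C^{1-1/d})} = n^{O(\ell^{d-1})}$. Summing over cells and offsets as in the proof of \Cref{thm:mif-algo} then gives $n^{O(1/\eps^{d-1})}$.

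To discretize, we use the standard fact that one may restrict piercing points to a candidate set $\Pi$ of size $n^{O(d)}$. We take one point from each cell of the arrangement of the $n$ balls of radius $1/2$ around the input centers; any point can be moved within its arrangement cell without changing which centers it covers.

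We then mimic \Cref{thm:mif-exact-k}. Take a minimal optimal cover $Q$ of size $k$. By minimality we may assume its points are pairwise at distance bounded below, for instance after merging points that cover nested sets. Then the balls of radius $1/2$ around $Q$ have ply $O(1)$, possibly after a constant-factor argument using a packing bound. \Cref{thm:mif-sep} then yields a sphere $S$, determined by $O(d)$ points of $\Pi$, that splits $Q$ in a balanced way with $O(k^{1-1/d})$ balls crossing it.

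The recursion goes as follows. We guess $S$ and the crossing set $X\subseteq \Pi$ of size $O(k^{1-1/d})$, in $n^{O(k^{1-1/d})}$ ways. We delete the input points covered by $X$. The remaining points strictly inside $S$ can be covered only by points of $Q$ whose balls lie inside $S$ or cross it, and similarly outside. So we recurse on the two sides with budgets $k_1 + k_2 = k - |X|$, each at most $\alpha k$. The recurrence $T(k)\le n^{O(k^{1-1/d})}\cdot T(\alpha k)$ solves to $n^{O(k^{1-1/d})}$.

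The main obstacle is justifying the bounded ply of an optimal piercing set. Two optimal points may coincide closely. I would handle this with a packing argument: in a minimal cover, each point has a private input center, so any unit region contains $O_d(1)$ optimal points. This suffices because \Cref{thm:mif-sep} only needs constant ply.
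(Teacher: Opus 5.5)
Your shifting and charging analysis is the same as the paper's; you even use the slightly tighter $1/2$-neighbourhood of a cell instead of the paper's expanded cell $C^{+}$. The difference is the exact solver for a single cell. The paper's \Cref{lem:mps-exact} never reasons about the shape of an optimal solution. It halves the bounding box along its longest side and observes that all balls with centers in the width-$2$ slab around the cut can be pierced by a grid of $O(\ell^{d-1})$ points. It then guesses the optimal piercing points of these slab balls among arrangement-cell representatives and recurses on the two remaining sides, which no longer interact. This uses only that a cell is a box of side $\ell$, and needs neither a separator theorem nor any ply bound. You instead re-derive the output-sensitive bound $n^{O(k^{1-1/d})}$ (which the paper only quotes, as \Cref{thm:mps-exact-k}) by a sphere-separator recursion on an optimal cover, as in \Cref{thm:mif-exact-k}, and then plug in $k_C=O(\ell^d)$. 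Your route is more general, since it is parameterized by solution size rather than box size. Its price is that you must prove an optimal cover has ply $O_d(1)$.

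That step, which you flag as the main obstacle, is justified incorrectly. Having a private center does not bound the density of a cover. Take many distinct unit vectors $u_i$, points $q_i=\eta u_i$, and centers $c_i=(1/2+\eta-\xi)u_i$ with $\xi>0$ small enough relative to $\eta$ and the minimum angle between the $u_i$. Then $\|c_i-q_i\|=1/2-\xi$, while $\|c_i-q_j\|\ge 1/2$ for $j\ne i$. So each $q_i$ covers exactly $c_i$ and the cover is irredundant, yet all $q_i$ lie in a ball of radius $\eta$. The same example shows that irredundancy gives no lower bound on pairwise distances either, so the ``merging nested sets'' remark does not help.

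What actually yields the bound is minimum cardinality, via an exchange argument. The points of $Q$ lying in $B(x,1/2)$ can only cover centers in $B(x,1)$, and those centers are covered by $O_d(1)$ grid points. If $Q$ had more than that many points in $B(x,1/2)$, swapping them for the grid points would give a smaller cover. Hence every optimal cover, of the cell and of every subinstance in your recursion, has ply $O_d(1)$. With this exchange argument in place of the private-center argument, your route goes through.
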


\begin{proof}
Fix $\eps > 0$. Set the shifting parameter $\ell = \lceil 2d \cdot 2^d / \eps \rceil$. For each $d$-tuple of offsets $\ve{i} = (i_1, \ldots, i_d) \in \{0, \ldots, \ell-1\}^d$, partition~$\R^d$ into axis-aligned cubes of side length~$\ell$:
\[
\text{Cell}_{\ve{j}}^{(\ve{i})} = \prod_{r=1}^{d} \big[i_r + j_r \ell, i_r + (j_r+1)\ell\big), \qquad \ve{j} \in \mathbb{Z}^d.
\]
A point $p \in \R^d$ is a \emph{boundary point} for offset~$\ve{i}$ if, for some coordinate~$r$ and some integer~$j_r$, the $r$-th coordinate of~$p$ is within distance~$1$ of the grid hyperplane $x_r = i_r + j_r \ell$. The following algorithm computes a $(1+\eps)$-approximation for \MPS{} on a set~$\cB$ of unit balls.
 
\begin{enumerate}
\item For each offset $\ve{i} \in \{0, \ldots, \ell-1\}^d$:
  \begin{enumerate}
  \item For each cell~$C$, let $\cB_C$ denote the set of balls in~$\cB$ whose centers lie in~$C$.
  \item Solve \MPS{} on $\cB_C$ exactly using \Cref{lem:mps-exact}. Let $S_C$ denote the resulting piercing set.
  \item Let $S_{\ve{i}} = \bigcup_C S_C$.
  \end{enumerate}
\item Return the smallest $S_{\ve{i}}$ over all offsets.
\end{enumerate}

Every ball $B \in \cB$ has its center in a unique cell~$C$, and $S_C$ pierces all balls in $\cB_C$. Hence $S_{\ve{i}}$ is a valid piercing set for all of~$\cB$. We now show that some offset yields a near-optimal solution.
 
\begin{lem}\label{lem:shift-mps}
    There exists an offset $\ve{i}^\star$ such that $|S_{\ve{i}^\star}| \leq (1+\eps) \OPT(\cB)$.
\end{lem}

\begin{proof}
    For a grid cell $C$, define the \emph{expanded cell} $C^{+} = \{x \in \R^d : \mathrm{dist}(x, C) \leq 1\}$, i.e., the Minkowski sum of~$C$ with the unit ball centered at the origin. Let $P^\star$ be an optimal piercing set with $|P^\star| = \OPT(\cB)$. We classify each point of~$P^\star$ according to whether it is a boundary point for the given offset.
    
    Consider a cell $C$ and a unit ball $B \in \cB_C$ centered at $c$, so $c \in C$. Since $P^\star$ pierces $B$, there exists $p \in P^\star$ with $\|p - c\|| \leq 1/2$. In particular, $p \in C^{+}$. Hence the set $P^\star \cap C^{+}$ is a feasible piercing set for $\cB_C$. Since $S_C$ is an optimal piercing set for $\cB_C$, we have $|S_C| \leq |P^\star \cap C^{+}|$, and therefore
    \[
    |S_{\ve{i}}| = \sum_C |S_C| \leq \sum_C |P^\star \cap C^{+}|.
    \]
     
    If $p \in P^\star$ is a non-boundary point (i.e., every coordinate of~$p$ has distance greater than~$1$ from all grid hyperplanes), then $p$ lies in exactly one cell~$C_p$ and belongs to $C_p^{+}$ only. If $p$ is a boundary point, it belongs to at most $2^d$ expanded cells. Denoting the set of boundary points of~$P^\star$ for offset~$\ve{i}$ by $P^\star_{\mathrm{bd}}$, we obtain
    \[
        \sum_C |P^\star \cap C^{+}| \le \bigl|P^\star \setminus P^\star_{\mathrm{bd}}\bigr| + 2^d \cdot \bigl|P^\star_{\mathrm{bd}}\bigr| = \OPT + (2^d - 1)\cdot |P^\star_{\mathrm{bd}}|.
    \]

    Fix a coordinate $r \in \{1, \ldots, d\}$. For each value $i_r \in \{0, \ldots, \ell-1\}$, a point $p \in P^\star$ is boundary in coordinate~$r$ for offset~$i_r$ if and only if the $r$-th coordinate of~$p$ is within distance~$1$ of some grid hyperplane $x_r = i_r + j_r \ell$. Since grid hyperplanes are spaced $\ell \geq 2$ apart, the interval of width $2$ around any fixed coordinate value meets at most $2$ grid lines, corresponding to at most $2$ offsets. Let $P^\star_{r,i_r}$ be the set of boundary points in the $r$-th coordinate for the offset $i_r$. Hence $\sum_{i_r = 0}^{\ell-1} \bigl|P^\star_{ r,i_r}\bigr| \leq 2 \, \OPT(\cB)$ and $\min_{i_r} \bigl|P^\star_{r, i_r}\bigr| \le 2\,\OPT(\cB)/\ell$. Taking a union bound over all $d$~coordinates, the best joint offset $\ve{i}^\star$ satisfies
    \[
    |P^\star_{\mathrm{bd}}| \le d \cdot \frac{2\,\OPT(\cB)}{\ell} \le \frac{\eps}{2^d - 1} \OPT(\cB),
    \]
    where the last inequality uses $\ell \geq 2d(2^d - 1)/\eps$. Therefore,
    \[
    |S_{\ve{i}^\star}| \le \OPT(\cB) + (2^d - 1)\cdot \frac{\eps}{2^d - 1} \OPT(\cB) = (1+\eps)\OPT(\cB).
    \]
\end{proof}
Using \Cref{lem:mps-exact} (or \Cref{thm:mps-exact-k}), the running time analysis is identical to the one in the proof of \Cref{thm:mif-algo}.
\end{proof}

\subsubsection*{Solving \MPS{} via Geometric Separator}

The following result is a corollary of an algorithm of Agarwal and Procopiuc~\cite{AgarwalP02} for the $k$-center problem in $\R^d$
under $L_\infty$ and $L_2$-metrics.

\begin{thm}[Corollary of~\cite{AgarwalP02}]\label{thm:mps-exact-k}
Let $d\ge 2$ be a fixed constant and $\cB$ a set of $n$ unit balls (unit cubes) in $\R^d$. Given an integer $k$, one can decide in time $n^{O(k^{1-1/d})}$ whether $\cB$ has a piercing set of size at most~$k$, and if so find one.
\end{thm}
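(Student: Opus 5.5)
The plan is to derive \Cref{thm:mps-exact-k} from the discrete $k$-center algorithm of Agarwal and Procopiuc~\cite{AgarwalP02}, via the standard duality between piercing and covering. A point $p$ pierces the open unit ball $B$ (diameter $1$) with center $c$ if and only if $\|p-c\|<1/2$. Hence $\cB$ has a piercing set of size at most $k$ if and only if the center set $P=\{c_B : B\in\cB\}$ can be covered by $k$ open balls of radius $1/2$. For unit cubes the same holds with $L_\infty$-balls of radius $1/2$. So the question is exactly whether the Euclidean (resp.\ $L_\infty$) $k$-center radius of $P$ is strictly less than $1/2$. Agarwal and Procopiuc compute the optimal $k$-center radius $r^\star$ of $n$ points in $\R^d$, under both $L_2$ and $L_\infty$, in time $n^{O(k^{1-1/d})}$, together with an optimal set of $k$ centers. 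We answer yes if and only if $r^\star<1/2$, and in that case we output the computed centers as the piercing set.

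The key steps, in order, are the following.
\begin{enumerate}
\item Verify the equivalence between piercing and covering stated above.
\item Handle the open/closed boundary. Agarwal--Procopiuc work with closed balls and a minimum radius that is attained. Since $P$ is finite, every cover by open balls of radius $1/2$ can be shrunk to a cover by closed balls of some radius $r<1/2$. Conversely, if $r^\star<1/2$, the closed optimal balls sit inside the open balls of radius $1/2$ with the same centers. So comparing $r^\star$ strictly with $1/2$ is exactly right.
\item Recover the centers. If the cited algorithm only reports the value $r^\star$ and not the centers, we use the fact that their algorithm works by enumerating separators: it recursively guesses $O(k^{1-1/d})$ boundary centers drawn from a polynomial-size candidate set. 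Such a procedure yields the centers by standard backtracking, with the same asymptotic running time.
\end{enumerate}

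The main obstacle is confirming that the cited result applies in the required form: it must give the running time $n^{O(k^{1-1/d})}$ for every fixed $d$ and for both $L_2$ and $L_\infty$, and it must handle the strictness correctly. My first check will be to state their theorem precisely. If it were unavailable, the fallback is to mimic the proof of \Cref{thm:mif-exact-k}. Fix an optimal piercing set of size $k$. We may assume each point is a canonical vertex of an arrangement of the $n$ spheres, which gives $n^{O(1)}$ candidate points. The small balls of radius $1/2$ around these points can be taken to have bounded ply after discarding redundant points. We then apply \Cref{thm:mif-sep}, guess the $O(k^{1-1/d})$ points near the separator, and recurse on the two sides. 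This gives the recurrence $T(n,k)\le n^{O(k^{1-1/d})}\,T(n,\alpha k)$, which solves to $n^{O(k^{1-1/d})}$.
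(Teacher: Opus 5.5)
Your proposal is correct and follows the paper, which gives no separate argument and simply states the result as a corollary of the $n^{O(k^{1-1/d})}$-time Agarwal--Procopiuc algorithm for $k$-center in $\R^d$ under $L_2$ and $L_\infty$. Your reduction, namely piercing the open unit balls (or cubes) $\Leftrightarrow$ covering their centers by $k$ open balls of radius $1/2$ in the corresponding metric, decided by testing $r^\star<1/2$, is exactly the step the paper leaves implicit.

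The only fix is terminological. You need the continuous $k$-center problem (centers anywhere in $\R^d$), not the \emph{discrete} one named in your first sentence, since piercing points need not be ball centers. The continuous version is in fact the one you use afterwards.
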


For completeness, we solve the exact subproblems within a bounded box simply using a separator-based recursive approach. We find a  separator of size $O(\ell^{d-1})$, enumerate all ways the optimal solution can interact with the separator, and recurse on both sides.

\begin{lem}\label{lem:mps-exact}
Let $d \geq 2$ be a fixed constant. Let $\cB$ be a set of $n$ unit balls in $\R^d$ whose centers lie in a cube $[0, \ell]^d$. Then \MPS{} on~$\cB$ can be solved in time $n^{O(\ell^{d-1})}$.
\end{lem}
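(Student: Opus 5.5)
We prove Lemma~\ref{lem:mps-exact} with a separator-based recursion in the style of the proof of \Cref{thm:mif-exact-k}. The parameter, however, will be the side length $\ell$ of the bounding region rather than the solution size. The first step is to reduce to a finite candidate set of piercing points. Every unit ball is open with diameter $1$, so each set of balls pierced by a common point is determined by the arrangement of the $n$ spheres. We pick one point in each cell of this arrangement, giving a candidate set $\cP$ of $n^{O(d)}=n^{O(1)}$ points, and we may assume an optimal piercing set uses only points of $\cP$. Only candidates inside the expanded region $[-1/2,\ell+1/2]^d$ matter. Second, we bound the optimum. Partition $[0,\ell]^d$ into subcubes of diameter less than $1/2$; one point per subcube placed at its centre pierces every ball whose centre lies in it. Hence $\OPT(\cB)=O_d(\ell^d)$, which yields a trivial base case when $\ell$ is constant.

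The recursion uses axis-parallel hyperplane separators instead of spheres, which keeps the bookkeeping simple. Let $R$ be the current box of side lengths at most $\ell$. We cut the longest side of $R$ by the hyperplane $H:\,x_r=h$ through its middle. Any piercing point within distance $1/2$ of $H$ lies in the slab $\{|x_r-h|\le 1/2\}$ intersected with the expanded box, a region of volume $O(\ell^{d-1})$. Fix an optimal piercing set $P^\star$ and replace each point by the centre of its small subcube as above. Then only $O(\ell^{d-1})$ points of $P^\star$ lie in the slab, up to a constant number per subcube. We guess this set $X\subseteq \cP$ of slab points, with $|X|=O(\ell^{d-1})$, in $n^{O(\ell^{d-1})}$ ways. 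We remove all balls pierced by $X$. The remaining balls have no piercing point inside the slab in this solution, so each is pierced by points strictly on one side of $H$; a ball crossing $H$ that is not pierced by $X$ is forced to one side and may be discarded from the other. We assign every remaining ball to the side containing its centre, recurse on the two halves, and return the best $X\cup S_{\mathrm{left}}\cup S_{\mathrm{right}}$ over all guesses.

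The main obstacle is to make the two subproblems genuinely independent and to keep the total running time at $n^{O(\ell^{d-1})}$. For independence, we should recurse on candidate points rather than on balls. The subproblem on a box $R$ receives the set of still-unpierced balls that must be pierced by points in $R$, and it chooses piercing points from $\cP\cap R$. A remaining ball centred in one half but reaching across $H$ would then have to be pierced outside the slab on its own side. This holds because every point within distance $1/2$ of $H$ is already accounted for by $X$, so the halves do not interact.

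For the running time, the recurrence is $T(\ell)\le n^{O(\ell^{d-1})}\cdot 2\,T(\ell')$. Since we always halve the longest side, the sides shrink geometrically: after $d$ levels every side has halved, and the separator sizes then decrease by a factor of $2^{d-1}$. The exponents therefore form a geometric series $\sum_j O((\ell/2^{j})^{d-1})=O(\ell^{d-1})$. The factor $2^{\text{depth}}$ with depth $O(d\log \ell)$ is absorbed, giving $T=n^{O(\ell^{d-1})}$. The base case, with all sides constant, is solved by brute force over $n^{O(1)}$ subsets of $\cP$ of constant size.
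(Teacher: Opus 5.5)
Your plan is essentially the paper's proof. You cut the longest side by an axis-parallel hyperplane through its middle, enumerate $n^{O(\ell^{d-1})}$ sets of arrangement-cell representatives near the cut, send each unpierced ball to the side of its centre, and solve $T\le n^{O(\ell^{d-1})}\cdot 2\,T(\cdot)$ using that the longest side halves every $d$ levels. The only variation is what you guess:
\begin{itemize}
\item The paper guesses a piercing set of size $O(\ell^{d-1})$ for all balls centred within distance $1$ of the cut, so the leftover balls on the two sides have centres more than $2$ apart.
\item You guess the optimum's own points within distance $1/2$ of $H$. A ball not pierced by them must then be pierced at distance more than $1/2$ from $H$, hence on the side of its centre.
\end{itemize}
That independence argument is correct. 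Let the subproblems draw candidates from the expanded half-box rather than the half-box itself, so that the optimum's points are available to them.

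However, your justification of $|X|=O(\ell^{d-1})$, which is what bounds the number of guesses, does not work. Moving a point $p\in P^\star$ to the centre $z$ of its small subcube shifts it by up to $1/4$. So a ball with centre $c$ satisfying $\|p-c\|<1/2\le\|z-c\|$ loses its piercer. The normalized set therefore need not be a piercing set, and nothing you say bounds how many points of an optimal solution fall into one subcube.

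The missing step is an exchange argument:
\begin{itemize}
\item Every ball pierced by a point of $X=P^\star\cap\{|x_r-h|\le 1/2\}$ has its centre in $\{|x_r-h|<1\}\cap R$.
\item The $O(\ell^{d-1})$ centres $Z$ of the small subcubes covering this region pierce all such balls.
\item Hence $(P^\star\setminus X)\cup Z$ is again a piercing set, and optimality of $P^\star$ gives $|X|\le|Z|=O(\ell^{d-1})$. Replace the points of $Z$ by their cell representatives if you want them in $\cP$.
\end{itemize}
Apply the same argument to the optimum of each subproblem. With this repair your proof goes through.
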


\begin{proof}
    Consider the arrangement $\cA$ of $n$ balls in~$\cB$. For fixed $d$, this arrangement has $n^{O(1)}$ cells and can be computed in polynomial time.
    We proceed by divide-and-conquer on~$\ell$, always splitting along the longest side of the bounding box. When $\ell = O(1)$ the optimal piercing set has constant size (a grid of $O(\ell^d)=O(1)$ points with spacing $1/2\sqrt{d}$ already pierces every ball), so brute-force over $n^{O(1)}$ cells of $\cA$ suffices.
    
    For the recursive step, let $\ell_j$ be the longest side of the current bounding box with side lengths $\ell_1 \times \ell_2 \times \cdots \times \ell_d$. We define $m := \ell_j/2$. Let $\cS$ be the set of all the ball in $\cB$ with center in the slab~$\{x \in \R^d : |x(j) - m| \leq 1\}$. Again a grid with spacing $1/2\sqrt{d}$ on this slab yields $q = O(\ell^{d-1})$ points that pierce every ball in $\cS$, so any optimal solution uses at most $q$ to pierce the balls in~$\cS$. An optimal piercing set may be assumed to contain at most one representative point per cell of the arrangement~$\cA$. We choose the candidate piercing sets for $\cS$ from the cell representatives of~$\cA$. Each candidate $P_{\cS}$ is a subset of at most $q = O(\ell^{d-1})$ representatives that pierces every ball in~$\cS$. The number of candidate sets is at most $n^{O(\ell^{d-1})}$.
    For each candidate piercing set $P_\cS$ of $\cS$, we partition the un-pierced balls into $L_\cS$ and $R_\cS$. Let $c$ be the center of a ball $B$ that is not pierced by $P_\cS$. If $c(j) \leq m$ then we place $B$ in $L_\cS$, otherwise in $R_\cS$. Note that the balls of $L_\cS$ are disjoint from the balls of $R_\cS$. We recurse on $L_\cS$ with the $j$-th side halved to $\ell_j/2$ and on $R_\cS$ likewise, obtaining piercing sets $P_L$ and $P_R$ respectively. Over all candidates $P_\cS$, we return the set $P_\cS \cup P_L \cup P_R$ with smallest cardinality.

    Let $T(\ell_1,\dots,\ell_d)$ be the running time on an instance with bounding box $\ell_1\times\cdots\times\ell_d$ and $\ell_j = \max_i \ell_i$.
    The enumeration costs $n^{O(\ell^{d-1})}$, and the two subproblems each have the $j$-th side halved, so
    \[
      T(\ell_1,\dots,\ell_j,\dots,\ell_d)
      \le n^{O(\ell_j^{d-1})} \cdot 2\,T(\ell_1,\dots,\ell_j/2,\dots,\ell_d).
    \]
    Since we always split the longest side, after at most $d$ consecutive recurrences every dimension has been halved at least once, so the longest side after $d$ levels is at most $\ell/2$. Solving this recurrence implies the total running time of~$n^{O(\ell^{d-1})}$.
\end{proof}


\bibliography{references}

\end{document}